\newcommand{\tr}{\mathrm{Tr}}
\newcommand{\mc}[1]{\mathcal{#1}}
\newcommand{\revise}[1]{\textcolor{black}{#1}}
\newtheorem{theorem}{Theorem}
\newtheorem{proposition}{Proposition}
\newtheorem{condition}{Condition}
\newtheorem{corollary}{Corollary}
\begin{document}

\title{Perturbative quantum simulation}

\date{\today}
\author{Jinzhao Sun}
\email{jinzhao.sun@physics.ox.ac.uk}
\affiliation{Center on Frontiers of Computing Studies, Peking University, Beijing 100871, China}
\affiliation{Clarendon Laboratory, University of Oxford, Parks Road, Oxford OX1 3PU, United Kingdom}
\affiliation{Quantum Advantage Research, Beijing 100080, China}

\author{Suguru Endo }
\affiliation{NTT Computer \& Data Science Laboratories, NTT corporation, Musashino, Tokyo 180-8585, Japan}

\author{Huiping Lin}
\affiliation{Center on Frontiers of Computing Studies, Peking University, Beijing 100871, China}
\affiliation{School of Computer Science, Peking University, Beijing 100871, China}

\author{Patrick Hayden}
\affiliation{Stanford Institute for Theoretical Physics, Stanford University, Stanford California 94305, USA}

\author{Vlatko Vedral}
\affiliation{Clarendon Laboratory, University of Oxford, Parks Road, Oxford OX1 3PU, United Kingdom}
\affiliation{Centre for Quantum Technologies, National University of Singapore, Singapore 117543,Singapore}

\author{Xiao Yuan}
\email{xiaoyuan@pku.edu.cn}
\affiliation{Center on Frontiers of Computing Studies, Peking University, Beijing 100871, China}
\affiliation{School of Computer Science, Peking University, Beijing 100871, China}

\affiliation{Stanford Institute for Theoretical Physics, Stanford University, Stanford California 94305, USA}

\begin{abstract}
Approximation based on perturbation theory is the foundation for most of the quantitative predictions of quantum mechanics, whether in quantum many-body physics, chemistry, quantum field theory or other domains. Quantum computing provides an alternative to the perturbation paradigm, yet state-of-the-art quantum processors with tens of noisy qubits are of limited practical utility. Here, we introduce perturbative quantum simulation, which combines the complementary strengths of the two approaches, enabling the solution of large practical quantum problems using limited noisy intermediate-scale quantum hardware. The use of a quantum processor eliminates the need to identify a solvable unperturbed Hamiltonian, while the introduction of perturbative coupling permits the quantum processor to simulate systems larger than the available number of physical qubits.
We present an explicit perturbative expansion that mimics the Dyson series expansion and involves only local unitary operations, and show its optimality over other expansions under certain conditions. We numerically benchmark the method for interacting bosons, fermions, and quantum spins in different topologies, and study different physical phenomena, such as information propagation, charge-spin separation, and magnetism, on systems of up to $48$ qubits only using an $8+1$ qubit quantum hardware. We experimentally demonstrate our scheme on the IBM quantum cloud, verifying its noise robustness and illustrating its potential for benchmarking large quantum processors with smaller ones.
\end{abstract}

\maketitle

A universal quantum computer can naturally simulate the real-time dynamics of any closed finite dimensional quantum system~\cite{RevModPhys.86.153}, a challenging task for classical computers.
While there has been tremendous progress in quantum computing hardware development, including the landmark quantum supremacy/advantage experiments with superconducting and optical systems~\cite{arute2019quantum,ZhongBosonSampling20, PhysRevLett.127.180502, PhysRevLett.127.180501}, state-of-the-art quantum hardware can still only control tens of noisy qubits~\cite{arute2019quantum,Gong_2021,PhysRevLett.127.180501,mi2021information}. That is insufficient for the implementation of fault-tolerant universal quantum computing, which {requires $10^3$ or more physical qubits per logical qubit to suppress the physical error}~\cite{campbell2017roads}. It is more pragmatic in the near term to focus on the noisy intermediate-scale quantum (NISQ) regime and utilise hybrid methods, which run a shallow circuit without implementing full error correction~\cite{preskill2018quantum}. Nevertheless, most quantum simulation algorithms, whether targeting NISQ or universal quantum computers, generally entails a number of physical or logical qubits no smaller than the problem size~\cite{mcardle2018quantum,cerezo2020variationalreview,bharti2021noisy}. 
Given that large-scale fault-tolerant quantum computers do not yet exist and there will be significant size constraints even on NISQ devices for the foreseeable future, a pressing question is how to solve large practical problems with limited quantum devices~\cite{altman2019quantum,bauer2020quantum}.

One possibility is to leverage the classical methods that have been developed to solve quantum many-body problems, wherein the most successful one is perturbation theory.  This method divides the Hamiltonian into a major but easily solved component and a weak but potentially complicated counterpart, in which case the full dynamics can be expressed as a series expansion~\cite{abrikosov2012methods,fradkin2013field,rigol2007numerical,rigol2006numerical,wilson1985diagrammatic,rohringer2018diagrammatic}. 
However, the ability to solve the major component and compute the higher-order expansions limits the use of perturbation theory in classical simulation of general many-body problems.

Here, we propose perturbative quantum simulation (PQS),
which directly simulates the major component on a quantum computer while perturbatively approximates the weak interaction component.
Since there is no assumption on the size or interaction of the major component, PQS potentially goes beyond the conventional perturbative approach, and it could simulate classically challenging systems, such as large systems with weak inter-subsystem interactions or intermediate systems with general interactions. 
Compared to universal quantum computing, PQS has limited power for arbitrary problems; yet, the perturbative treatment of the weak component greatly reduces quantum resources compared with conventional quantum simulation. Notably, PQS runs a shallower circuit with fewer qubits, making it more noise-robust and thus useful in benchmarking large quantum devices with smaller ones. 
Our experiments on the IBM quantum devices demonstrate a significant improvement of the simulation accuracy over direct simulation. 

For eigenstate problems, there are considerable hybrid schemes that combine different classical methods, such as density matrix embedding theory~\cite{DMET2012,DMET2013,Wouters16,rubin2016hybrid}, dynamical mean field theory~\cite{Troyer16,rungger2020dynamical,Chen_2021}, 
density functional theory embedding~\cite{Ivano21}, quantum defect embedding theory~\cite{Ma20, sheng21},
tensor networks~\cite{barratt2021parallel,yuan2020quantum}, entanglement forging~\cite{PRXQuantum.3.010309,huembeli2022entanglement}, virtual orbital approximation~\cite{PhysRevX.10.011004}, quantum Monte Carlo~\cite{Li_2019,Huggins_2022,2022arXiv220514903X,2022arXiv220509203M,2021arXiv211202190P,2022arXiv220610431Z}, etc. Our work instead focuses on the different but meaningful dynamics problem, which is based on perturbation theory and fundamentally from existing ones with different assumptions, limitations, or applications~\footnote{The assumptions of the embedding methods and comparisons to our method are discussed in \cite{NoteX}}.\\

\begin{figure*}[htb!]
\centering
\includegraphics[width =1.0\textwidth]{ 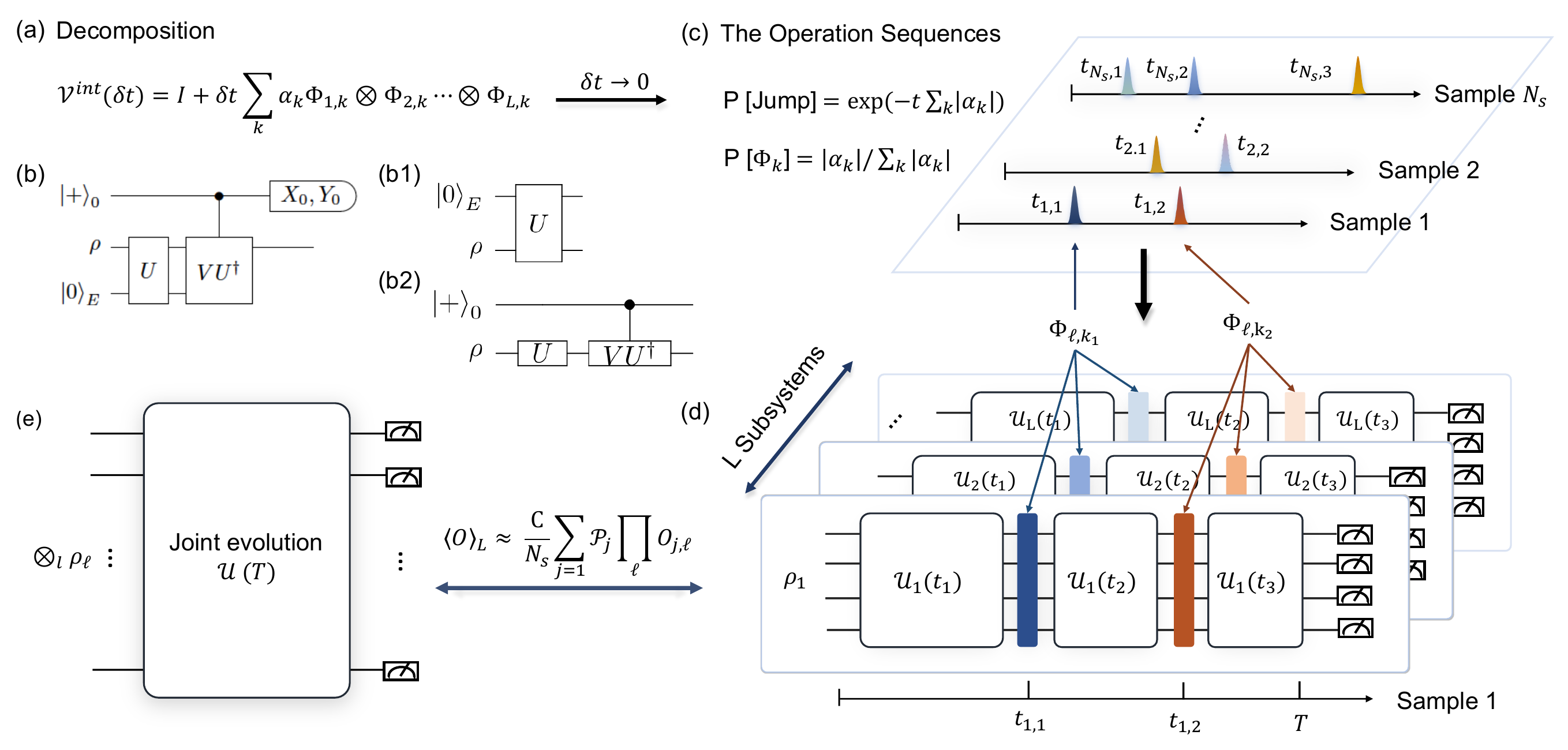}
 \caption{Schematic diagram of perturbative quantum simulation. {(a)}~The decomposition of interactions into local generalised quantum operations.  
 $\Phi_k = \Phi_{1,k} \otimes  \Phi_{2,k} \otimes \cdots \otimes  \Phi_{L,k} $ 
 with $\Phi_{l,k} = \tr_{E}[  U_{lE_l}(\rho_l \otimes \ket{0}\bra{ 0}_{E_l}) {V_{lE_l}}^\dag] $ acting on the $l$th subsystem.  
 {(b)}~The implementation of a generalised quantum operation $\Phi(\rho) = \tr_{E}[  \mathbf U(\rho \otimes \ket{ 0}\bra{ 0}_{E}) {\mathbf V}^\dag]$ using quantum circuits, which reduces to a quantum channel when $\mathbf U = \mathbf V$ in (b1) and unitary operations $\Phi(\rho) = U\rho V^{\dagger}$ when there is no ancilla in (b2).
 {(c)}~We can equivalently realise the discretised scheme with $\delta t \rightarrow 0$.  The operation sequences for $N_s$ samples are predetermined provided the decomposition.
 We either continuously apply the local time evolution or randomly apply a generalised quantum operation. 
 The time to apply the operations is determined by the probability $\textrm{P[Jump]}$ and the $k$th operation is applied with probability $\textrm{P}[\Phi_k]$.
The scheme in (c) does not assume time discretisation, and on average it is equivalent to the discretised one. The 
number of generalised quantum operations that applies during the evolution scales as $\mathcal{O}(\sum_k |\alpha_k|T)$.
 {(d)}  Schematic of the simulation process for 1 sample as an  example. For the $l$th subsystem, we evolve the state under local operations $\mathcal{U}_l$ and apply operations $\Phi_{l,k_1}$ and $\Phi_{l,k_2}$ at time $t_{1,1}$ and $t_{1,2}$, respectively.
We measure the output states with a product observable $O$ and obtain the outcomes $O_{j,l}$ for the $j$th sample.  We repeat the process for $N_s$ times. For any product input state, the expectation value of observable $O$ under the joint evolution $\mathcal{U}$ in (e) can be unbiasedly approximated by $\braket{O}_L \approx \frac{C}{N_s}\sum_{j=1}^{N_s} \mathcal{P}_j \prod_l O_{j,l}$ with the overhead $C$ and phase $\mathcal{P}_j$ determined by the decomposition.   
 More details can be found in \autoref{appendix:sub_discrete_time} and \ref{appendix:sub_MonteCarlo} in Supplementary Materials.
 }
 \label{fig:schematic}
\end{figure*}

{\emph{Background.---}}\revise{
We consider to simulate the dynamics of a quantum many-body system.
Suppose the whole system is divided into $L$ subsystems according to topological structures or degrees of freedom, like the clustered molecules \footnote{\revise{For instance, the lattice Hamiltonian with multiple degrees of freedom and molecular systems with virtual and active orbitals or in a clustered structure (see Refs.~\cite{bauer2020quantum,garlatti2017portraying,rigol2007numerical} and Sec.~V of Supplemental Materials for more discussions) provide a natural partitioning strategy for the subsystems, similarly as that in perturbation theory}}, the Hamiltonian is  
$H=H^{\textrm{loc}} + V^{\textrm{int}}$, where
$H^{\rm loc}=\sum_l H_l$ is the local strong interaction with each $H_l$ acting on the $l$th subsystem, and $V^{\textrm{int}} =  \sum_j \lambda_jV_j^{\rm int}$
is the weak perturbative interaction between the subsystems. Here $V_j^{\rm int}$ are different types of interactions with real amplitudes $\lambda_j$. 
}


{To simulate the dynamics of $U(t) = e^{-iHt}$, a representative perturbation treatment is via Dyson series expansion as
\begin{equation}
\label{eq:DysonExpansion}
    U(t)=1-{i} \int_{t_{0}}^{t} d t_{1} e^{{i} H^{\textrm{loc}}\left(t_{1}-t_{0}\right)} V^{\textrm{int}} e^{-{i} H^{\textrm{loc}}\left(t_{1}-t_{0}\right)} + \dots
\end{equation}
\revise{Then $U(t)$ becomes a linear combination of  trajectories consisting of different sequential unitary operators.
}
When the local Hamiltonians $\{H_l\}$ are solvable, one can further represent the expansion graphically, such as via Feynman diagrams,  and compute expectation values of the time evolved state with different graphs corresponding to different expansion terms.
}
A major limitation of perturbation theory is the assumption of the simple hence solvable local Hamiltonians, which fails when $\{H_l\}$ become strongly correlated, as that happens in realistic systems.
\revise{Indeed, even if no interaction under certain partitioning strategy with 
$V^{\rm int}=0$, no classical methods exist that can efficiently simulate the dynamics of general Hamiltonian $H^{\rm loc}=\sum_l H_l$, otherwise the computational class of bounded-error quantum polynomial time collapses. 
In the following, 
we introduce the framework of PQS, based on which we propose an explicit algorithm mimicking Dyson series expansion and show its optimality over more general theories. 
}\\

{\emph{Framework.---}}Here, we focus on general ways that realise the joint time evolution channel $\mc U(\rho, T) = U(T)\rho U^\dag(T)$ by applying only local operations on each subsystem separately. 
To do so, we first introduce the concept of local \emph{generalised quantum operations}
\begin{equation}\label{Eq:goperation}
    \Phi(\rho) = \tr_{E}\left[ \mathbf U\left(\rho \otimes \ket{\mathbf 0}\bra{\mathbf 0}_{E}\right) {\mathbf V}^\dag\right].
\end{equation}
Here we denote ancillary states $\ket{\mathbf 0}\bra{\mathbf 0}_E = \ket{0}\bra{0}_{E_1}\otimes \cdots \otimes \ket{0}\bra{0}_{E_L}$ and unitary operators $\mathbf U = U_{1E_1}\otimes  \cdots \otimes U_{LE_L}$ and $\mathbf V= V_{1E_1}\otimes  \cdots \otimes  V_{LE_L}$, where $U_{jE_j}$ and $V_{jE_j}$ represents the operators acting only on the subsystem $j$ and the $j$th ancilla.
While the operation $\Phi(\rho)$ is nonphysical in general, it can be realised effectively using local operations and postprocessing (see \cite{NoteX}). 
Note that $\Phi(\rho)$ reduces to local quantum channels when $\mathbf U = \mathbf V$.
The key idea of PQS theories is to decompose the joint evolution into a set of generalised quantum operations, which separately act on each subsystem.
By  choosing a {spanning} set of $\{\Phi_k\}$ properly,  an infinitesimal evolution governed by the interaction $\mc V(\delta t)[\rho] = V^{\textrm{int}}(\delta t)\rho V^{\textrm{int}}(\delta t)^\dag$ can be decomposed as
\begin{equation}\label{Eq:Vdecommain}
    \mc V(\delta t)[\rho] = \mathcal{I}(\rho) + \delta t\sum_k \alpha_k \Phi_k(\rho)
\end{equation}
where $V^{\textrm{int}}(\delta t) = e^{-iV^{\textrm{int}}\delta t}$ represents the interacting unitary operations within  duration $\delta t$, and $\mathcal{I}$ is the identity operation.

{
Next, we consider a Trotterised joint evolution as $\mathcal{U}(T)=[\mc V(\delta t)\circ\bigotimes_l\mc U_l(\delta t)]^{T/\delta t}$. Using the decomposition in  \autoref{Eq:Vdecommain}, we can then expand  $\mathcal{U}(T)$ as a series of different trajectories. Here, each trajectory is defined by which operations act at each time, including the local time evolution $\mc U_l(\delta t)$ of each subsystem and one of the generalised quantum operation $\Phi_k(\rho)$ that on average emulates the nonlocal effect of   $\mc V^{\rm int}$. 
The whole evolution $\mc U(T)$ is now decomposed as a linear combination of local operations that act separately on each subsystem, which can be effectively realised in parallel. 
The expectation value of an arbitrary state can be obtained from local measurement results (see Sec. IB  in \cite{NoteX} for the derivation and implementation).

The above discretised scheme assumes a small discrete timestep and requires to apply the operations at each time step $\delta t$, which is unnecessary since the effect of the weak interacting operation $\mc V(\delta t)$ in a short time is close to the identity.
We address this problem by stochastically applying the operation $\Phi_{{k}}$ depending on the amplitude of its associated coefficient $|\alpha_k|$.
Taking a short time limit $\delta t\rightarrow 0$, we generate each trajectory according to the decomposition in   \autoref{Eq:Vdecommain} and stochastically realise the joint time evolution with operations separately acting on each subsystem.
The average of different trajectories reproduces the joint dynamics under $\mc U(T)$. 
Note that the number of generalised quantum operations required to  realise the joint  evolution scales proportionally to the interaction strength as $\mathcal{O}(\sum_k |\alpha_k|T)$, and  on average the stochastic implementation scheme is proven to be equivalent to the discretised scheme (See Sec. IC in \cite{NoteX}).
}

 We summarise the key steps of PQS in Algorithm~\ref{alg1_main} and illustrate the procedure in Fig.~\ref{fig:schematic}.

 \begin{algorithm}[H]
 \begin{algorithmic}[1]
 \State{Given a set of generalised quantum operations, find the decomposition  \autoref{Eq:Vdecommain}.}
 \State {Generate a sequence of trajectories where  each subsystem evolves and experiences random local generalised quantum operations.}
 \State{Sample from the trajectories.
     The average behaviour reproduces the joint evolution.}
 \end{algorithmic}
 \caption{Perturbative quantum simulation} 
 \label{alg1_main}
 \end{algorithm}

By applying our algorithm, the whole simulation process is now decomposed into the average of different ones, each of which only involves operations on the subsystems. Thus, we can effectively simulate $nL$ qubits with operations on subsystems with only $\mc O(n)$ qubits, and this also offers noise robustness of our method (see Sec.~VII in \cite{NoteX}).
Note that local dynamics $\mc U_l(t)$ could be implemented with any Hamiltonian simulation methods, such as product formulae~\cite{childs2017toward2,childs2019theory} or quantum signal processing~\cite{low2019hamiltonian,PhysRevLett.118.010501}, and our algorithm is compatible with both near-term and fault-tolerant quantum computers. \\

\begin{figure*}[ht]
\includegraphics[width =.95\linewidth]
{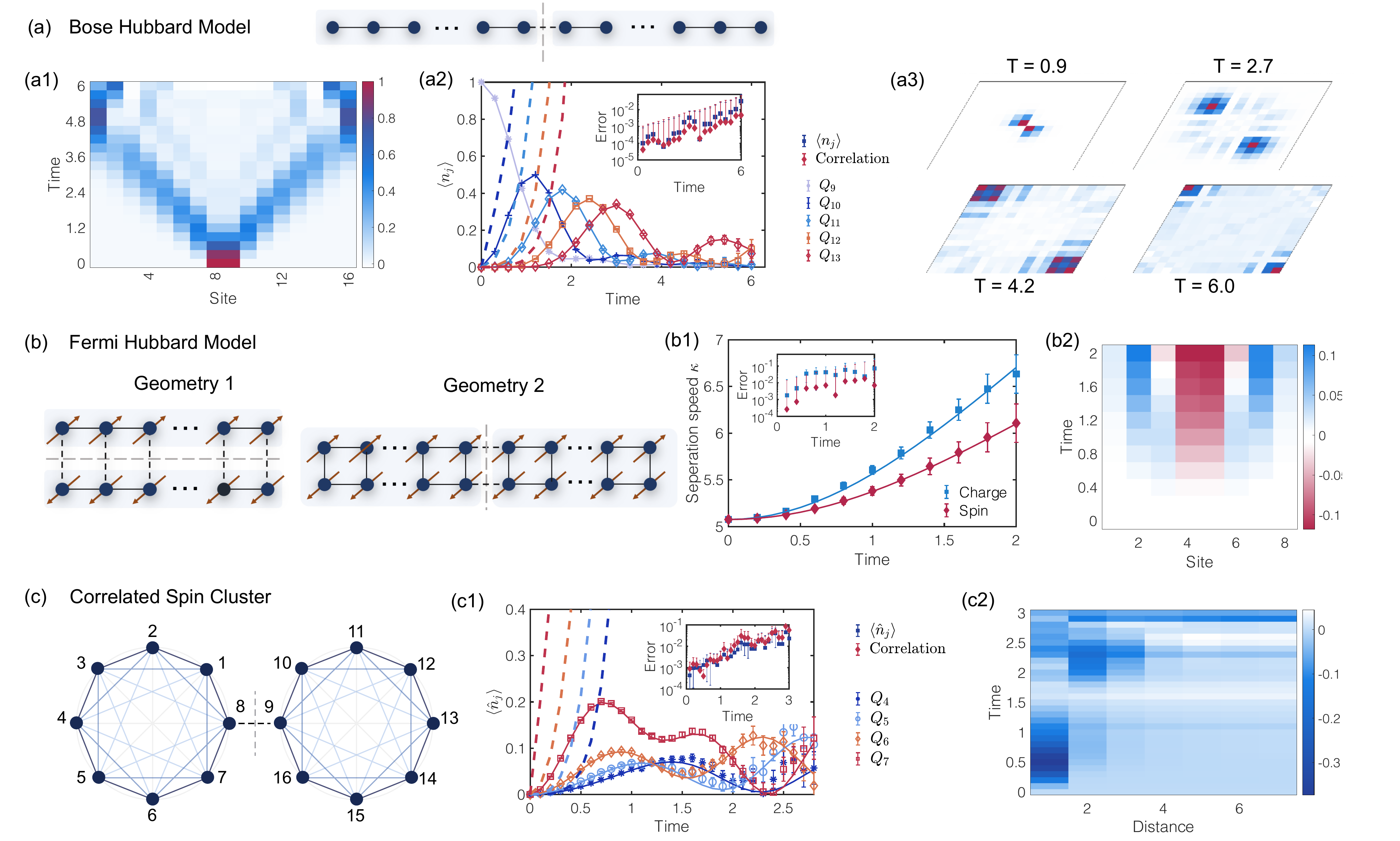}
 \caption{Dynamics simulation of interacting (a)~bosons, (b)~fermions, and (c) quantum spin systems with different topologies. 
Gray dashed lines in site-edge diagrams manifest subsystem partitioning for PQS. We group $8$ qubits as a subsystem to simulate $N=16$-qubit quantum systems using most $5 \times 10^5$ samples.  Solid lines represent exact results from direct simulation.
 {(a)} Quantum walk (QW) of spinless bosons on a $1$D array in the large onsite repulsion limit (see Sec.~\uppercase\expandafter{\romannumeral6} in \cite{NoteX} for the Hamiltonian). Two identical bosons are initially excited at the centre. 
 (a1) Density spreading $\langle \hat{n}_j \rangle = \langle \hat{b}_j^{\dagger} \hat{b}_j\rangle $ with bosonic operators $\hat{b}$ under time evolution. (a2) The density distribution at site $9$ - ${13}$ (labelling from left to right). The nearest-neighbour Lieb-Robinson bounds (dashed) capture the density spreading \cite{bravyi2006lieb,yan2019strongly,Gong_2021}. The inset shows the errors for the average density and the average density-density correlator $\hat{\rho}_{ij} = \langle \hat{b}_i^{\dagger}\hat{b}_j^{\dagger} \hat{b}_i\hat{b}_j \rangle$ with respect to the exact results. 
 (a3) Boson spatial antibunching in QW. The normalised correlator $\hat{\rho}_{ij}/\hat{\rho}_{ij}^{\max}$ at different $t$~\cite{lahini2012quantum,Cazalilla.RevModPhys.83.1405}.
 {(b)} Separation of charge and spin density (CSD) in a $1$D Fermi-Hubbard model $H= -J \sum_{j, \sigma }  \left( \hat{c}_{j, \sigma}^{\dagger}  \hat{c}_{j+1, \sigma}+\mathrm{h.c.}  \right)  +U \sum_{j} \hat n_{j, \uparrow} \hat n_{j, \downarrow}+\sum_{j,\sigma }   h_{j, \sigma} \hat n_{j, \sigma}$ ($\hat c_{j,\sigma}$, $\hat c_{j,\sigma}^{\dagger}$: fermionic  operators with spin $\sigma$, $U=J=0.5$)~\cite{Cazalilla.RevModPhys.83.1405}. Left: two partitioning strategies for small and large on-site potential $U$. The initial state is the ground state of a non-interacting Hamiltonian with quarter filling ($N_{\uparrow} = N_{\downarrow}$ $= 2$), in which the CSD are generated in the middle of the chain at $t = 0$~\cite{arute2020observation,KivLinearDepth,PhysRevApplied.9.044036}.
 (b1) The separation of charge (blue square) and spin (red diamond) densities.
We characterise the separation speed from the middle as $\kappa_{{\pm}} = \sum_{j=1}^{N} \left| j - (N+1)/2\right| (\langle \hat{n}_{j, \uparrow} \rangle\pm \langle \hat{n}_{j, \downarrow} \rangle)$ for charge ($+$) and spin ($-$) degrees of freedom with $\langle \hat{n}_j \rangle = \langle \hat{c}_j^{\dagger} \hat{c}_j\rangle$ ($N=8$). The inset shows the errors under evolution.
(b2) The difference of CSD under evolution. The relative separation is initially set as $0$.
{(c)} Information propagation of correlated Ising spin clusters with power law decay interactions $H_l^{\rm loc} = \sum_{ij} J_{ij} \hat\sigma_{l,i}^x \hat\sigma_{l,j}^x +h \sum_j \hat \sigma_{l,j}^z$  ($J_{ij}=   |i-j|^{-1}$) in the subsystems and interaction $V^{\rm int} =  \hat \sigma_{1,N}^x \hat\sigma_{2,1}^x$  on the boundary. The initial state is prepared as $\ket{\psi_0} = \hat \sigma_8^x \ket{{0}}^{\otimes N}$.  (c1) The signal of quasiparticle excitations at different sites, where the propagation is faster than the nearest-neighbour Lieb-Robinson 
velocity (dashed)~\cite{bravyi2006lieb,jurcevic2014quasiparticle,monroe2021programmable}.
(c2) The dynamics of the correlation function $C_d = \langle \hat{\sigma}_8^z\hat{\sigma}_{8+d}^z \rangle-\langle \hat{\sigma}_8^z \rangle\langle \hat{\sigma}_{8+d}^z \rangle$. The inset in (c1) shows the errors for the averaged quasiparticle excitations density and correlation functions. 
 }
\label{fig:main}
\end{figure*}

{\emph{Explicit protocol.---}}While the decomposition of  \autoref{Eq:Vdecommain} holds in general for an (over)complete set of $\{\Phi_k\}$, it may involve difficult-to-implement operations in  experiments. Here, we address this problem by developing an explicit decomposition with only local unitary operations. Specifically, we consider a natural expansion of $\mc V(\delta t)$ as
\begin{equation}
    \mc V(\delta t)[\rho] = \mathcal{I}(\rho) - i\delta t \sum_j \lambda_j(V_j^{\textrm{int}} \rho - \rho V_j^{\textrm{int}}),
    \label{eq:explicit}
\end{equation}
where all $V_j^{\textrm{int}}$ are tensor products of unitaries, and hence each term $\mathcal{I}(\rho)$, $V_j^{\textrm{int}} \rho$, or $\rho V_j^{\textrm{int}}$ corresponds to a specific generalised quantum operation. 
We emphasise that the expansion only involves unitary operations, and avoids the computational cost in diagrammatic perturbation theory, which greatly simplifies the implementation.
We further prove in Theorem~3 in \cite{NoteX} that the explicit decomposition corresponds to the infinite-order Dyson series expansion~\footnote{\revise{We discuss the implementation with a truncated expansion on a quantum computer in Sec.~\uppercase\expandafter{\romannumeral3} in \cite{NoteX}}}.


\begin{figure}[t]
\includegraphics[width =1.0\linewidth]
{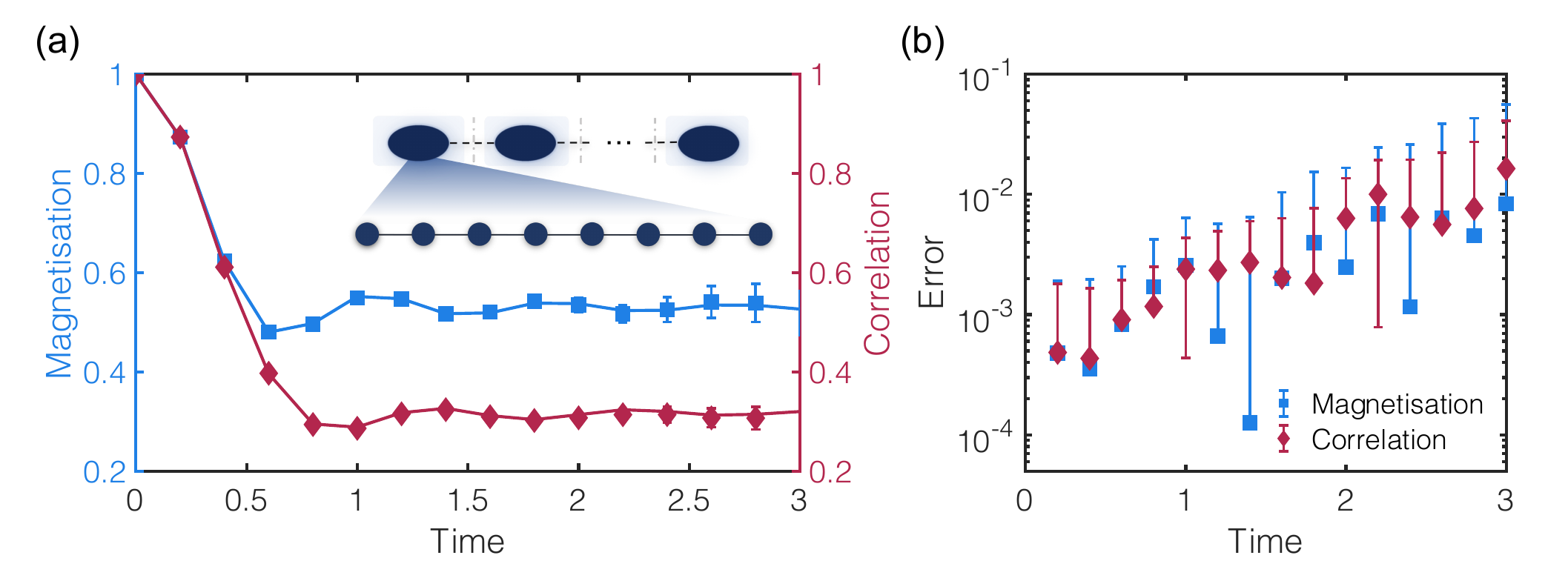}
\caption{{Dynamics simulation of $1$D $48$-site spin chains.} The subsystem and interaction Hamiltonians are $H_l^{\rm loc} = \sum_{i}  \hat\sigma_{l,i}^x \hat\sigma_{l,i+1}^x + \sum_i \hat\sigma_{l,i}^z$ and $ V_l^{\rm int} =  f_l \hat \sigma_{l,N}^x \hat\sigma_{l+1,1}^x$, respectively, and 
the interactions on the boundary are randomly generated from $[0,J/2]$. 
(a) The average magnetisation (in blue) $\frac{1}{N} \sum_{i} \braket{\hat{\sigma}_i^z}$ and nearest-neighbour correlation function (in red) $\frac{1}{N-1} \sum_{i} \braket{\hat{\sigma}_i^z \hat{\sigma}_{i+1}^z}$, compared with the TEBD method as a benchmark. The inset illustrates the geometry of the spin systems and the partitioning strategy where we group 8 adjacent qubits as subsystems. { (b)} The errors for the average magnetisation and correlation using $5 \times 10^5$ samples.
}
 \label{fig:main_corr}
\end{figure}

\begin{figure}[t]
\centering
\includegraphics[width =1\linewidth]
{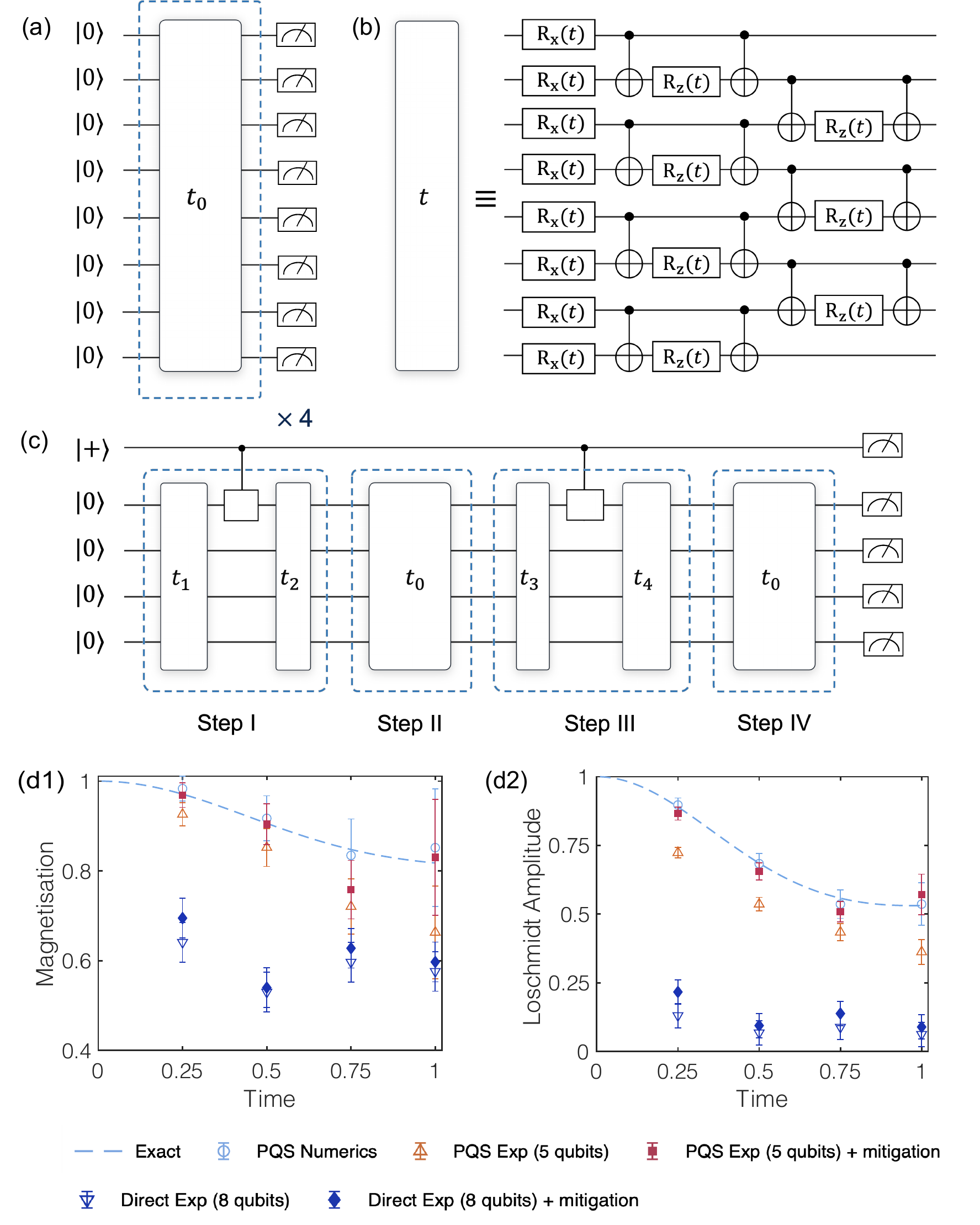}
 \caption{
Implementation and experimental results of the dynamical phase transition of 8 interacting spins.
The initial state $\ket{\psi_0}=\ket{0}^{\otimes 8}$ is evolved under $8$-site Ising Hamiltonian $H = \sum_j \hat{\sigma}_j^z\hat{\sigma}_{j+1}^z + 0.5 \sum_j \hat{\sigma}_j^x$ with $T=1$. {(a)}~Quantum circuit for $8$-qubit simulation based on first-order Trotterisation with four steps $t_0=1/4$. {(b)} The circuit block for a single-step evolution for time $t$ with  parallelisation. {(c)}~An example for the implementation of PQS to simulate $8$-qubit system with operations on $4+1$-qubit. The circuit blocks are similar as that in (b) with 4 qubits.
When a generalised operation is inserted into a Trotter step, we divide the step into two evolution and insert the operation between that. 
{(d)}~The magnetisation and Loschmidt amplitude in ferromagnetic and paramagnetic phases.  Here, the Loschmidt amplitude $\mathcal{G}(t) = \left| \braket{\psi_0 |e^{-iHt}|\psi_0}\right|^2$ characterises the dynamical echo back to the initial state~\cite{PhysRevLett.119.080501}, as an indicator of dynamical phase transition when it decreases to $0$.
We compare the results of exact simulation (dashed line), PQS (numerics, circle), PQS using IBMQ (5 qubits in (c), upper triangle) and the direct simulation using IBMQ (8 qubits in (a), lower triangle). We also show the results using  measurement error mitigation for PQS (solid square) and direct simulation (solid diamond). 
 }
 \label{fig:main_IBM}
\end{figure}

Implementing the interaction $\mc V$ perturbatively using generalised quantum operations
introduces a sampling overhead $C$. Specifically, when measuring the output state of the perturbatively simulated state, 
the measurement accuracy is $\varepsilon = \mc O(C\sigma/\sqrt{N_s})$ given $N_s$ samples in contrast to $\varepsilon = \mc O(1/\sqrt{N_s})$ in direct simulation. Here, $\sigma$ is the standard deviation introduced from the expansion, normally less than~$1$. 
Assuming the general decomposition of  \autoref{Eq:Vdecommain}, the overhead is $C=e^{\sum_k|\alpha_k|T}$.
Different decomposition of  \autoref{Eq:Vdecommain} would lead to different coefficients and hence different overhead.
We further prove that the explicit decomposition in  \autoref{eq:explicit} has the minimal simulation cost, provided that the Pauli operators of each $V_i$ satisfy a certain mild condition (see Theorem 2 in \cite{NoteX} for the proof of optimality and illustrative examples here in~\footnote{For example, the condition holds when $V^{\rm int} = \lambda_1 X_aY_bZ_c + \lambda_2 Y_aZ_bX_c + \lambda_3 Z_aX_bY_c$ with Pauli operators $X$, $Y$, and $Z$ acting on three subsystems $a$, $b$, and $c$.}).
Since the overhead increases exponentially with $\lambda_T=\sum_i|\lambda_i|T$, PQS cannot simulate arbitrary systems with strong $V^{\rm int}$ or long time $T$.
To have a reasonable overhead $C$, the algorithm is efficient when $\lambda_T = \mathcal{O} (1)$, aligning with the spirit of perturbation theory.
Yet, the overhead is  independent of the initial state, size, and interaction strengths of the subsystems. With a constant $\lambda_T$, PQS can be applied to study intricate quantum many-body systems with strong subsystem interactions. 
As shown shortly, PQS can be used to  probe interesting physical phenomena directly, benchmark NISQ processors, simulate large quantum circuits, etc.\\



{\emph{Numerical and experimental results.---}}We apply PQS to study many-body physical phenomena in different systems
with different topological structures. 
As shown in Fig.~\ref{fig:main}, we investigate (a) the quantum walk of bosons on a one-dimensional lattice, (b) the separation of charge and spin excitations of fermions with two-dimensional topology, 
and (c) the correlation propagation of quantum spin systems of two clusters. 
We design appropriate partitioning strategies, in which the whole system consists of two subsystems and each subsystem consists of $8$ qubits.
In each example, we present the corresponding task-specific partitioning strategy of the quantum systems.
Using the explicit decomposition strategy, we exploit $8+1$ qubits to simulate each subsystem and classically emulate the quantum system with numerical results shown in Fig.~\ref{fig:main}. All unique features are detected just as we directly simulate the whole system. Indeed, the numerical results align with those of the exact simulation, thus verifying the reliability of the theory. 
We refer to Sec.~\uppercase\expandafter{\romannumeral6} in \cite{NoteX} for other physical systems, including the long-range spin chains, and simulation details.

These numerical tests are restricted to 16 qubits since the exact simulation of larger quantum systems becomes exponentially costly.
To benchmark PQS for larger systems, we investigated a $1$D $48$-site spin chain with  nearest-neighbour correlations, using the time-evolving block decimation (TEBD) method with matrix product states as the reference.
As shown in Fig.~\ref{fig:main_corr}, our simulation results coincide with those of TEBD, which again verifies the reliability of PQS for simulating multiple subsystems. Intriguingly, PQS only needs to manipulate $8+1$ qubits to recover the joint dynamics of the $48$-qubit system. 

We only consider the time evolution of small and classically simulable quantum systems for benchmarking our method.  
However, for all the examples considered here, since the simulation cost is independent of the interaction and initial states of the subsystems, PQS also works when tackling a much larger subsystem with more complicated subsystem interactions. 
In practice, when we increase the subsystem size to around $n=50$ qubits and consider general strong interactions, 
PQS could outstrip the capabilities of classical simulation and reliably probe properties of quantum systems with a small-size quantum processor.

In contrast to direct simulation,  PQS could also be more robust to noise attributed to the reduction of quantum sources~\footnote{\revise{A smaller quantum device is usually much more accurate than a larger quantum processor due to crosstalks or other types of errors when controlling large quantum systems. Our method could thus serve as a benchmark of the computing result for large-scale problems.}}.
To verify such an advantage, we  experimentally study the dynamical phase transition of an $8$-qubit Ising model with nearest-neighbour correlations on IBMQ hardware. By dividing the system into two subsystems, we use a $4+1$-qubit processor to implement our PQS algorithm and compare the results with conventional direct simulation with $8$ qubits, as shown in Fig.~\ref{fig:main_IBM}. For a total evolution time $T = 1$,  a first-order Trotterisation is used, which has four steps and a negligible Trotter error.
Fig.~\ref{fig:main_IBM}(d1,d2) show the magnetisation and Loschmidt amplitude in ferromagnetic phases.
The experimental results clearly demonstrate PQS achieves higher simulation accuracy than direct simulation. 
It is also found that with  measurement error mitigation, PQS approaches the exact result~\cite{bravyi2021mitigating}, and outperforms direct simulation consistently. 
More experimental results and detailed discussions on the \revise{implementation and noise robustness of PQS} can be found in Sec.~\uppercase\expandafter{\romannumeral7} in \cite{NoteX}.

{\emph{Conclusion and discussion.---}}Our theoretical, numerical, and experimental results indicate that quantum simulation and perturbation theory are not only compatible but complementary.
The PQS algorithm leverages quantum computers to simulate the major component of the Hamiltonian, alleviating the constraint of a classical perturbation method, and uses classical perturbation to approximate the interaction, circumventing limited quantum resources in near-term or early-stage fault-tolerant quantum computers.
\revise{
Since PQS is a hybrid method that combines quantum computing and classical perturbation theory, it inherits their advantages as well as their limitations, such as high-dimensional systems with strong correlations $V^{\rm int}$ and long time $T$.
Yet, PQS is applicable to
intermediate-size systems, such as a square lattice with tens to hundreds of qubits, and it is particularly useful for large systems with weak inter-subsystem interactions, such as (quasi) one-dimensional  systems and clustered subsystems.}
Our numerical and experimental results demonstrate wide applicability of PQS methods for studying new physical phenomena, and its potential application in benchmarking large quantum processors with small ones, an emerging demand in the NISQ era. 
Meanwhile, we could integrate other classical perturbation treatments of the interaction with quantum computing, such as the one that expands according to the interaction strength.
\revise{
We might also consider other hybrid approaches, such as tensor networks, to effectively solve complex many-body systems while alleviating the simulation cost.
One may also apply the idea of PQS to more efficiently emulate large quantum circuits using smaller ones~\cite{peng2020simulating,barratt2020parallel,mitarai2021constructing,fujii2020deep,yuan2020quantum,Mitarai2021overheadsimulating}.
}\\




\begin{acknowledgments}
 J.S. thanks Xiao-Ming Zhang and Chenbing Wang for the valuable discussions.
X.Y. was supported by Simons Foundation and the National Natural Science Foundation of China Grant No.~12175003.
P.H. was supported by AFOSR FA9550-19-1-0369.
S.E. is supported by Moonshot R\&D, JST, Grant No.\,JPMJMS2061; MEXT Q-LEAP Grant No.\,JPMXS0120319794, and PRESTO, JST, Grant No.\, JPMJPR2114.
We acknowledge use of the IBM quantum cloud experience for this work. The views expressed are those of the authors and do not reflect the official policy or position of IBM or the IBMQ team.
The numerics are supported by the University of Oxford Advanced Research Computing (ARC) facility and the high-performance Computing Platform of Peking University.
\end{acknowledgments}



\widetext

\section*{Supplementary Materials: Perturbative quantum simulation}

\tableofcontents

\section{Perturbative quantum simulation ---  general framework}
\label{appendix:framework}

\subsection{Generalised quantum operation}
\label{appendix:operation}
In our main text, we introduced the generalised quantum operation as
\begin{equation}
    \Phi(\rho) = \tr_E[U(\rho\otimes \ket{0}\bra{0}_E)V^\dag],
\end{equation}
where $U$ and $V$ could be different unitary operators that applies jointly on $\rho$ and $\ket{0}_E$. We show several properties of the generalised quantum operation $\Phi(\rho)$.
\begin{itemize}
    \item The generalised quantum operation $\Phi(\rho)$ has bounded Schatten norm. Specifically, the Schatten norm of a matrix is $\|M\|_p = \tr[|M|^p]^{1/p}$ for $p\in [1,\infty)$ and we have
    \begin{equation}
        \|\Phi(\rho)\|_p \le \|\Phi(\rho)\|_1 \le \|U(\rho\otimes \ket{0}\bra{0}_E)V^\dag\|_1 =\|\rho\otimes \ket{0}\bra{0}_E\|_1 = \|\rho\|_1.
    \end{equation}
    Here the two inequality follows from the non-increasing of the Schatten norm over $p$ and the non-increasing of the trace norm under partial trace. 
    
    Nevertheless, since $\Phi(\rho)$ could be complex, it might not be a quantum channel in general.
    
    \item The real and imaginary part of $\Phi(\rho)$ could be expressed as a linear combination of completely positive trace non-increasing quantum channels. Specifically, they could be
    obtained with the following circuit. 
\begin{align*}
\Qcircuit @C=0.8em @R=1.2em {
\lstick{\ket{+}_0}&\qw&\ctrl{1}& \measureD{X_0, Y_0}\\
\lstick{\rho}&\multigate{1}{U}&\multigate{1}{VU^\dag }&\qw\\
\lstick{\ket{0}_E}&\ghost{U}&\ghost{VU^\dag }&\\
}
\end{align*}
The output state before the measurement is
\begin{equation}
\begin{aligned}
    \rho_{\rm out}=&\ket{0}\bra{0}_0 U\rho\otimes \ket{0}\bra{0}_E U^\dag + \ket{0}\bra{1}_0 U\rho\otimes \ket{0}\bra{0}_E V^\dag\\
    &+ \ket{1}\bra{0}_0 V\rho\otimes \ket{0}\bra{0}_E U^\dag + \ket{1}\bra{1}_0 V\rho\otimes \ket{0}\bra{0}_E V^\dag.
\end{aligned}
\end{equation}
The real and imaginary part of $\Phi(\rho)$ can be obtained from the $X$ and $Y$ measurements
\begin{equation}
    \begin{aligned}
    \mathrm{Re}[\Phi(\rho)] &= \tr_0[\rho_{\rm out}X_0],\\
    \mathrm{Im}[\Phi(\rho)] &= \tr_0[\rho_{\rm out}Y_0].\\
    \end{aligned}
\end{equation}

\item The measurement of the output state is realised straightforwardly. For example, the real and imaginary parts of $\tr[\Phi(\rho)O]$ could be realised with the following circuit.
\begin{align*}
\Qcircuit @C=0.8em @R=1.2em {
\lstick{\ket{+}_0}&\qw&\ctrl{1}& \measureD{X_0, Y_0}\\
\lstick{\rho}&\multigate{1}{U}&\multigate{1}{VU^\dag }&\measureD{O}\\
\lstick{\ket{0}_E}&\ghost{U}&\ghost{VU^\dag }&\\
}
\end{align*}

\item When $U=V$, it reduces to a quantum channel $\mc N$
\begin{equation}
    \mc N(\rho) = \tr_E[U(\rho\otimes \ket{0}\bra{0}_E)U^\dag].
\end{equation}
with the circuit
\begin{align*}
\Qcircuit @C=0.8em @R=1.2em {
\lstick{\ket{0}_E}&\multigate{1}{U}&\qw&\\
\lstick{\rho}&\ghost{U}&\qw&\\
}
\end{align*}
When there is no ancillary $E$, it becomes 
$
    \Phi(\rho) = U\rho V^\dag,
$
with the circuit
\begin{align*}
\Qcircuit @C=0.8em @R=1.2em {
\lstick{\ket{+}_0}&\qw&\ctrl{1}&\qw \\
\lstick{\rho}&\gate{U}&\gate{VU^\dag }&\qw\\
}
\end{align*}
which plays a key role in our explicit scheme. 

\item Given two generalised quantum operations
\begin{equation}
    \begin{aligned}
    \Phi_1(\rho) &= \tr_{E_1}[U_1(\rho\otimes \ket{0}\bra{0}_{E_1})V_1^\dag],\\
    \Phi_2(\rho) &= \tr_{E_2}[U_2(\rho\otimes \ket{0}\bra{0}_{E_2})V_2^\dag],\\
    \end{aligned}
\end{equation}
the concatenated operation 
\begin{equation}
   \Phi_2 \circ \Phi_1(\rho) = \tr_{E_1E_2}[U_2U_1(\rho\otimes \ket{0}\bra{0}_{E_1,E_2})V_1^\dag V_2^\dag],
\end{equation}
is also a generalised quantum operation. The real and imaginary part of $\Phi_2 \circ \Phi_1(\rho)$ could be obtained from measuring the ancillary qubit on the $X_0$ and $Y_0$ basis with the following circuit. 
\begin{align*}
\Qcircuit @C=0.8em @R=1.2em {
\lstick{\ket{+}_0}&\qw&\ctrl{1}& \measureD{X_0, Y_0}\\
\lstick{\rho}&\multigate{1}{U_2U_1}&\multigate{1}{V_2V_1U_1^\dag U_2^\dag }&\qw\\
\lstick{\ket{0}_{E_1,E_2}}&\ghost{U_2U_1}&\ghost{V_2V_1U_1^\dag U_2^\dag}&\\
}
\end{align*}

It can be equivalently realised as follows using  two ancillary qubits. 
\begin{align*}
\Qcircuit @C=0.8em @R=1.2em {
\lstick{\ket{+}_0}&\qw&\ctrl{2}&\qw&\qw& \measureD{X_0, Y_0}\\
\lstick{\ket{+}_{0'}}&\qw&\qw&\qw&\ctrl{1}& \measureD{X_{0'}, Y_{0'}}\\
\lstick{\rho}&\multigate{1}{U_1}&\multigate{1}{V_1U_1^\dag  }&\multigate{2}{U_2}&\multigate{2}{V_2U_2^\dag }&\qw\\
\lstick{\ket{0}_{E_1}}&\ghost{U_1}&\ghost{V_1 U_1^\dag }\\
\lstick{\ket{0}_{E_2}}&\qw&\qw&\ghost{U_2}&\ghost{V_2 U_2^\dag}&\\
}
\end{align*}
\end{itemize}
In particular, the circuit factorise into two independent ones when (1) $\rho$ is a tensor product of two states $\rho=\rho_1\otimes \rho_2$ (2) $U_1$ and $V_1$ applies on $\rho_1$ and $\ket{0}_{E_1}$; $U_2$ and $V_2$ applies on $\rho_2$ and $\ket{0}_{E_2}$.

\subsection{Algorithm description --- discrete time}
\label{appendix:sub_discrete_time}

The goal of the algorithm is to simulate the dynamics of a many-body Hamiltonian, for example 
\begin{equation}
H=H^{\textrm{loc}} + V^{\textrm{int}},
\end{equation}
where $H^{\textrm{loc}}$ corresponds to the strong but local interaction and $V^{\textrm{int}}$ corresponds to weak perturbations. In practice, we can always divide the whole system into $L$ subsystems, and thus consider 
\begin{equation}
H^{\rm loc}=\sum_l H_l
\end{equation}
as the local Hamiltonians with each $H_l$ acting on the $l$th subsystem, and 
\begin{equation}
V^{\textrm{int}} =  \sum_j \lambda_jV_j^{\rm int}
\end{equation}
as the weak perturbation interaction between the subsystems with different interactions $V_j^{\rm int}$ and coefficients $\lambda_i$. The local Hamiltonians and the perturbation  interactions depend on the partitioning strategy of the subsystems, and we refer to \autoref{appendix:numerical} for the partitioning for different physical systems.
We assume a time independent Hamiltonian in the following discussion; however, our results apply to general time dependent Hamiltonians. 
Since we are considering Hamiltonian simulation with a quantum computer, we assume that every $H_l$ 
could be decomposed as a linear combination of tensor product of Pauli operators and each $V_j^{\rm int}$ is a tensor product of Pauli operators. 

We describe the algorithm assuming discretised time. We show shortly how to take the limit of infinitesimal timesteps.  We aim to simulate the time evolution of $H$ with time $t$, 
\begin{equation}
    \mc U(T)[\rho] = U(T)\rho U(T)^\dag,
\end{equation}
where $U(t) = e^{-iHt}$. Considering discrete time $\delta t$, we have
\begin{equation}
\begin{aligned}
\mc U(T) =& \prod_{i=1}^{T/\delta t}\mc U(\delta t) = \prod_{i=1}^{T/\delta t}\bigg[\mc V^{\textrm{int}}(\delta t)  \circ\bigotimes_{l=1} \mc U_l(\delta t)\bigg]+\mc O(T\delta t),
\end{aligned}
\end{equation}
where $\mc U_l(t) = U_l(t)\rho U_l(t)^\dag$ with $U_l(t) = e^{-iH_lt}$ and $\mc V^{\textrm{int}}(t) = V^{\textrm{int}}(t)\rho V^{\textrm{int}}(t)^\dag$ with $V^{\textrm{int}}(t) = e^{-iV^{\textrm{int}}t}$. Here $\mc O(T\delta t)$ corresponds to the Trotter error, which vanishes when taking the limit of $\delta t\rightarrow 0$. 
Note that the evolution consists of local evolution $\mc U_l(\delta t)$ on the $l$th subsystem term and the joint evolution $\mc V^{\textrm{int}}(\delta t)$.

The next step is to decompose the joint non-local operation  $\mc V^{\textrm{int}}(\delta t)$ into local operations that separately act on the subsystems. In particular, we consider a set of generalised quantum operations as 
\begin{equation}
\begin{aligned}
\Phi_k(\rho) = \tr_E \left[ \mathbf{U}_k\left(\rho \otimes  \mathbf{\ket{0}\bra{0}}_{E} \right)  \mathbf{V}_k\right]
\end{aligned}
\end{equation}
where we denote $ \mathbf{U}_k = U_{1E_1, k}\otimes U_{2E_2, k} \otimes \cdots \otimes U_{LE_L, k}$,  $ \mathbf{V}_k= V_{1E_1,k}^{\dagger}\otimes V_{2E_2,k}^{\dagger} \cdots \otimes    V_{LE_L,k}^{\dagger}$, and $ \mathbf{\ket{0}\bra{0}}_E =\ket{0}\bra{0}_{E_1}\otimes \ket{0}\bra{0}_{E_2} \cdots \otimes \ket{0}\bra{0}_{E_L}$,
and each $ U_{lE_l, k}$ and  $V_{lE_l, k}$ is applied jointly on the $l$th subsystem and the ancilla $E_l$. 
{Denoting $\Phi_{l,k}(\rho_l) = \tr_{E_l} \left[ {U}_{lE_{l},k}\left(\rho_l \otimes  \mathbf{\ket{0}\bra{0}}_{E_l} \right)  {V}_{lE_{l},k}^{\dagger}\right]$ to be the generalised quantum operation acting on the $l$th subsystem, we thus have
\begin{equation}
   \Phi_k  =  \Phi_{1,k} \otimes \Phi_{2,k} \otimes\cdots\otimes  \Phi_{L,k},
\end{equation}
which applies separately on each subsystem. }
When  a sufficient number of $\Phi_k$ is chosen, we can always decompose the instant joint evolution  $\mc V^{\textrm{int}}(t)$ as a linear combination of local operations,
\begin{equation}\label{Eq:Vdecomposition132}
    \mc V^{\textrm{int}}(\delta t) = \mc I + \delta t\sum_k \alpha_k \Phi_k=\mc I + \delta t\sum_k \alpha_k \Phi_{1,k} \otimes \Phi_{2,k} \otimes\cdots\otimes  \Phi_{L,k},
\end{equation}
where $\mc I$ corresponds to the identity channel $\mc I(\rho)=\rho$ and $\alpha_k$ are complex coefficients. For example, we can choose $\{\Phi_k\}$ to be a complete basis for all quantum channels. When the set of $\{\Phi_k\}$ is chosen, we can find the coefficients $\alpha_k$ via linear programming.



Now we can express the joint evolution as
\begin{equation}\label{Eq:perturexpansion0}
\begin{aligned}
\mc U(T) = \prod_{i=1}^{T/\delta t}\bigg[\bigg(\mc I + \delta t\sum_k \alpha_k \Phi_k\bigg)  \circ\bigotimes_{l=1} \mc U_l(\delta t)\bigg]+\mc O(T\delta t).
\end{aligned}
\end{equation}
Denote $\Phi_0 = \mc I$, $c(\delta t)=1+\sum_k|\alpha_k|\delta t$, $p_0(\delta t) = 1/c(\delta t)$, $p_k(\delta t)=|\alpha_k|\delta t/c(\delta t)$, $\theta_k =-i \ln(\alpha_k/|\alpha_k|)$, we can re-express the above equation as
\begin{equation}\label{Eq:perturexpansion}
\begin{aligned}
\mc U(T) &= \prod_{i=1}^{T/\delta t}\bigg[c(\delta t)\bigg(\sum_{k}e^{i\theta_k}p_k(\delta t)\Phi_k\bigg)  \circ\bigotimes_{l=1} \mc U_l(\delta t)\bigg]+\mc O(T\delta t),\\
&= c(\delta t)^{T/\delta t}\sum_{k_1,k_2,\dots,k_{T/\delta t}}\prod_{i=1}^{T/\delta t}\bigg[e^{i\theta_{k_i}}p_{k_i}(\delta t)\Phi_{k_i}\circ\bigotimes_{l=1} \mc U_l(\delta t)\bigg]+\mc O(T\delta t),\\
&= c(\delta t)^{T/\delta t}\sum_{\mathbf k}e^{i\theta_{\mathbf k}}p_{\mathbf k}\prod_{i=1}^{T/\delta t}\bigg[\bigotimes_{l=1}\Phi_{l,k_i}\circ\bigotimes_{l=1} \mc U_l(\delta t)\bigg]+\mc O(T\delta t),\\
&= c(\delta t)^{T/\delta t}\sum_{\mathbf k}e^{i\theta_{\mathbf k}}p_{\mathbf k}\bigotimes_{l=1}\bigg[\prod_{i=1}^{T/\delta t}\bigg(\Phi_{l,k_i}\circ \mc U_l(\delta t)\bigg)\bigg]+\mc O(T\delta t).\\
\end{aligned}
\end{equation}
Here ${\mathbf k}=(k_1,\dots,k_{T/\delta t})$, $p_{\mathbf k} = p_{k_1}p_{k_1}\dots p_{k_{T/\delta t}}$, {$\theta_{\mathbf k} = \theta_{k_1}+\theta_{k_2}\dots +\theta_{k_{T/\delta t}}$}. In the main text, we denote the phase as $\mathcal{P}_{\mathbf k} = e^{i\theta_{\mathbf k}}$.
The whole evolution $\mc U(T)$ is now decomposed as a linear combination of operations that act locally on each subsystem. We can thus effectively realise the joint evolution  using only local operations.


We next discuss the measurement of non-local observable.
Suppose that the initial state $\rho(0)$ is decomposed as 
\begin{equation}
	\rho(0) = \sum_{k_0} \alpha_{k_{0}} \bigotimes_{l=1}\rho_{l,k_0}
\end{equation}
and measure an observable like 
\begin{equation}
	O=\sum_{k_O} \alpha_{k_O} \prod_{l=1}O_{l,{k_O}},
\end{equation}
then we have
\begin{equation}\label{eq:generaldecomp}
	\tr\bigg[\mc U(T)[\rho(0)] O\bigg] = c(\delta t)^{T/\delta t} \sum_{{k_O},k_0,\mathbf  k} \alpha_{k_0}\alpha_{k_O} e^{\theta_{\mathbf k}}p_{\mathbf k}\prod_{l=1}  \tr\bigg[     \prod_{i=1}^{T/\delta t}\bigg(\Phi_{l,k_i} \circ\mc U_l(\delta t)\bigg)[\rho_{l,k_0}]O_{l,k_O})\bigg]+\mc O(T\delta t).
	\end{equation}
Here each term $\tr\bigg[     \prod_{i=1}^{T/\delta t}\bigg(\Phi_{l,k_i} \circ\mc U_l(\delta t)\bigg)[\rho_{l,k_0}]O_{l,k_O})\bigg]$ can be obtained from operations only on the $l$th subsystem. The  expectation value of the arbitrary joint state is now a linear combination of products of local measurement results.

\subsection{Monte Carlo implementation and continuous time}
\label{appendix:sub_MonteCarlo}

\subsubsection{Discrete time Monte Carlo method}
The number of expanded terms is proportional to $N_V^{T/\delta t}$, with $N_{V}$ being the number of terms in the expansion of  \autoref{Eq:Vdecomposition132}. Although $N_V^{T/\delta t}$ increases exponentially, we do not need to measure all the expanded terms and the Monte Carlo method could more efficiently obtain the measurement outcome.

In particular, the decomposition of  \autoref{eq:generaldecomp} can be written in a general form of 
\begin{equation}
	\braket{O} =  \sum_{k} q_k \prod_{l=1}  \tr\left[  \Phi_l(\rho_{l,k})O_{l,k}\right] = C\sum_{k} e^{i\theta_k}p_k \prod_{l=1}  \tr\left[  \Phi_l(\rho_{l,k})O_{l,k}\right]+\mc O(T\delta t),	
\end{equation}
with $k=({k_O},k_0,\mathbf  k)$, $q_k = c(\delta t)^{T/\delta t}\alpha_{k_0}\alpha_{k_O} e^{i\theta_{\mathbf k} }p_{\mathbf k}$, $C = \sum_k q_k=c(\delta t)^{T/\delta t}\sum_{k_0}|\alpha_0|\sum_{k_O}|\alpha_O|$, $\theta_k=-i\ln(q_k/|q_k|)$, $p_k=|q_k|/C$, and $\Phi_l = \prod_{i=1}^{T/\delta t}\bigg(\Phi_{l,k_i} \circ\mc U_l(\delta t)\bigg)$. To obtain the measurement $\braket{O}$, we  can use the following Monte Carlo random sampling method,

\begin{enumerate}
    \item Generate random numbers $k$ according to the probability $\{p_k\}$;
    \item For the $l$th subsystem, prepare state $\rho_{l,k}$, apply the operation $\Phi_l$, and measure the observable $O_{l,k}$ to get $\braket{O_{l,k}}$.
    \item Multiply all the outcomes $\braket{O_{l,k}}=\tr[\Phi_l(\rho_{l,k})O_{l,k}]$ of different subsystems, as well as the phase $e^{i\theta_k}$ and $C$.  
    \item Repeat steps 1-3 $N_s$ time and output $O_{\rm est} = \sum_k Ce^{i\theta_k}\prod_l \braket{O_{l,k}}$.
\end{enumerate}
Ignoring the effect of Trotter error with a finite timestep, the expansion guarantees that the output is an unbiased estimation of the exact measurement outcome. 
Suppose each $O_{l,k}$ is a Pauli measurement, then  with failure probability $\delta$, the estimation error scales as
\begin{equation}
    \varepsilon = \mc O\bigg({C}\sqrt{\frac{\log_2{1/\delta}}{N_s}}\bigg).
\end{equation}
Since  the coefficient $C$ boosts the error, it quantifies the cost of the random sampling process. Suppose the input state is a product state, then the additional cost that the perturbative expansion introduces is 
$C = c(\delta t)^{T/\delta t}$. We will shortly give a detailed analysis of this cost in \autoref{appendix:sub_p1_cost}. 





A major caveat of the above scheme is that it assumes a small discrete timestep and requires to continuously interchange the subsystem evolution $\mc U_l$ and $\Phi_{{l}}$ with a sufficiently small time step $\delta t$. 
In practice, it could be challenging to `continuously' interchange the  subsystem evolution within a sufficiently small time step $\delta t$.
We show in the next subsection that we can apply an equivalent Monte Carlo method to stochastically implement the joint evolution. As such, a general Hamiltonian simulation method other than Trotterisation could be applied to reduce the algorithmic error.

\subsubsection{Stochastic implementation}


We first rewrite  \autoref{Eq:perturexpansion0} as follows
\begin{equation}
\begin{aligned}
\mc U(T)
&= c(\delta t)^{T/\delta t}\prod_{i=1}^{T/\delta t} \bigg(p_{0}\mc I + p_{\ge 1}\tilde\Phi \bigg)\circ \bigotimes_{l=1} \mc U_l(\rho(0)),
\end{aligned}
\end{equation}
where $p_{\ge 1} = \sum_{k\ge 1}p_k$, $\tilde\Phi = {\sum_{k\ge 1}\alpha_k\Phi_k}/{\sum_k|\alpha_k|}$. We note that, at each timestep, we always evolve each subsystem according to  $\mc U_l$, and with a small probability $p_{\ge 1}$, we evolve under $\tilde\Phi$. Since the probability $p_{\ge 1}\propto\delta t$ is negligible when taking the limit of $\delta t\rightarrow 0$, we can equivalently realise it with a continuous decaying or jump process. 
Specifically, we can realise the evolution $\mc U(T)$  with the following stochastic process

\begin{enumerate}
	\item Generate a uniformly distributed random number $p_{\rm jp}\in[0,1]$. 
	\item Determine $t_{\rm jp}$ by solving $p_{\rm jp}=Q(t)$ with $Q(t) =e^{-\Gamma(t)}  $, $\Gamma(t)=t \sum_{k\geq 1} \tilde p_k$, and $\tilde p_k=\lim_{\delta t\rightarrow 0}p_k/\delta t = {|\alpha_k|}$.
	\item Evolve each subsystem state with $\mc U_l$ to time $t$ and update $t=t+t_{\rm jp}$. 
	\item Generate another random number $q_{m}\in[0,1]$ to determine $\Phi_{k}$ and apply $\Phi_{l,k}$ to the $l$th subsystem.
	\item Repeat Step $1-4$ until $t=T$. 
\end{enumerate}

Therefore, we can stochastically realise the decomposition without assuming a discrete time. Meanwhile, other advanced Hamiltonian simulation algorithms such as Qubitisation could be used for each time evolution at step 3. We also note that the jump time $t_{\rm jp}$ and hence the evolution could be predetermined, which makes its implementation almost as easy as conventional Hamiltonian simulation methods. 

Now suppose we have a product input state $\rho(0) = \bigotimes_l \rho_l(0)$ and product measurement $O = \bigotimes_l O_{l}$, the stochastic Monte Carlo implementation of the general perturbative method is summarised as follows. When the input state or the measurement is not in a product form, we can similarly decompose them as we discuss above. 


\begin{algorithm}[H]
\begin{algorithmic}[1]
\State{Get $C$, $\{\alpha_j\}$, $\left\{\tilde p_j = {|\alpha_k|}\right\}$,  and $\theta_i = -i\ln(\alpha_k/|\alpha_k|)$ from interaction channel $\mc V$, set $\bigg\{s_j=\frac{\sum_{i=1}^j \tilde p_i}{ \sum_{i} \tilde p_i}$\bigg\} and $\Gamma(t)=t \sum_{k} \tilde p_k$.}
\For {$m=1$ to $N_s$}
\State {Randomly generate $q_0\in[0,1]$, set $t=0$, $n=0$, $\theta=0$.}
\While{$t\le T$}
\State{Get $t_{\textrm{jp}}^{n}$ by solving $\mathrm{exp}\left(- \Gamma(t^n_{\mathrm{jp}}) \right)=q_n$.}
\State{Randomly generate $q'_n\in[0,1]$.}
\State{Set $j_n=j$ if $q'_n \in [s_{j-1},s_j]$ and update $\theta=\theta_{j_n}+\theta$.}
\State{Update $t=t+t_{\textrm{jp}}^{n}$ and $n=n+1$.}
\EndWhile
\For {$l=1$ to $L$}
\State{Set $\rho_l=\rho_l(0)$ and $\bar O=0$.}
\For{$k = 0:n-1$}
\State{Evolve $\rho_l$ under $\mc U_l$ for time $t_{\textrm{jp}}^{k}$ and apply $\Phi_{l, j_k}$.}
\EndFor
\State{Evolve $\rho_l$ under $\mc U_l$ for time $T-\sum_{k=0}^{n-1}t_{\textrm{jp}}^{k}$.
\State{Measure $O$ of $\rho_l$ to get $O_{l,m}$.}
}
\EndFor
\State{Update $\bar{O}=\bar O+Ce^{i\theta} \prod_l O_{l,m}/N_s$}
\EndFor
\end{algorithmic}
\caption{Perturbative quantum simulation. \\Input: initial state $\rho(0) = \bigotimes_l \rho_l(0)$, number of samples $N_s$, local evolution  $\mc U_l$, decomposition of the interaction $\mc V^{\textrm{int}}(\delta t) =\mc I + \delta t\sum_k \alpha_k \Phi_{1,k} \otimes \Phi_{2,k} \otimes\cdots\otimes  \Phi_{L,k}$ with quantum operations $ \Phi_{l,j}$, measurement $O = \bigotimes_l O_{l}$. Output: $\bar O$.  } 
\label{alg1}
\end{algorithm}

\subsubsection{Equivalence between the two Monte Carlo methods}
\label{section: Equivalence}
We now prove the equivalence between the stochastic approach and the discrete time Monte Carlo approach with $\delta t\rightarrow 0$. Following the above discussion, we can regard the discrete time Monte Carlo approach as a decaying process. Specifically, at each timestep, it has probability $p_{\ge 1}$ to apply an additional operation $\tilde \Phi$. 
Starting at time $t=0$ with the limit of $\delta t\rightarrow 0$, the probability that there is no `decay' event until time $t$ is 
\begin{equation}
	Q(t) = \lim_ {\delta t\rightarrow 0}\prod_{i=0}^{t/\delta t} \bigg(1-\sum_{k \geq 1} \tilde p_{k}\delta t\bigg) =e^{-t\alpha},  
\end{equation}
where $\alpha = \sum_k|\alpha_k|$.
The probability to have a decay event in the time interval $[t, t+dt]$ is 
\begin{align}\label{Eq:appdixPdt}
P(t)dt = \alpha e^{-t\alpha} dt.
\end{align}

For the stochastic method, we generate a uniformly distributed random variable $q \in [0,1]$ and solve 
\begin{align}
q = e^{-t_{\mathrm{jp}}\alpha}, 
\label{eqn:jumptime}
\end{align}
to determine the jump time $t_{\mathrm{jp}}$. Then the probability that jump happens at time $t_{\mathrm{jp}}$ or in particular between $[t_{\mathrm{jp}},t_{\mathrm{jp}}+dt]$ is  
\begin{equation}
	|dq| = \alpha e^{-t_{\mathrm{jp}}\alpha}dt = P(t_{\mathrm{jp}}) dt,
\end{equation}
which agrees with  \autoref{Eq:appdixPdt}. 
We can thus use the uniformly distributed random variable $q$ to determine the jump time to equivalently simulate the discrete time Monte Carlo approach. 


At the jump time $t_{\mathrm{jp}}$, we apply the quantum operations other than the identity operation. We can determine the quantum operation by generating another uniformly distributed random number $q' \in [0,1]$. If $q' \in [s_{k-1},s_k]$, we set the quantum operation
to $\mathcal{B}_k$, where $s_k(t)=(\sum_{j=1}^k \tilde p_j) / (\sum_{j}^{N_{\rm op}} \tilde p_j)$ and $N_{\rm op}$ is the number of the quantum operations during the evolution.


\subsection{Cost analysis}
\label{appendix:sub_p1_cost}

The above perturbative quantum simulation (PQS) method introduces a sampling overhead quantified by
\begin{equation}
    C = \lim_{\delta t\rightarrow0}\prod_{i=1}^{T/\delta t} c(\delta t)=\lim_{\delta t\rightarrow0}\prod_{i=1}^{T/\delta t} (1+\alpha\delta t)=e^{T\alpha},
\end{equation}
where $\alpha = \sum_k |\alpha_k|$. Since the simulation accuracy is now $C$ times larger, we need to have $C=\mc O (1)$ and hence $\alpha T = \mc O(1)$ in order to get an accurate result. This could be satisfied when $T$ and $\alpha$ are not too large, i.e., when the product of the simulation time and the interaction strength is constant. While $\alpha$ roughly measures the interaction strength, its analytical relationship to the interaction Hamiltonian $V^{\rm int}$ is not obvious. This is because the value of $\alpha$ depends on the choice of the generalised quantum operations and the decomposition. We can thus define the minimal value of $\alpha$ by optimising over all possible decompositions,
\begin{equation}
    \alpha_{\min} = \min_{\{\Phi_k\}} \alpha(\{\alpha_k,\Phi_k\}),
\end{equation}
where we write $\alpha(\{\alpha_k,\Phi_k\})$ as a function of the generalised quantum operations and the minimisation is over all possible decomposition strategies. 
Here we give an analytical lower bound to $\alpha_{\min}$ as a function of the interaction $V^{\rm int}$. We show in the next section an explicit decomposition strategy that achieves this lower bound. 

We consider the Choi state of the instant evolution $\mc V^{\rm int}(\delta t)$ by inputting tensor products of the maximally entangled states. Specifically, inputting $\ket{\phi}_{l,l'} = \sum_j \ket{jj}_{l,l'}/\sqrt{d}$ to the $l$th subsystem with $d$ being the dimension, the output state $\phi^{\rm int}_{1,1',\dots, L,L'}$ is the Choi state,
\begin{equation}
    \phi^{\rm int}_{1,1',\dots, L,L'} = \mc V^{\rm int}(\delta t)\bigg[\bigotimes_{l}{\phi}_{l,l'}\bigg].
\end{equation}
Suppose a decomposition of $\mc V^{\rm int}(\delta t)$ is
\begin{equation}\label{Eq:decomVint}
    \mc V^{\rm int}(\delta t) = \sum_k \tilde\alpha_k \Phi_{1,k} \otimes \Phi_{2,k} \otimes\cdots\otimes  \Phi_{L,k},
\end{equation}
where we have put $\mc I$ into the summation and denote $\tilde\alpha_k$ as the new coefficient incorporating $\delta t$. The relation between $\alpha$ and $\tilde \alpha = \sum_k |\tilde \alpha_k|$ is
\begin{equation}
    \alpha = \lim_{\delta t\rightarrow 0}\frac{\tilde \alpha - 1}{\delta t}.
\end{equation}
Since $\alpha$ depends linearly on $\tilde \alpha$, we can equivalently minimise  $\tilde \alpha$. 

Define isomorphisms $S$ and $T$ of a general matrix $M = \sum_{i,j}M_{i,j}\ket{i}\bra{j}$ as
\begin{equation}
\begin{aligned}
            S(M) &= \sum_{i,j}M_{i,j}\ket{i}\ket{j},\\
            T(M) &= \sum_{i,j}M_{i,j}\bra{i}\bra{j}.
\end{aligned}
\end{equation}
Several useful properties of the $S$ and $T$ are
\begin{itemize}
    \item The definitions of $S(M)$ and $T(M)$ are basis dependent. 
\item When applying matrices $U$ and $V$ to $M$, we have
\begin{equation}
\begin{aligned}
            S(UMV) &= U\otimes V^{T}\sum_{i,j}M_{i,j}\ket{i}\ket{j} = U\otimes V^{T}S(M),\\
            T(UMV) &= \sum_{i,j}M_{i,j}\bra{i}\bra{j}U^T\otimes V = T(M)U^T\otimes V.
\end{aligned}
\end{equation}
    \item $S(M)$ and $T(M)$ are related as follows
    \begin{equation}
    S(M) = \left[T(M^*)\right]^\dag.
    \end{equation}
    This is true because $\left[T(M^*)\right]^\dag = \left[\sum_{i,j}M^*_{i,j}\bra{i}\bra{j}\right]^\dag = \sum_{i,j}M_{i,j}\ket{i}\ket{j} =  S(M)$. 
    \item The norms of $S$ and $T$ are the same
    \begin{equation}
        S(M)^\dag\cdot S(M) = T(M) \cdot T(M)^\dag = \tr[M^\dag M] = \|M\|_2^2,
    \end{equation}
    which corresponds to Schatten-2 norm of $M$. 
    This is because $S(M)^\dag \cdot S(M) = \sum_{i',j'}M^*_{i',j'}\bra{i'}\bra{j'}\sum_{i,j}M_{i,j}\ket{i}\ket{j} = \sum_{i,j}M^*_{i,j}M_{i,j} = \tr[M^\dag M]$. The proof is similar for $T(M)$. 
    
    \item Suppose we denote $\ket{M}  = S(M)$ then $T(M) = [S(M^*)]^\dag = \bra{M^*}$.
\end{itemize}

By applying $S$ to the $l,l'$ systems and $T$ to the rest systems, we get a matrix
\begin{equation}
    {\psi}^{\rm int}_{l,l'} = S_{l,l'}\circ \bigotimes_{j\neq l,j'\neq l'}T_{j,j'} (\phi^{\rm int}_{1,1',\dots, L,L'})
\end{equation}
We can thus lower bound $\tilde \alpha$ as follows.


\begin{theorem}
\label{thm:cost_bound}
Given a decomposition of  \autoref{Eq:decomVint} with generalised quantum operations $\{\Phi_{l,k}\}$, we have
\begin{equation}
    \tilde \alpha \ge  \max_{l}\big\|{\psi}^{\rm int}_{l,l'}\big\|_1,
\end{equation}
where $\|A\|_1 = \tr[\sqrt{AA^\dag}]$ is the trace norm. 
\end{theorem}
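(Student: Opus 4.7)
The plan is to use the linearity of $S$ and $T$ together with the triangle inequality for the trace norm, reducing the problem to bounding the trace norm of each local factor in the decomposition. Applying the given decomposition of $\mathcal{V}^{\text{int}}(\delta t)$ to the product of maximally entangled states and then acting with $S_{l,l'}$ on one pair of subsystems and $T_{j,j'}$ on each of the remaining pairs, I expand
\begin{equation*}
\psi^{\text{int}}_{l,l'} = \sum_k \tilde{\alpha}_k\, S_{l,l'}\bigl(\Phi_{l,k}(\phi_{l,l'})\bigr) \otimes \bigotimes_{j\neq l} T_{j,j'}\bigl(\Phi_{j,k}(\phi_{j,j'})\bigr),
\end{equation*}
which is valid because each $\Phi_{l,k}$ acts only on subsystem $l$ and its ancilla, so the Choi state factorises across the $L$ subsystems.

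Next I would invoke the triangle inequality and the multiplicativity of the trace norm on tensor products to obtain
\begin{equation*}
\|\psi^{\text{int}}_{l,l'}\|_1 \le \sum_k |\tilde{\alpha}_k|\, \bigl\|S_{l,l'}(\Phi_{l,k}(\phi_{l,l'}))\bigr\|_1 \prod_{j\neq l}\bigl\|T_{j,j'}(\Phi_{j,k}(\phi_{j,j'}))\bigr\|_1.
\end{equation*}
The decisive observation is that $S(M)$ and $T(M)$ are a ket and a bra, respectively, so viewed as rank-one matrices their trace norm coincides with their Euclidean norm, which by the already-established identity $\|S(M)\|_2 = \|T(M)\|_2 = \|M\|_2$ equals the Schatten-$2$ norm of $M$ itself.

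Finally I would use the monotonicity of the Schatten-$2$ norm under generalised quantum operations, proved at the start of Sec.~\ref{appendix:operation}, to conclude that
\begin{equation*}
\bigl\|\Phi_{l,k}(\phi_{l,l'})\bigr\|_2 \le \|\phi_{l,l'}\|_2 = 1,
\end{equation*}
since $\phi_{l,l'}$ is a rank-one projector. Substituting back yields $\|\psi^{\text{int}}_{l,l'}\|_1 \le \sum_k|\tilde{\alpha}_k| = \tilde{\alpha}$ for every $l$, and taking the maximum over $l$ gives the claimed bound.

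The main subtlety, and where I would be most careful, is the step translating between trace norms of vectors and Schatten-$2$ norms of matrices: one must check that the operator $\psi^{\text{int}}_{l,l'}$ (a ket on the $(l,l')$ register times bras on every other register) really is a rank-one tensor product in each summand, so that the tensor-product multiplicativity of $\|\cdot\|_1$ applies cleanly and turns each factor into a Schatten-$2$ norm that can then be controlled by the contractivity proven for $\Phi$. Once this bookkeeping is in place, the remaining manipulations are standard.
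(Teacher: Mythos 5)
Your proposal is correct and follows essentially the same route as the paper's proof: expand ${\psi}^{\rm int}_{l,l'}$ via linearity of $S$ and $T$, apply the triangle inequality, note each summand is a rank-one ket-times-bras operator whose trace norm is a product of Euclidean norms equal to Schatten-$2$ norms of the local outputs, and then invoke the contractivity $\|\Phi_{l,k}(\phi_{l,l'})\|_2 \le \|\phi_{l,l'}\|_2 = 1$. No gaps to flag.
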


\begin{proof}
Given the above decomposition, the Choi state of $\mc V^{\rm int}(\delta t)$ is
\begin{equation}
    \phi^{\rm int}_{1,1',\dots, L,L'} = \sum_k \tilde\alpha_k \Phi_{1,k}({\phi}_{1,1'}) \otimes \Phi_{2,k}({\phi}_{2,2'}) \otimes\cdots\otimes  \Phi_{L,k}({\phi}_{L,L'}).
\end{equation}
Considering ${\psi}^{\rm int}_{1,1'}$ as an example, we have 
\begin{equation}
    {\psi}^{\rm int}_{1,1'} = \sum_k \tilde\alpha_k \ket{\Phi_{1,k}({\phi}_{1,1'})} \otimes \bra{\Phi_{2,k}({\phi}_{2,2'})^*} \otimes\cdots\otimes  \bra{\Phi_{L,k}({\phi}_{L,L'})^*},
\end{equation}
where $\ket{\Phi_{L,k}({\phi}_{l,l'})} = S_{l,l'}(\Phi_{L,k}({\phi}_{l,l'}))$ and $\bra{\Phi_{L,k}({\phi}_{l,l'})^*} = T_{l,l'}(\Phi_{L,k}({\phi}_{l,l'}))$. 
Based on the triangle inequality of the trace norm, we have
\begin{equation}
    \begin{aligned}
    \|{\psi}^{\rm int}_{1,1'}\|_1 &\le \sum_k |\tilde\alpha_k| \bigg\|\ket{\Phi_{1,k}({\phi}_{1,1'})} \otimes \bra{\Phi_{2,k}({\phi}_{2,2'})^*} \otimes\cdots\otimes  \bra{\Phi_{L,k}({\phi}_{L,L'})^*}\bigg\|_1,
    \end{aligned}
\end{equation}
The trace norm of each term is
\begin{equation*}
    \begin{aligned}
  &\bigg\|\ket{\Phi_{1,k}({\phi}_{1,1'})} \otimes \bra{\Phi_{2,k}({\phi}_{2,2'})^*} \otimes\cdots\otimes  \bra{\Phi_{L,k}({\phi}_{L,L'})^*}\bigg\|_1 \\
  =& \tr\left[\sqrt{\braket{\Phi_{1,k}({\phi}_{1,1'})|\Phi_{1,k}({\phi}_{1,1'})}\braket{\Phi_{2,k}({\phi}_{2,2'})^*|\Phi_{2,k}({\phi}_{2,2'})^*}\cdots \braket{\Phi_{L,k}({\phi}_{L,L'})^*|\Phi_{L,k}({\phi}_{L,L'})^*} }\right].          
    \end{aligned}
\end{equation*}
Note that
\begin{equation}
\begin{aligned}
    \braket{\Phi_{l,k}({\phi}_{l,l'})|\Phi_{l,k}({\phi}_{l,l'})} = \braket{\Phi_{l,k}({\phi}_{l,l'})^*|\Phi_{l,k}({\phi}_{l,l'})^*} =\tr[\Phi_{l,k}({\phi}_{l,l'})^\dag \Phi_{l,k}({\phi}_{l,l'})].
\end{aligned}    
\end{equation}
Here we used the norms of $S$ and $T$. Based on the property of generalised quantum operations $\Phi_{l,k}$, we have
\begin{equation}
     \tr[\Phi_{l,k}({\phi}_{l,l'})^\dag \Phi_{l,k}({\phi}_{l,l'})]   = \|\Phi_{l,k}({\phi}_{l,l'})]\|_2^2 \le = 1.
\end{equation}
Combining the above results, we thus have
\begin{equation}
    \|{\psi}^{\rm int}_{1,1'}\|_1 \le \sum_k |\tilde \alpha_k| = \tilde \alpha.
\end{equation}
Since the inequality holds for any ${\psi}^{\rm int}_{l,l'}$, we have
\begin{equation}
    \tilde \alpha\ge \max_l \|{\psi}^{\rm int}_{l,l'}\|_1,
\end{equation}
which completes the proof.


\end{proof}

Now consider the specific form of $\mc V^{\rm int}(\delta t)[\rho] = \rho + \delta t(- iV^{\rm int}\rho+i\rho V^{\rm int})$ and define the interaction part as
\begin{equation}
    \bar{\mc V}^{\rm int}[\rho] = - iV^{\rm int}\rho+i\rho V^{\rm int}. 
\end{equation}
We can then similarly define the Choi state of $\tilde{\mc V}^{\rm int}$ as
\begin{equation}
    \bar \phi^{\rm int}_{1,1',\dots,L,L'} = \bar{\mc V}^{\rm int}\left[\bigotimes_l \phi_{l,l'}\right],
\end{equation}
and the matrices
\begin{equation}
    \bar{\psi}^{\rm int}_{l,l'} = S_{l,l'}\circ \bigotimes_{j\neq l,j'\neq l'}T_{j,j} (\bar{\phi}^{\rm int}_{1,1',\dots, L,L'}).
\end{equation}
Then consider the decomposition  \autoref{Eq:Vdecomposition132}, we have
\begin{corollary}
Given a decomposition of  \autoref{Eq:Vdecomposition132} with generalised quantum operations $\{\Phi_{l,k}\}$, we have
\begin{equation}
     \alpha \ge  \max_{l}\big\|\bar{\psi}^{\rm int}_{l,l'}\big\|_1,
\end{equation}
where $\|A\|_1 = \tr[\sqrt{AA^\dag}]$ is the trace norm. 
\end{corollary}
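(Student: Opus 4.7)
The plan is to observe that the Corollary follows from essentially the same argument as the Theorem, applied to the linearised interaction $\bar{\mc V}^{\rm int}$ rather than to the finite-time channel $\mc V^{\rm int}(\delta t)$. Since Eq.~\eqref{Eq:Vdecomposition132} reads $\mc V^{\rm int}(\delta t)=\mc I+\delta t\sum_k\alpha_k\Phi_k$ with $\delta t$-independent coefficients, subtracting $\mc I$ and dividing by $\delta t$ yields the decomposition $\bar{\mc V}^{\rm int}=\sum_k\alpha_k\Phi_k$ of the linearised map as a linear combination of the same tensor-product generalised quantum operations, with $\alpha=\sum_k|\alpha_k|$ by definition.

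I would then feed the tensor-product maximally entangled input $\bigotimes_l\phi_{l,l'}$ through this decomposition to obtain
\[
\bar\phi^{\rm int}_{1,1',\dots,L,L'}=\sum_k\alpha_k\bigotimes_l\Phi_{l,k}(\phi_{l,l'}),
\]
and apply $S_{l,l'}$ to the distinguished pair and $T_{j,j'}$ to the remaining pairs. Using linearity and the behaviour of $S,T$ under tensor products, each summand becomes a product of a ket on systems $l,l'$ and bras on the other subsystems, in exact analogy with the Theorem. The triangle inequality for the trace norm then reduces the task to bounding the trace norm of a single rank-one term of the form $\ket{\Phi_{l,k}(\phi_{l,l'})}\otimes\bigotimes_{j\neq l}\bra{\Phi_{j,k}(\phi_{j,j'})^*}$.

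From here I would invoke the two ingredients used in the Theorem: (i) the trace norm of such an outer product of a ket and bras equals the product of the Schatten-2 norms of its components, and (ii) the Schatten-2 monotonicity of generalised quantum operations proved in Sec.~\ref{appendix:operation}, which gives $\|\Phi_{l,k}(\phi_{l,l'})\|_2\leq\|\phi_{l,l'}\|_2=1$. Multiplying across subsystems bounds the trace norm of each summand by $|\alpha_k|$, so $\|\bar\psi^{\rm int}_{l,l'}\|_1\leq\sum_k|\alpha_k|=\alpha$, and maximising over $l$ yields the claim. There is no genuine obstacle here since the whole derivation is a verbatim rerun of the Theorem's proof after the substitution $\mc V^{\rm int}(\delta t)\mapsto\bar{\mc V}^{\rm int}$; the only bookkeeping point is that the identity term in Eq.~\eqref{Eq:Vdecomposition132} drops out upon passing to the linearisation, so $\tilde\alpha$ is replaced by $\alpha$ rather than by $1+\alpha\delta t$.
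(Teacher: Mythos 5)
Your proposal is correct and follows exactly the route the paper intends: the Corollary is the Theorem's argument rerun on the linearised map $\bar{\mc V}^{\rm int}=\sum_k\alpha_k\Phi_k$ obtained by subtracting $\mc I$ from Eq.~\eqref{Eq:Vdecomposition132} and dividing by $\delta t$, with the same Choi-state construction, $S/T$ isomorphisms, triangle inequality, and Schatten-2 monotonicity. Your bookkeeping remark that the identity term drops out, so $\tilde\alpha$ is replaced by $\alpha$, is precisely the point the paper leaves implicit.
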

In the next section, we will consider a specific decomposition strategy and show how to use the analytical lower bound to prove its optimality.



\subsection{A complete basis operation set}
\label{appendix: basisoperation}
We can choose the generalised quantum operations to be a complete set of basis operations. 
In particular,  every single qubit operation can be decomposed into a linear combination of $16$ basis operations. This is because every single qubit operation (including projective measurements) can be expressed with square matrices with $4 \times 4 = 16$ elements by using the Pauli transfer representation~\cite{greenbaum2015introduction}. Therefore, $16$ linearly independent operations are sufficient to emulate arbitrary single qubit operations. Table~\ref{tab:bases} displays one efficient set of single-qubit basis operations in Ref.~\cite{endo2018practical}.    

\begin{table}[h!t!]
\begin{center}
\begin{tabular}{|c|l||c|l||c|l||c|l|}
\hline
1 & ~~$[I]$ (no operation) &
2 & ~~$[\sigma^{\rm x}] $ &
3 & ~~$[\sigma^{\rm y}] $ &
4 & ~~$[\sigma^{\rm z}] $ \\
\hline
5 & ~~$[R_{\rm x}] = [\frac{1}{\sqrt{2}}(I +i \sigma^{\rm x})]$ &
6 & ~~$[R_{\rm y}] = [\frac{1}{\sqrt{2}}(I +i \sigma^{\rm y})] $ &
7 & ~~$[R_{\rm z}] = [\frac{1}{\sqrt{2}}(I+i \sigma^{\rm z})] $ &
8 & ~~$[R_{\rm yz}] = [\frac{1}{\sqrt{2}}(\sigma^{\rm y} + \sigma^{\rm z})]$ \\
\hline
9 & ~~$[R_{\rm zx}] = [\frac{1}{\sqrt{2}}(\sigma^{\rm z} + \sigma^{\rm x})] $ &
10 & ~~$[R_{\rm xy}] = [\frac{1}{\sqrt{2}}(\sigma^{\rm x} + \sigma^{\rm y})]$ &
11 & ~~$[\pi_{\rm x}] = [\frac{1}{2}(I + \sigma^{\rm x})] $ &
12 & ~~$[\pi_{\rm y}] = [\frac{1}{2}(I + \sigma^{\rm y})]$ \\
\hline
13 & ~~$[\pi_{\rm z}] = [\frac{1}{2}(I + \sigma^{\rm z})] $ &
14 & ~~$[\pi_{\rm yz}] = [\frac{1}{2}(\sigma^{\rm y} +i \sigma^{\rm z})]$ &
15 & ~~$[\pi_{\rm zx}] = [\frac{1}{2}(\sigma^{\rm z} +i \sigma^{\rm x})]$ &
16 & ~~$[\pi_{\rm xy}] = [\frac{1}{2}(\sigma^{\rm x} +i \sigma^{\rm y})] $ \\
\hline
\end{tabular}
\end{center}
\caption{
Sixteen basis operations. These operations are composed of single qubit rotations and measurements. $[I]$ denotes an identity operation (no operation), $[\sigma^{i}]~( i=x,y,x)$ corresponds to operations applying Pauli matrices. $[\pi]$ corresponds to projective measurements. 
}
\label{tab:bases}
\end{table}

Here, we denote the complete basis operations as $\{\mathcal{B}_i\}$. For multiple qubit systems, tensor products of single qubit operations, e.g., $\mathcal{B}_i \otimes \mathcal{B}_j$ also forms a complete basis set for composite systems. Therefore, we can decompose any $n$-qubit interaction into the basis $\{\mathcal{B}_i\}^{\otimes n}$. While the decomposition is universal, it may produce a large decomposition coefficient, making it inefficient to implement.
We can thus consider an over-complete basis with generalised quantum operations and find an optimised decomposition. Specifically, consider a set of over-complete basis $\{\Phi_k\}$ which includes the identity channel, our target is to solve the following problem.

\begin{equation}
\begin{aligned}
\min C_1 &= \sum_{k} \alpha^+_k - \sum_{k'} \alpha^-_{k'},\\
\textrm{such that}~&\mc V^{\rm int}(\delta t)= \sum_{k} \alpha^+_k \Phi_{k} - \sum_{k'} \alpha^-_{k'}\Phi_{k'},\\
&\alpha^+_k, \alpha^-_k\ge 0.
\end{aligned}
\end{equation}

There are a few problems here. First, the optimisation becomes exponentially costly when the channel acts on a large $n$ qubits. Second, the basis operation also contains measurement and state preparation, which might be challenging in experiment. 
In the next section, we give another explicit decomposition strategy that resolves these problems. The explicit decomposition could be optimal under mild conditions and it only requires unitary operations without measurements or state preparation.

\section{An explicit decomposition method}
\label{appendix:explicit}

\subsection{Method description}
In this section, we consider an explicit decomposition which only involves unitary operations. Supposing $V^{\rm int} = \sum_j \lambda_jV_j^{\rm int}$ with each $V_j^{\rm int}$ being a tensor product of Pauli operators, we consider the expansion
\begin{equation}\label{Eq:Hdecom}
\begin{aligned}
	\mc V^{\rm int}(\delta t)[\rho] &= \mc I(\rho) - i\delta t(V^{\rm int}\rho I -\rho V^{\rm int})+ O(\delta t^2),\\
	&= \mc I(\rho) - i\delta t\sum_j \lambda_j (V_j^{\rm int}\rho-\rho V_j^{\rm int})+ O(\delta t^2),
\end{aligned}
\end{equation}
where both $V_j^{\rm int}\rho$ and $\rho V_j^{\rm int}$ are generalised quantum operations. Suppose $V_j^{\rm int} = \bigotimes_l V_{l,j}^{\rm int}$ and the input state is a product state $\rho = \bigotimes_l \rho_l$ the above decomposition could be expressed generally as
\begin{equation}
\begin{aligned}
	\mc V^{\rm int}(\delta t)\bigg[\bigotimes_l \rho_l\bigg] &= c(\delta t)\sum_k e^{i\theta_k}p_k \bigotimes_l \left[\tilde U_{l,k} \rho_l \tilde V_{l,k}\right]+ O(\delta t^2).
\end{aligned}
\end{equation}
Here each $\tilde U_{l,k}$ and $\tilde V_{l,k}$ could be $I$ and $V_{l,j}^{\rm int}$, $c(\delta t) = 1+2\delta t\sum_j|\lambda_j|$,  $p_k$ and $\theta_k$ are defined correspondingly.
Denoting the unitary evolution of the $l$th subsystem as $U_l(\delta t)$ and following the notation of the above discussion,  the joint evolution of all the subsystems is
\begin{equation}\label{Eq:explicitwhole}
    \mc U(T)\bigg[\bigotimes_l \rho_l\bigg] = C\sum_{\mathbf k}e^{i\theta_{\mathbf k}}p_{\mathbf k}\bigotimes_l \left[\tilde U_{l,k_{T/\delta t}}U_l(\delta t)\dots \tilde U_{l,{k_1}}U_l(\delta t) \rho_l U^\dag_l(\delta t)\tilde V_{l,k_1}\dots U^\dag_l(\delta t)\tilde V_{l,k_{T/\delta t}}\right],
\end{equation}
where $C = c(\delta t)^{T/\delta t} =e^{2T\lambda}$ with $\lambda = \sum_j|\lambda_j|$. Now we have decoupled the joint evolution as a linear combination of independent evolution of each subsystem. When we further implement the stochastic Monte Carlo method, the evolution of each subsystem looks like
\begin{equation}
    \rho_{l,\mathbf k} = \tilde U_{l,k_{N_{\rm jp}}}U_l(t_{N_{\rm jp}})\dots \tilde U_{l,k_1}U_l(t_1) \rho_l U^\dag_l( t_1)\tilde V_{l,k_{1}}\dots U^\dag_l(t_{N_{\rm jp}})\tilde V_{l,k_{N_{\rm jp}}},
\end{equation}
where $N_{\rm jp}$ is the number of jumps or decay events and $t_1+t_2+\dots+t_{N_{\rm jp}} = T$. Here each $\tilde U_{l,k_{i}}$ is either $I$ or one of $\{V^{\rm int}_{l,j}\}$. When we measure $O_l$, it becomes
\begin{equation}
    \tr[\rho_{l,\mathbf k} O_l] = \tr[\tilde U_{l,k_{N_{\rm jp}}}U_l(t_{N_{\rm jp}})\dots \tilde U_{l,k_1}U_l(t_1) \rho_l U^\dag_l( t_1)\tilde V_{l,k_{1}}\dots U^\dag_l(t_{N_{\rm jp}})\tilde V_{l,k_{N_{\rm jp}}}O],
\end{equation}
which could be implemented with the following circuit
\begin{align*}
\Qcircuit @C=0.8em @R=1.2em {
\lstick{\ket{+}}&\qw&\ctrl{2}& \qw& \qw&\qw& \qw&\ctrl{2}&\measureD{X, Y}\\
&&&&\cdots&&\\
\lstick{\rho_l}&\gate{U_{l}(t_1)}&\gate{\tilde U_{l,k_1}\tilde V^\dag_{l,k_1}}&\qw& \qw& \qw&\gate{U_{l}(t_{N_{\rm jp}})}&\gate{\tilde U_{l,k_{N_{\rm jp}}}\tilde V^\dag_{l,k_{N_{\rm jp}}}}&\measureD{O_l}\\
}
\end{align*}
The measurement result of the whole evolution state is 
\begin{equation}
    \tr\left[\mc U(T)\bigg[\bigotimes_l \rho_l\bigg]\cdot\bigotimes_l O_l\right] = C\sum_{\mathbf k}e^{i\theta_{\mathbf k}}p_{\mathbf k}\prod_{l} \tr[\rho_{l,\mathbf k} O_l].
\end{equation}
Therefore, after measuring each $\tr[\rho_{l,\mathbf k} O_l]$, we can obtain the exact measurement result.

\subsection{Cost analysis}
\label{appendix:Explicit_Cost}
According to the above discussion, the cost associated with the explicit expansion is 
\begin{equation}
    C = e^{2T\lambda},
\end{equation}
with $\lambda = \sum_j|\lambda_j|$. We show that the expansion is optimal, i.e., with the smallest cost when $V^{\rm int}$ satisfies the following condition.

\begin{condition}\label{condition:1}
Suppose $V^{\rm int}$ acts nontrivially on the set of subsystems $\mc S$. Given $V^{\rm int} = \sum_j\lambda_j V^{\rm int}_j$ with each $V^{\rm int}_j = \bigotimes_l V^{\rm int}_{l,j}$ and $V^{\rm int}_{l,j}$ being a tensor product of Pauli operators, we have 
\begin{equation}
\begin{aligned}
\tr\left[V^{\rm int}_{l,j}\right] &= 0, \, \forall j,\forall l\in \mc S,\\
\tr\left[V^{\rm int}_{l,j}V^{\rm int}_{l,j'}\right] &= 0, \, \forall j\neq j',\forall l\in \mc S.
\end{aligned}
\end{equation}
\end{condition}
\noindent The first condition requires that each $V^{\rm int}_{l,j}$ is non-identity and the second condition requires two interaction terms of the same system are orthogonal. When we say $V^{\rm int}$ acts nontrivially on subsystems $\mc S$, it means that for any $l\in \mc S$, at least one of $V^{\rm int}_j$ has non-identity Pauli operators on subsystem $l$.

We summarise the result as follows.
\begin{theorem}
\label{thm:min_cost}
Suppose the interaction $V^{\rm int}$ satisfies Condition~\ref{condition:1}. The explicit expansion of  \autoref{Eq:Hdecom} has the minimal cost under all possible decomposition strategies.  
\end{theorem}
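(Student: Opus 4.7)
The plan is to apply the lower bound from the preceding corollary, $\alpha \geq \max_l \|\bar\psi^{\rm int}_{l,l'}\|_1$, and show that under Condition~\ref{condition:1} this bound is saturated at $\alpha = 2\lambda$, which is exactly the cost of the explicit decomposition in Eq.~\eqref{Eq:Hdecom}. That decomposition writes $\bar{\mc V}^{\rm int}$ as a linear combination of the $2N$ generalised operations $\rho\mapsto V^{\rm int}_j \rho$ and $\rho\mapsto \rho V^{\rm int}_j$, each with coefficient of absolute value $|\lambda_j|$, giving $\alpha_{\rm exp}=2\sum_j|\lambda_j|=2\lambda$; hence it suffices to prove the matching lower bound $\|\bar\psi^{\rm int}_{l,l'}\|_1 \ge 2\lambda$ for some admissible $l$.

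For this I would fix any $l\in\mc S$ (assuming $|\mc S|\ge 2$, as is natural for a genuine inter-subsystem interaction). Expanding $\bar{\mc V}^{\rm int}[\rho]=-i\sum_j\lambda_j(V^{\rm int}_j\rho-\rho V^{\rm int}_j)$ on the product Choi input $\bigotimes_m\phi_{m,m'}$ and applying $S_{l,l'}\circ\bigotimes_{m\ne l}T_{m,m'}$ decomposes $\bar\psi^{\rm int}_{l,l'}$ into $2N$ rank-one tensor-product terms, split into two families indexed by the $-iV\rho$ and $+i\rho V$ pieces. On the distinguished site $l$ each term carries a ket factor proportional to $S_{l,l'}((V^{\rm int}_{l,j}\otimes I)\phi_{l,l'})$ or $S_{l,l'}(\phi_{l,l'}(V^{\rm int}_{l,j}\otimes I))$; on every other site $m\ne l$ it carries the corresponding $T_{m,m'}$ image as a bra factor.

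The central step is to invoke Condition~\ref{condition:1} to verify pairwise orthogonality of all $2N$ ket factors on site $l$ and, independently, of all $2N$ bra factors on the remaining sites. On site $l$ the tracelessness $\tr[V^{\rm int}_{l,j}]=0$ decouples the two families (it kills the overlap between a $V\rho$ ket and a $\rho V$ ket), while $\tr[V^{\rm int}_{l,j}V^{\rm int}_{l,j'}]=0$ for $j\ne j'$ separates distinct $j$ within the same family. Applying these same two conditions at any auxiliary site $m\in\mc S\setminus\{l\}$ kills every residual cross inner product of the bra factors through the tensor-product structure. With orthogonal left and right singular bases, $\|\bar\psi^{\rm int}_{l,l'}\|_1$ is simply the sum of the $2N$ individual rank-one norms; since $S$ and $T$ are Hilbert-Schmidt isometries and each $(V^{\rm int}_{m,j}\otimes I)\phi_{m,m'}$ and $\phi_{m,m'}(V^{\rm int}_{m,j}\otimes I)$ has unit Frobenius norm (by unitarity of the Pauli product and normalisation of $\phi_{m,m'}$), every term contributes $|\lambda_j|$ and the total is $2\lambda$. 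Combined with the corollary this yields $\alpha\ge 2\lambda=\alpha_{\rm exp}$, proving minimality.

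The hard part will be the careful bookkeeping under the $S$ and $T$ isomorphisms; in particular the $\rho V$ terms and the $T$ map introduce transposes and complex conjugates of $V^{\rm int}_{l,j}$, which for products of Paulis differ from $V^{\rm int}_{l,j}$ only by $\pm 1$ signs and therefore leave all orthogonality relations supplied by Condition~\ref{condition:1} intact. A secondary subtlety is the edge case $|\mc S|=1$, where $V^{\rm int}$ acts nontrivially on only one subsystem and should simply be absorbed into $H^{\rm loc}$; this falls outside the intended regime of PQS and can be noted as a degenerate case.
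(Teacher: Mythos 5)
Your proposal is correct and follows essentially the same route as the paper's proof: it invokes the Choi-state trace-norm lower bound $\alpha\ge\max_l\|\bar\psi^{\rm int}_{l,l'}\|_1$, expands $\bar\psi^{\rm int}_{l,l'}$ into the $2N$ rank-one terms of the explicit decomposition, and uses the two parts of Condition~\ref{condition:1} (tracelessness to decouple the $V\rho$ and $\rho V$ families, pairwise orthogonality to separate distinct $j$) to recognise this as a singular value decomposition with trace norm $2\sum_j|\lambda_j|$, matching the explicit cost. Your extra remarks on the transpose/conjugation signs for Pauli products and the degenerate $|\mc S|=1$ case are consistent with (and slightly more careful than) the paper's argument.
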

\begin{proof}
We assume that $V^{\rm int}$ acts nontrivially on all the systems. The proof is similar to the general case.
The Choi state of the interaction part $\tilde{\mc V}^{\rm int}[\rho] = - iV^{\rm int}\rho+i\rho V^{\rm int}$ is
\begin{equation}
    \tilde \phi^{\rm int}_{1,1',\dots,L,L'} = \tilde{\mc V}^{\rm int}\left[\bigotimes_l \phi_{l,l'}\right] = -i\sum_j\lambda_j\bigg[ \bigotimes_l V^{\rm int}_{l,j}\ket{\phi}_{l,l'}\bra{\phi}_{l,l'}-\bigotimes_l \ket{\phi}_{l,l'}\bra{\phi}_{l,l'}V^{\rm int}_{l,j}\bigg].
\end{equation}
Focusing on $\tilde \psi^{\rm int}_{1,1}$ for example, we have
\begin{equation}
    \tilde \psi^{\rm int}_{1,1'} =  -i\sum_j\lambda_j\bigg[ V^{\rm int}_{1,j}\ket{\phi}_{1,1'}\otimes \ket{\phi}_{1,1'}\bigotimes_{l\ge 2} \bra{\phi}_{l,l'}(V^{\rm int}_{l,j})^T\otimes \bra{\phi}_{l,l'}-\ket{\phi}_{1,1'}\otimes (V^{\rm int}_{1,j})^T\ket{\phi}_{1,1'}\bigotimes_l \bra{\phi}_{l,l'}\otimes \bra{\phi}_{l,l'}V^{\rm int}_{l,j}\bigg].
\end{equation}
Denoting 
\begin{equation}
    \begin{aligned}
            \ket{\psi_{1,j,a}} &= V^{\rm int}_{1,j}\ket{\phi}_{1,1'}\otimes \ket{\phi}_{1,1'},\\
            \ket{\psi_{1,j,b}} &= \ket{\phi}_{1,1'}\otimes (V^{\rm int}_{1,j})^T\ket{\phi}_{1,1'},\\
            \bra{\psi_{l,j,a}} &= \bra{\phi}_{l,l'}(V^{\rm int}_{l,j})^T\otimes \bra{\phi}_{l,l'},\\
            \bra{\psi_{l,j,b}} &= \bra{\phi}_{l,l'}\otimes \bra{\phi}_{l,l'}V^{\rm int}_{l,j},\\
    \end{aligned}
\end{equation}
we can express $\tilde \psi^{\rm int}_{1,1}$ as
\begin{equation}\label{Eq:expansionpsiint11}
    \tilde \psi^{\rm int}_{1,1'} =  -i\sum_j \lambda_jc\bigg[\ket{\psi_{1,j,a}}\bigotimes_{l\ge 2}\bra{\psi_{l,j,a}} - \ket{\psi_{1,j,b}}\bigotimes_{l\ge 2}\bra{\psi_{l,j,b}}\bigg].
\end{equation}
When $\{V_{l,j}\}$ satisfies Condition~\ref{condition:1}, elements in $\{\ket{\psi_{1,j,a}},\ket{\psi_{1,j,b}}\}$ are mutually orthogonal, i.e., 
\begin{equation}
\begin{aligned}
    \braket{\psi_{1,j,x}|\psi_{1,j',y}} &= 0, \forall j\neq j'\,\,\textrm{or}\,\,x\neq y,\, x,y\in\{a,b\}.
\end{aligned}
\end{equation}
Similarly elements in $\{\bra{\psi_{l,j,a}}, \bra{\psi_{l,j,b}}\}$ are mutually orthogonal. Therefore,  \autoref{Eq:expansionpsiint11} is a singular value decomposition of $\tilde \psi^{\rm int}_{1,1}$ and we have
\begin{equation}
    \|\tilde \psi^{\rm int}_{1,1'}\|_1 = 2\sum_{j}|\lambda_j|.
\end{equation}
The above proof holds for all other $\tilde \psi^{\rm int}_{l,l'}$.

\end{proof}

Here, we give several examples of $V^{\rm int}$ that satisfy the condition. First, the condition is satisfied when there is only one interaction term $V^{\rm int}_1$.

\begin{corollary}
The decomposition is optimal when $V^{\rm int}$ only has one term (a tensor product of Pauli matrices).
\label{corollary:one_term}
\end{corollary}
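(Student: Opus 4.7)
The plan is to derive this corollary as an immediate consequence of Theorem~\ref{thm:min_cost} by verifying that a single-term interaction automatically satisfies Condition~\ref{condition:1}. Writing $V^{\rm int} = \lambda_1 V^{\rm int}_1$ with $V^{\rm int}_1 = \bigotimes_l V^{\rm int}_{l,1}$, the first step is to observe that the orthogonality requirement $\tr[V^{\rm int}_{l,j} V^{\rm int}_{l,j'}] = 0$ for $j \neq j'$ is vacuously satisfied, since only the single index $j=1$ appears in the sum.

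Next I would identify the nontrivial support $\mc S$ as the set of subsystems on which $V^{\rm int}_{l,1}$ is not the identity. For each $l \in \mc S$, the local factor $V^{\rm int}_{l,1}$ is a nontrivial tensor product of Pauli matrices, which must contain at least one factor from $\{X,Y,Z\}$. Since the single-qubit Pauli operators $X,Y,Z$ are traceless and the trace is multiplicative across tensor factors, it follows that $\tr[V^{\rm int}_{l,1}] = 0$ for every $l \in \mc S$, which verifies the first part of Condition~\ref{condition:1}.

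With Condition~\ref{condition:1} confirmed, Theorem~\ref{thm:min_cost} applies directly and yields that the explicit decomposition attains the minimal simulation overhead $C = e^{2T|\lambda_1|}$ among all valid decomposition strategies. The main (and only) subtlety to watch for is the treatment of trivial subsystems $l \notin \mc S$, where $V^{\rm int}_{l,1} = I$ so tracelessness would fail; but this is precisely why the theorem's hypothesis is phrased relative to the nontrivial support $\mc S$, so the restriction is automatic and no genuine obstacle arises. The corollary is therefore essentially a tautological specialization of the preceding theorem.
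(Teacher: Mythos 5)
Your proposal is correct and matches the paper's approach: the paper presents this corollary as an immediate instance of Theorem~\ref{thm:min_cost}, noting that a single interaction term satisfies Condition~\ref{condition:1} (the orthogonality clause being vacuous and tracelessness following because each nontrivial local factor is a non-identity Pauli product). Your explicit verification of the two clauses, including the remark about restricting to the nontrivial support $\mc S$, is exactly the reasoning the paper leaves implicit.
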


The Condition~\ref{condition:1} could also hold when the interaction $V^{\rm int}$ has multiple terms. For example, consider three subsystems and denote $(l,m)$ to be the $m$th qubit of the $l$th subsystem. The following example interactions satisfy  Condition~\ref{condition:1}. 
\begin{equation}
\begin{aligned}
    V^{\rm int} =& aX_{1,1}\cdot X_{2,1}\cdot X_{3,1} + bX_{1,2}\cdot X_{2,2}\cdot X_{3,2} + cX_{1,3}\cdot X_{2,3}\cdot X_{3,3},\\
    V^{\rm int} =& aX_{1,1}\cdot X_{2,1}\cdot X_{3,1} + bY_{1,1}\cdot Y_{2,1}\cdot Y_{3,1} + cZ_{1,1}\cdot Z_{2,1}\cdot Z_{3,1},\\
    V^{\rm int} =& aX_{1,1}Y_{1,2}\cdot X_{2,1}Z_{2,2}\cdot Y_{3,1}Y_{3,2}+bX_{1,1}Z_{1,2}\cdot Z_{2,1}Z_{2,2}\cdot X_{3,1}Y_{3,1}\\
                &+cZ_{1,1}Y_{1,2}\cdot X_{2,1}Y_{2,2}\cdot Z_{3,1}Y_{3,2}+dZ_{1,1}Z_{1,2}\cdot Z_{2,1}Y_{2,2}\cdot X_{3,1}Z_{3,1}.
\end{aligned}
\end{equation}

Since Condition~\ref{condition:1} requires that the interaction terms of each subsystem are mutually orthogonal, it limits the number of interaction terms.  
\begin{proposition}
When $V^{\rm int} = \sum_{j=1}^{N_{\rm int}}\bigotimes_{l}V^{\rm int}_{l,j}$ satisfies Condition~\ref{condition:1} with $N_{\rm int}$ being the number of terms and $n$ being the minimal number of qubits of each subsystem. Suppose the minimal weight of each $V^{\rm int}_{l,j}$ is $k$, and we have
\begin{equation}
    N_{\rm int} \le 3^k{n\choose k}.
\end{equation}
\end{proposition}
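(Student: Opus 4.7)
The plan is to reduce the bound on $N_{\rm int}$ to a counting statement about distinct Pauli strings on a single subsystem. First I would fix a subsystem $l$ on which $V^{\rm int}$ acts nontrivially, choosing in particular one of the minimum size $n$. Condition~\ref{condition:1}, restricted to this subsystem, states that every $V^{\rm int}_{l,j}$ is traceless and that $\tr[V^{\rm int}_{l,j} V^{\rm int}_{l,j'}]=0$ for $j\ne j'$. Since each $V^{\rm int}_{l,j}$ is a tensor product of single-qubit Paulis, i.e.\ an element of $\{I,X,Y,Z\}^{\otimes n}$, and distinct Pauli strings are already Hilbert--Schmidt orthogonal while coinciding ones would give $\tr[V V]=2^n\ne 0$, these two conditions together force $\{V^{\rm int}_{l,j}\}_{j=1}^{N_{\rm int}}$ to consist of $N_{\rm int}$ pairwise distinct non-identity Pauli strings on $n$ qubits.

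Next I would apply the weight hypothesis. Interpreting ``the minimal weight of each $V^{\rm int}_{l,j}$ is $k$'' as saying that the restriction to the minimum-size subsystem $l$ has exactly $k$ non-identity tensor factors, the distinct strings $V^{\rm int}_{l,j}$ live inside the stratum of weight-$k$ Pauli strings on $n$ qubits. That stratum has cardinality $\binom{n}{k}\cdot 3^k$, enumerated by first choosing the $k$ supporting sites out of $n$ and then selecting one of $\{X,Y,Z\}$ at each chosen site; therefore $N_{\rm int}\le 3^k\binom{n}{k}$.

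The proof is essentially a packaging of Pauli-basis orthogonality with elementary combinatorics, so I do not anticipate a genuinely hard analytic step. The only piece of bookkeeping is that one must pick $l$ to be a subsystem of minimum size $n$ to obtain the stated bound, since selecting a larger subsystem would grow the weight-$k$ pool to $3^k\binom{n'}{k}\ge 3^k\binom{n}{k}$ and weaken the inequality. The case in which $V^{\rm int}$ acts only on a proper subset $\mc S\subsetneq\{1,\dots,L\}$ of subsystems is immediate by restricting the entire argument to $\mc S$ and taking $l\in\mc S$ of minimum size.
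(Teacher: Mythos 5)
Your argument is correct, and it is worth noting that the paper itself states this proposition without any proof (the surrounding text only gives the $k=1$ specialisation), so there is no authorial proof to compare against; your counting argument is the natural one and is exactly what the paper's remark ``when $k=1$\ldots at most $3n$ terms'' presupposes. The two parts of Condition~\ref{condition:1} on a fixed subsystem $l\in\mc S$ indeed force the strings $V^{\rm int}_{l,j}$ to be pairwise distinct, non-identity Pauli strings (distinct strings are Hilbert--Schmidt orthogonal, equal ones give $\tr[V^2]=2^{n}\neq 0$, and the identity is excluded by tracelessness), and confining them to the weight-$k$ stratum on a minimum-size subsystem gives $N_{\rm int}\le 3^k\binom{n}{k}$.

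One point deserves emphasis beyond what you wrote: your interpretive choice that each $V^{\rm int}_{l,j}$ has weight \emph{exactly} $k$ on the chosen subsystem is not merely convenient but necessary, because under the literal reading ``the minimal weight among the $V^{\rm int}_{l,j}$ equals $k$'' (i.e.\ weights $\ge k$ with equality attained) the stated bound is false: with two subsystems of $n=2$ qubits each, letting the $V^{\rm int}_{1,j}$ and $V^{\rm int}_{2,j}$ run over all $15$ non-identity two-qubit Pauli strings satisfies Condition~\ref{condition:1} with minimal weight $k=1$, yet $N_{\rm int}=15>3^1\binom{2}{1}=6$. So your proof is sound for the reading consistent with the paper's $k=1$ example (uniform weight $k$ on the smallest subsystem), and it implicitly identifies the imprecision in the proposition's hypothesis; also, as you note, one must take $n$ to be the minimal size among subsystems in $\mc S$ on which $V^{\rm int}$ acts nontrivially for the restriction argument to go through.
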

\noindent In particular, when $k=1$, i.e., the interaction on each subsystem only act on one qubit, Condition~\ref{condition:1} requires $V^{\rm int}$ to have at most $3n$ terms. A more detailed condition for an optimal decomposition could be an interesting future work. 

\subsection{Discussion}

The major limitation of PQS is the sample cost.
In particular, the variance in our method increases exponentially with the total coupling strength, we thus need exponential samples for systems with large couplings. Such limitations are indeed fundamental, which would generally appear whenever we try to approximate larger systems with smaller ones, in a similar sense to the hardness of classical simulation of arbitrary quantum systems. Nevertheless, since PQS directly simulates the interacting dynamics, it has a smaller overhead compared with other recent works that simulate clustered Hamiltonians or circuits using the gate decomposition~\cite{peng2020simulating,barratt2020parallel,mitarai2021constructing,fujii2020deep,yuan2020quantum,Mitarai2021overheadsimulating}.

In addition, one could consider implementing finite-order Dyson series expansion to reduce the sampling overhead at the expense of increasing simulation error. As discussed in \autoref{appendix:Dyson}, there is a trade-off between the sampling cost and the simulation error.

\section{Dyson series method}
\label{appendix:Dyson}
We show in this section that the above explicit expansion method could be reformulated via the Dyson series expansion. 

\subsection{Method description}
We first introduce the general method. 
Consider the time evolution with Hamiltonian $H = H^{\rm loc} + V^{\rm int}$,
\begin{equation}
\left[H^{\rm loc} + V^{\rm int}\right]|\psi(t)\rangle=i  \frac{\partial|\psi(t)\rangle}{\partial t},
\end{equation}
with $H^{\rm loc} = \sum_lH_l$ and $V^{\rm int}=\sum_j \lambda_j V_j^{\rm int} = \sum_j \lambda_j \prod_l V_{l,j}^{\rm int}$. It becomes 
\begin{equation}
\lambda e^{{i} H^{\rm loc}\left(t-t_{0}\right)} V^{\rm int} e^{-{i} H^{\rm loc}\left(t-t_{0}\right)}\left|\psi_{I}(t)\right\rangle=i  \frac{\partial\left|\psi_{I}(t)\right\rangle}{\partial t}
\end{equation}
under the interaction picture with
\begin{equation}
|\psi(t)\rangle=e^{-{i} H^{\rm loc}\left(t-t_{0}\right)}\left|\psi_{I}(t)\right\rangle.
\end{equation}
A solution with Dyson series is
\begin{equation}
\begin{aligned}
	\left|\psi_{I}(t)\right\rangle=&\bigg[1-{i} \int_{t_{0}}^{t} d t_{1} e^{{i} H^{\rm loc}\left(t_{1}-t_{0}\right)} V^{\rm int} e^{-{i} H^{\rm loc}\left(t_{1}-t_{0}\right)}\\
	&-\int_{t_{0}}^{t} d t_{1} \int_{t_{0}}^{t_{1}} d t_{2} e^{{i} H^{\rm loc}\left(t_{1}-t_{0}\right)} V^{\rm int} e^{-{i} H^{\rm loc}\left(t_{1}-t_{0}\right)} e^{{i} H^{\rm loc}\left(t_{2}-t_{0}\right)} V^{\rm int} e^{-{i} H^{\rm loc}\left(t_{2}-t_{0}\right)}+\ldots\bigg]\left|\psi\left(t_{0}\right)\right\rangle
\end{aligned}
\end{equation}
To measure any observable $O=\bigotimes_l O_l$ with $O_I=e^{{i} H^{\rm loc}\left(t-t_{0}\right)} O e^{-{i} H^{\rm loc}\left(t-t_{0}\right)}$, we have
\begin{equation}
	\begin{aligned}
		\braket{\psi_{I}(t)|O_I|\psi_{I}(t)} = & \braket{\psi\left(t_{0}\right)|e^{{i} H^{\rm loc}\left(t-t_{0}\right)} O e^{-{i} H^{\rm loc}\left(t-t_{0}\right)}|\psi\left(t_{0}\right)},\\
		&-2(t-t_0) \int_{t_{0}}^{t} \frac{d t_{1}}{t-t_0}  \Re\bigg[{i } \braket{\psi\left(t_{0}\right)| e^{{i} H^{\rm loc}\left(t-t_{0}\right)} O e^{-{i} H^{\rm loc}\left(t-t_{1}\right)} V^{\rm int} e^{-{i} H^{\rm loc}\left(t_{1}-t_{0}\right)}|\psi\left(t_{0}\right)}\bigg],\\
	\end{aligned}
\end{equation}
up to the first order expansion. Suppose $\ket{\psi(t_0)}=\ket{\psi_1(t_0)}\dots \ket{\psi_L(t_0)}$, the first term is
\begin{equation}
    \braket{\psi\left(t_{0}\right)|e^{{i} H^{\rm loc}\left(t-t_{0}\right)} O e^{-{i} H^{\rm loc}\left(t-t_{0}\right)}|\psi\left(t_{0}\right)} = \prod_l \braket{\psi_l\left(t_{0}\right)|e^{{i} H_l\left(t-t_{0}\right)} O e^{-{i} H_l\left(t-t_{0}\right)}|\psi_l\left(t_{0}\right)},
\end{equation}
where each term can be easily measured by evolving each subsystem state with the Hamiltonian $H_l$ and measure $O_l$. To measure the second term, we can uniformly sample $t_1$ from $t_0$ to $t$ and use the following circuit 
\begin{align*}
\Qcircuit @C=0.8em @R=1.2em {
\lstick{(\ket{0}+\ket{1})/\sqrt{2}}&\qw&\ctrlo{1}&\qw& \measureD{X, Y}\\
\lstick{\ket{\psi_l\left(t_{0}\right)}}&\gate{e^{-{i} H_l\left(t_{1}-t_{0}\right)}}&\gate{V_{l,j}^{\rm int}}&\gate{e^{-{i} H_l\left(t-t_{1}\right)}}&\measureD{O}\\
}
\end{align*}
to get $\braket{O}_{l,j, t_1}=\braket{\psi_l\left(t_{0}\right)| e^{{i} H_l\left(t-t_{0}\right)} O_l e^{-{i} H_l\left(t-t_{1}\right)} V^{\rm int}_{l,j} e^{-{i} H_l\left(t_{1}-t_{0}\right)}|\psi\left(t_{0}\right)}$. Then we have
\begin{equation}
    \braket{\psi\left(t_{0}\right)| e^{{i} H^{\rm loc}\left(t-t_{0}\right)} O e^{-{i} H^{\rm loc}\left(t-t_{1}\right)} V^{\rm int} e^{-{i} H^{\rm loc}\left(t_{1}-t_{0}\right)}|\psi\left(t_{0}\right)} = \sum_j \lambda_j \braket{O}_{1,j,t_1}\braket{O}_{2,j,t_1}\cdots \braket{O}_{L,j,t_1}.
\end{equation}
We can also randomly measure the first term or the second term, as well as each term of the above summation. The cost is now 
\begin{equation}
    C_1 = 1 + 2T\lambda,
\end{equation}
with $\lambda = \sum_j |\lambda|$.
Due to the first order expansion, the approximation error is 
\begin{equation}
    \varepsilon_1 = \mc O\left(e^{|V^{\rm int}|T}(|V^{\rm int}|T)^{2}\right).
\end{equation}
We can similarly consider expansion to the $k$th order, then the cost and the expansion error are
\begin{equation}
    \begin{aligned}
    C_k &= \sum_{n=0}^k (2T\lambda)^n/n!,\\
    \varepsilon_k &= \mc O\left(e^{|V^{\rm int}|T}(|V^{\rm int}|T)^{k+1}/(k+1)!\right).
    \end{aligned}
\end{equation}
With the limit of $k\rightarrow\infty$, we have
\begin{equation}
    \begin{aligned}
    C_{\infty} &= e^{2T\lambda},\\
    \varepsilon_{\infty} &= 0.
    \end{aligned}
\end{equation}
In this case, the cost is the same as the one for the explicit expansion. 
In the next subsection, we show that they are actually equivalent.
Several posted work has proposed the quantum simulation with truncated Dyson series on a universal quantum computer in Refs.~\cite{kieferova2019simulating,chen2021quantum}.

\subsection{Relation to the perturbative quantum simulation method}


The algorithm using Dyson series implements each expanded term (trajectory) with a quantum computer and sums over the expansion via the average of different trajectories. This is very similar to the above perturbative quantum simulation method. We show that they are actually equivalent. 
\begin{theorem}
\label{thm:dyson}
The infinite-order Dyson series method is equivalent to the perturbative quantum simulation method with the explicit decomposition. 
\end{theorem}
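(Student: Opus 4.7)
The plan is to show that expanding the PQS Monte Carlo formula of Eq.~\eqref{Eq:explicitwhole} in the limit $\delta t\to 0$ reproduces term-by-term the Dyson expansion of $\mathcal{U}(T)[\rho]=U(T)\rho U(T)^{\dagger}$. Both methods represent the joint channel as a weighted sum over trajectories in which local evolutions $\bigotimes_l U_l$ are punctuated by insertions of the operators $V_j^{\rm int}$; the equivalence is then a matter of matching coefficients, operator placements, and time orderings.

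First I would rewrite the Dyson series for the density matrix. In the interaction picture, $U(T)=U^{\rm loc}(T)\,\mathcal{T}\exp\!\bigl[-i\int_0^T V_I^{\rm int}(s)\,ds\bigr]$ with $U^{\rm loc}(T)=\bigotimes_l U_l(T)$ and $V_I^{\rm int}(s)=U^{\rm loc}(s)^{\dagger}V^{\rm int}U^{\rm loc}(s)$. Conjugating and expanding both exponentials yields a double sum with $n$ time-ordered left insertions (each carrying a factor $-i$) and $m$ anti-time-ordered right insertions (each carrying $+i$). Substituting $V^{\rm int}=\sum_j\lambda_j\bigotimes_l V_{l,j}^{\rm int}$, the integrand factorises over subsystems, so every Dyson trajectory is labelled by a set of jump times, a side (left or right) for each jump, and an index $j$ selecting $V_j^{\rm int}$.

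Next I would expand the PQS formula step by step. Using the explicit decomposition of Eq.~\eqref{Eq:Hdecom}, at each Trotter slice one of three kinds of terms is selected: $\mathcal{I}$ (no jump), $-i\delta t\,\lambda_j V_j^{\rm int}\rho$ (left jump of type $j$), or $+i\delta t\,\lambda_j\rho V_j^{\rm int}$ (right jump of type $j$). The factors $\tilde U_{l,k_i}$ and $\tilde V_{l,k_i}$ in Eq.~\eqref{Eq:explicitwhole} are therefore precisely the left/right Pauli insertions on subsystem $l$ at the $i$-th step. As $\delta t\to 0$ the sum over discrete step labels becomes a time-ordered integral with measure $\delta t\to ds$, the Trotter error $\mathcal{O}(T\delta t)$ vanishes, and summing over the number of jumps generates the full Dyson double expansion. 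Matching term by term gives identical coefficients $(-i\lambda_j)^n(+i\lambda_j)^m$, identical per-subsystem measurement circuits (interaction-picture local evolutions interleaved with $V_{l,j}^{\rm int}$ insertions), and an identical overall sampling overhead $C=e^{2T\lambda}$.

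The main obstacle is keeping the combinatorics straight: one must verify that the product over subsystems in the PQS expression, in which every subsystem sees the \emph{same} sequence of jump times, sides and indices $j$, correctly reconstructs the operator products $V_{j_1}^{\rm int}\cdots V_{j_n}^{\rm int}$ appearing inside the Dyson integrand together with their Hermitian conjugates on the right. This amounts to interchanging the product over $l$ with the insertion operations, which is permitted because each $V_j^{\rm int}$ has the tensor-product structure $\bigotimes_l V_{l,j}^{\rm int}$ and left (respectively right) multiplications on different subsystems commute. Once this identification is carried out, the two expansions coincide identically and the theorem follows.
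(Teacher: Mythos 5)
Your proposal is correct, and it reaches the same identity as the paper but by a slightly different route. The paper first passes to a pure input state, observes that the explicit decomposition factorises at the amplitude level as $\bigl(I-i\delta t\,V^{\rm int}\bigr)\ket{\psi}\bra{\psi}\bigl(I+i\delta t\,V^{\rm int}\bigr)+O(\delta t^2)$, and therefore only needs to match the \emph{one-sided} expansion $U(T)\ket{\psi(t_0)}=\bigl[e^{-iH^{\rm loc}\delta t}(I-i\delta t\,V^{\rm int})\bigr]^{T/\delta t}\ket{\psi(t_0)}+O(\delta t^2)$ against the ket-level Dyson series, the bra side following by conjugation. You instead stay at the superoperator level and match the PQS trajectory sum against the double (left/right-insertion) Dyson expansion of $U(T)\rho\,U(T)^{\dagger}$, which forces you to track the interleaving of left and right jumps drawn from a single time sequence; this works because a left and a right insertion in the same Trotter slice is an $O(\delta t^2)$ event (equivalently, the coincidence set has measure zero in the continuum limit), so the interleaved sum factorises into independent time-ordered integrals on each side — a point worth stating explicitly. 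Your resolution of the subsystem bookkeeping (interchanging the product over $l$ with the insertions, using $V_j^{\rm int}=\bigotimes_l V_{l,j}^{\rm int}$ and commutation of operators on distinct subsystems) is sound and is implicit in the paper as well. What the paper's pure-state reduction buys is brevity and avoidance of the left/right combinatorics; what your version buys is a direct channel-level statement that does not need the pure-state intermediary, at the cost of the extra interleaving argument. One cosmetic slip: the coefficient of a trajectory with left indices $j_1,\dots,j_n$ and right indices $j_1',\dots,j_m'$ is $\prod_a(-i\lambda_{j_a})\prod_b(+i\lambda_{j_b'})$, not $(-i\lambda_j)^n(+i\lambda_j)^m$ with a single $j$.
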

\begin{proof}
To see the equivalence, we first consider a pure state formalism for the perturbative quantum simulation method with the explicit decomposition. Suppose the interaction term $\mc V^{\rm int}(\delta t)$ is decomposed as follows,
\begin{equation}\label{Eq:Vint}
\begin{aligned}
	\mc V^{\rm int}(\delta t)[\rho] = \mc I(\rho) - i\delta t\sum_j \lambda_j (V_j^{\rm int}\rho-\rho V_j^{\rm int})+ O(\delta t^2).
\end{aligned}
\end{equation}
Suppose $\rho$ is a pure state $\rho=\ket{\psi}\bra{\psi}$,  then 
\begin{equation}\label{Eq:Vint_pure}
\begin{aligned}
	\mc V^{\rm int}(\delta t)[\ket{\psi}\bra{\psi}] =& \mc I(\rho) - i\delta t\sum_j \lambda_j (V_j^{\rm int}\ket{\psi}\bra{\psi}-\ket{\psi}\bra{\psi}V_j^{\rm int})+ O(\delta t^2),\\
	=& \bigg(I-i\delta t\sum_j \lambda_j V_j^{\rm int}\bigg) \ket{\psi}\bra{\psi} \bigg(I+i\delta t\sum_j \lambda_j V_j^{\rm int}\bigg)+ O(\delta t^2).
\end{aligned}
\end{equation}
Then the whole time evolution with a pure input state $\ket{\psi(t_0)}$ is
\begin{equation}\label{Eq:pureexpand}
\begin{aligned}
	U(T)\ket{\psi(t_0)} = \bigg[
	e^{-iH^{\rm loc}\delta t}\bigg(I-i\delta t V^{\rm int}\bigg)\bigg]^{{T/\delta t}} \ket{\psi(t_0)}+ O(\delta t^2),
\end{aligned}
\end{equation}
and we have
\begin{equation}
    \mc U(T)[\psi(t_0)] = U(T)\ket{\psi(t_0)}\bra{\psi(t_0)}U(T)^\dag.
\end{equation}
Then each expanded term in $\mc U(T)[\psi(t_0)]$ corresponds to the expanded terms in $U(T)\ket{\psi(t_0)}$ and $\bra{\psi(t_0)}U(T)^\dag$. 
Now we expand the product of  \autoref{Eq:pureexpand} and group the terms according to the number of $V^{\rm int}$ as
\begin{equation}
\begin{aligned}
	U(T)\ket{\psi(t_0)} =& \bigg[e^{-iH^{\rm loc}T} -i\delta t \sum_{i = 1}^{T/\delta t}e^{-iH^{\rm loc}(T-i \delta t)}V^{\rm int}e^{-iH^{\rm loc}i\delta t}  ,\\
	&-\delta t^2 \sum_{i_1\ge i_2 = 1}^{T/\delta t}e^{-iH^{\rm loc}(T-i_1 \delta t)}V^{\rm int}e^{-iH^{\rm loc}(i_1-i_2)\delta t}V^{\rm int}e^{-iH^{\rm loc}i_2\delta t}\bigg] \ket{\psi(t_0)}+ O(\delta t^2).
\end{aligned}
\end{equation}
Multiplying $e^{iH^{\rm loc}T}$ and taking the limit of $\delta t\rightarrow 0$ we have
\begin{equation}
\begin{aligned}
	\lim_{\delta t\rightarrow 0}e^{iH^{\rm loc}T}U(T)\ket{\psi(t_0)} =& \bigg[1-{i} \int_{t_{0}}^{T} d t_{1} e^{{i} H^{\rm loc}\left(t_{1}-t_{0}\right)} V^{\rm int} e^{-{i} H^{\rm loc}\left(t_{1}-t_{0}\right)}\\
	&-\int_{t_{0}}^{T} d t_{1} \int_{t_{0}}^{t_{1}} d t_{2} e^{{i} H^{\rm loc}\left(t_{1}-t_{0}\right)} V^{\rm int} e^{-{i} H^{\rm loc}\left(t_{1}-t_{0}\right)} e^{{i} H^{\rm loc}\left(t_{2}-t_{0}\right)} V^{\rm int} e^{-{i} H^{\rm loc}\left(t_{2}-t_{0}\right)}+\ldots\bigg]\left|\psi\left(t_{0}\right)\right\rangle,
\end{aligned}
\end{equation}
which coincides with the Dyson series expansion. 
\end{proof}

We remark that the expansion is universal and avoids the  computational cost in diagrammatic perturbation theory. 
The algorithm with explicit decomposition in  \autoref{Eq:Hdecom} effectively implements each expanded term (trajectory) and realise the joint time evolution by summing over the expansion via the average of different trajectories.




\section{Higher-order moments analysis}
In the above discussion, we showed how to use the perturbative quantum simulation method to get linear observable measurement. In this section, we show that the PQS method applies to measurement on higher-order moments of the state. We take the subsystem purity as an example, and we note that the result applies to general measurements. Without loss of generality, we consider the purity $\tr [\rho_1^2(T)]$ of the first subsystem, and we denote the set without the first system as $\mc S = \{2,..,L\}$. Following the  PQS method with the explicit expansion in  \autoref{Eq:explicitwhole}, we have
\begin{equation}
    \mc U(T)\bigg[\bigotimes_l \rho_l\bigg] = C\sum_{\mathbf k}e^{i\theta_{\mathbf k}}p_{\mathbf k}\bigotimes_l \left[\tilde U_{l,k_{T/\delta t}}U_l(\delta t)\dots \tilde U_{l,{k_1}}U_l(\delta t) \rho_l U^\dag_l(\delta t)\tilde V_{l,k_1}\dots U^\dag_l(\delta t)\tilde V_{l,k_{T/\delta t}}\right],
\end{equation}
where the input state is $\bigotimes_l \rho_l$, $\tilde U_{l,k}$ and  $\tilde V_{l,k}$ are either $I$ or $V^{\rm int}_{l,j}$.
Now we calculate the reduced density matrix of the first subsystem,
\begin{equation}
\begin{aligned}
	\rho_1(T)
 &=\tr_{\mc S}[\mc U(T)\bigg[\bigotimes_l \rho_l\bigg]],\\
 &=\sum_{\mathbf k} \beta_{\mathbf k} \tilde U_{1,k_{T/\delta t}}U_1(\delta t)\dots \tilde U_{1,{k_1}}U_1(\delta t) \rho_1 U^\dag_1(\delta t)\tilde V_{1,k_1}\dots U^\dag_1(\delta t)\tilde V_{1,k_{T/\delta t}},
	\end{aligned}
	\end{equation}
where
\begin{equation}
    \beta_{\mathbf k} = C e^{i\theta_{\mathbf k}}p_{\mathbf k}\prod_{l\in \mc S} \tr\left[\tilde U_{l,k_{T/\delta t}}U_l(\delta t)\dots \tilde U_{l,{k_1}}U_l(\delta t) \rho_l U^\dag_l(\delta t)\tilde V_{l,k_1}\dots U^\dag_l(\delta t)\tilde V_{l,k_{T/\delta t}}\right],
\end{equation}
which could be measured for each ${\mathbf k}$. For the purity of the first subsystem, we have

Therefore, we have 
\begin{equation}
\begin{aligned}
	\rho_1^2(T)
  = \sum_{\mathbf k,\mathbf k'} \beta_{\mathbf k}\beta_{\mathbf k'} &\tilde U_{1,k_{T/\delta t}}U_1(\delta t)\dots \tilde U_{1,{k_1}}U_1(\delta t) \rho_1 U^\dag_1(\delta t)\tilde V_{1,k_1}\dots U^\dag_1(\delta t)\tilde V_{1,k_{T/\delta t}} \\
  \cdot&\tilde U_{1,k'_{T/\delta t}}U_1(\delta t)\dots \tilde U_{1,{k'_1}}U_1(\delta t) \rho_1 U^\dag_1(\delta t)\tilde V_{1,k'_1}\dots U^\dag_1(\delta t)\tilde V_{1,k'_{T/\delta t}}.
\end{aligned}
\end{equation}
Suppose the initial state is pure $\rho_1=\ket{\psi_1}\bra{\psi_1}$, we have
\begin{equation}
\begin{aligned}
	\tr[\rho_1^2(T)]
  = \sum_{\mathbf k,\mathbf k'} \beta_{\mathbf k}\beta_{\mathbf k'} & \bra{\psi_1} U^\dag_1(\delta t)\tilde V_{1,k_1}\dots U^\dag_1(\delta t)\tilde V_{1,k_{T/\delta t}}\tilde U_{1,k'_{T/\delta t}}U_1(\delta t)\dots \tilde U_{1,{k'_1}}U_1(\delta t)\ket{\psi_1} \\
  \times& \bra{\psi_1} U^\dag_1(\delta t)\tilde V_{1,k'_1}\dots U^\dag_1(\delta t)\tilde V_{1,k'_{T/\delta t}}\tilde U_{1,k_{T/\delta t}}U_1(\delta t)\dots \tilde U_{1,{k_1}}U_1(\delta t)\ket{\psi_1}.
\end{aligned}
\end{equation}
We note that the two overlap terms could be evaluated with the circuits that are similar to the ones used for measuring linear observables.
In practice, we can use the Monte Carlo method to estimate the purity.
The sample complexity for the purity estimation is related to $ \sum_{\mathbf k,\mathbf k'}|\beta_{\mathbf k}\beta_{\mathbf k'}|\propto C^2$.

Other higher order moments can be derived similarly and we leave it to the dedicated readers.

\section{Limitations, applications, and comparisons}

In this section, we discuss limitations and potential applications   of the PQS algorithm. We also compare our method to existing hybrid embedding methods.

\subsection{Limitations and applications}

Since PQS is a hybrid method that combines quantum computing and classical perturbation theory, it inherits their advantages as well as their  limitations. The major limitation of PQS comes from the limitation of classical perturbation theories, which generally only work for weak interacted systems. For PQS, since we simulate the coupling of subsystems using the perturbation theory, it is efficient only if that coupling is weak, and thus the system that our method can simulate is limited. 
Specifically, our method cannot work for large systems with general arbitrary two-body interactions, such as strongly correlated electrons,  
high-dimensional strongly interacting lattice Hamiltonians, or scenarios where the timescale is long. 
The potential solution is discussed in \autoref{appendix:explicit}.
It is worth to note that similar limitations prevail in almost all modern classical computing methods apart from perturbation theory, such as density functional theory (DFT), quantum Monte Carlo (QMC), tensor and neural network methods, etc; yet, as we discuss below, these limitations do not prevent their wide applications for realistic problems.

While we have noted that our method is limited compared to a universal quantum simulation algorithm, we also need to point out that these universal quantum algorithms generally rely on a fault-tolerant quantum computer, which is still challenging to realize with the current technology. This explains why many current works in quantum computing focus on the so-called NISQ era {or the early stage of fault-tolerant quantum computing}, where both the size (number of qubits) and (circuit) depth of the quantum hardware are limited. As we elaborate below, our method does have broad applications from simulating intricate quantum many-body systems,  probing interesting physics phenomena, to benchmarking larger quantum processors for NISQ devices and early stage of fault-tolerant quantum computers.  


    Theoretically, our method combines the complementary strengths of quantum computing and perturbation theory, to respectively simulate the subsystem and the inter-subsystem interactions. PQS would be most powerful to investigate large systems with weak inter-subsystem interactions or intermediate systems with general interactions. Since there is no assumption on the subsystem interactions, locality of the inter-subsystem interactions, or the initial state of the subsystems, the method is widely applicable.

    The most promising and exciting application of perturbative quantum simulation is for clustered subsystems with weak subsystem-wise interactions. 
    Since all subsystem could have arbitrarily strong interactions, the whole system, in general, might not be efficiently solvable using classical approaches.
    Our methods can thus be used to efficiently study the dynamics of these systems.
    One prominent example is for simulating 1D systems, where we could easily divide the systems into clustered subsystems and have weak subsystem-wise interactions. Although, one may argue that, under certain assumptions (local and gapped Hamiltonians), the ground state of 1D systems could be efficiently solvable using matrix product states,  hence the ability of simulating 1D system is not surprising. 
    However, we need to point out that
    simulating the dynamics of general 1D systems is actually a very challenging task for classical methods (see  Ref.~\cite{kim2017holographic}).

    Aside from the physical systems featuring these geometrical cluster properties, PQS is also  applicable for systems with multiple  degrees of freedom.
    Indeed, quantum many-body systems that consist of both weak and strong correlations in different levels of the system could be suitable for our methods.
    For example, considering the Hamiltonian of molecules, we can divide the system into electrons and nuclei. Then we can separately simulate the two subsystems of the electrons and process their correlation classically to surpass the Born-Oppenheimer approximation. We may use the PQS method to investigate the dynamical correlations beyond the Born-Oppenheimer approximation.

 As we have demonstrated in our main text, we can apply PQS to study the quantum walk of bosons, dynamical phase transition, the propagation of correlations, and spin-charge separation of bosons, fermions, or spins. Apart from these applications, our method could also be applied for studying other more general dynamical behaviours, such as molecular reactions of the dimer, and the electron-phonon interaction in superconducting models. Following a recent study  of cluster simulation schemes~\cite{peng2020simulating}, our method might also be applicable to variational quantum simulation for molecular Hamiltonians. 
 With proper embedding methods, such as DFT, DMFT, or DMET, PQS might also be used to as a subroutine to probe physical problems of practical relevance, for example, ones in the thermodynamic limit. 
  
Furthermore, PQS would be helpful for studying general strongly interacting problems with short time and benchmarking near-term quantum computation. 
Using the PQS method, a larger problem with $NL$ qubits could be processed by a $N+1$ qubit quantum device. 
Since a smaller quantum device is generally much more accurate than a larger quantum processor due to crosstalks or other types of errors when controlling large quantum systems, our method could serve as a benchmark of the computing result for larger problems. This advantage was also clearly demonstrated using the IBM quantum cloud experiences in \autoref{appendix:IBM_implementation}. Therefore, when we construct a larger quantum hardware or aim to use it to demonstrate quantum advantages for solving a larger problem, we can first run our PQS with a smaller device to test the performance of the hardware and the  feasibility for faithful implementation.

\subsection{Comparison to hybrid embedding methods}

While quantum computing could potentially solve classically intractable problems, it  also has limitations in terms of  the circuit depth and   qubit number in the near future. Thus, it still remains a challenge to be able to solve practical problems using current and near-term quantum computers.  On the other hand, noticing the fact that most quantum many-body systems have mixed strong and weak correlation, we only need to solve the strongly correlated degrees of freedom using quantum computer and calculate the remaining part at a mean-field level using classical computational method. Along this line, several {hybrid} methods have been proposed by exploiting different classical methods, such as density matrix embedding theory (DMET)~\cite{DMET2012,DMET2013,Wouters16,rubin2016hybrid}, dynamical mean-field theory (DMFT)~\cite{Troyer16,rungger2020dynamical,Chen_2021}, 
density functional theory embedding~\cite{Ivano21}, quantum defect embedding theory~\cite{Ma20, sheng21},
tensor network~\cite{barratt2021parallel,yuan2020quantum}, entanglement forging~\cite{PRXQuantum.3.010309,huembeli2022entanglement}, virtual orbital approximation~\cite{PhysRevX.10.011004}, quantum Monte Carlo~\cite{Li_2019,Huggins_2022,2022arXiv220514903X,2022arXiv220509203M,2021arXiv211202190P,2022arXiv220610431Z}, etc. Most  hybrid methods have their own assumptions and specific applications, how to invent new hybrid methods with less stringent assumptions and more general applications still remain an open and exciting direction. In the following, we give a detailed comparison between our method and existing ones for the interested reader.

The hybrid methods listed above could be somehow understood as an embedding method. That is, with the help of certain classical means, we could solve a large problem only using a smaller quantum computer. Below, we provide some perspective on embedding methods.

\begin{itemize}
\item The methods exploiting density matrix embedding theory~\cite{DMET2012,DMET2013,Wouters16,rubin2016hybrid}, dynamical mean field theory~\cite{Troyer16,rungger2020dynamical,Chen_2021}, 
density functional theory embedding~\cite{Ivano21}, and quantum defect embedding theory~\cite{Ma20, sheng21} are more like conventional embedding methods, which approximates the solution by solving a self-consistent condition.

\item The methods exploiting tensor network~\cite{barratt2021parallel,yuan2020quantum}, entanglement forging~\cite{PRXQuantum.3.010309,huembeli2022entanglement}, and virtual orbital approximation~\cite{PhysRevX.10.011004} essentially introduces some ansatz that represents a larger quantum state with a smaller quantum computer and variationally solve the problem.

\item At last, the recently proposed quantum computing quantum Monte Carlo methods~\cite{Li_2019,Huggins_2022,2022arXiv220514903X,2022arXiv220509203M,2021arXiv211202190P,2022arXiv220610431Z} aims to stochastically realize the imaginary time evolution by a proper basis rotation generated from quantum circuits. 
\end{itemize}

The above methods aim to solve the eigenstate problems, and their successes rely on different assumptions. For instance, the hybrid methods exploiting density matrix embedding theory assumes the self-consistent (mean-field) condition via 1 reduced density matrices, which could not guarantee to find the true ground state. 
In DMET, one  match the 1 reduced density matrices of the interacting subsystem (fragment + bath) and the noninteracting system by tuning the external potential of the noninteracting system (as a variational parameter).
The assumption in DMET is that on the level of expectation value, the true ground state of the large system can be approximated by the ground state of a noninteracting system, which does not hold rigorously in general cases.
Besides, one can find that  only the static property of the ground state is matched, while the dynamical part is missing. Only in the extreme case it recovers the dynamical mean-field theory (DMFT), which is rigorous in the infinite dimension, while the phenomena usually emerge in the low dimension.

For the entanglement forging method, it assumes low entanglement between the subsystems and high expressivity of the circuit ansatz for local systems.
The representation essentially relies on the expressivity of the parametrized quantum circuit. This  could be insufficient for complex systems with large entanglement and circuit ans\"atze with limited expressivity.
Besides, in their demonstration of application in the ground state problem, it inherits the limitation of the optimization problem in variational algorithms (such as in VQE), while the problem size is reduced to half of the original problem. 

In what follows, we compare our method to these existing hybrid methods and highlight the differences.
An apparent difference is that we focus on quantum dynamics, a fundamentally different but meaningful problem that has wide applications and draws great interests from researchers from different areas. A way to efficiently simulate the dynamics of large quantum systems would be greatly helpful for understanding many interesting physics phenomena, such as phase transition, nonadiabatic evolution, or quench dynamics, etc. Our algorithm would be of particularly useful to understand dynamical effects of large quantum systems using near-term and early fault-tolerant quantum computers with limited qubit number and circuit depth.
    
Meanwhile, compared to the existing methods, the underlying mechanism is fundamentally different. Our algorithm relies on a stochastic implementation of the Dyson series expansion of the Hamiltonian evolution operator. Our algorithm is exact, in the sense that it directly simulates the dynamics of subsystems, and deterministically gives an unbiased estimation of the joint time evolution operator without any additional assumptions.
It does not rely on any self-consistent (mean-field) condition of embedding methods or variational optimization of parameterized ans\"atze of entanglement forging.

\section{Numerical Simulation}
\label{appendix:numerical}

In this section, we explore the concrete applications of our perturbative approach in simulating the dynamics of quantum many-body physics problems with operations on a small quantum computer or quantum simulator. We focus on the algorithm with the explicit decomposition, and we numerically test our algorithm in simulating several interacting physics with different topologies as examples. 
Fig.~\ref{fig:topo_SM} illustrates four different topological structures and the explicit partitioning strategies considered in this work.
In the following subsections, we show how to simulate different dynamics of the interacting systems using maximally $8+1$ qubits.

\begin{figure}[ht!]
\includegraphics[width =0.78\textwidth]{ 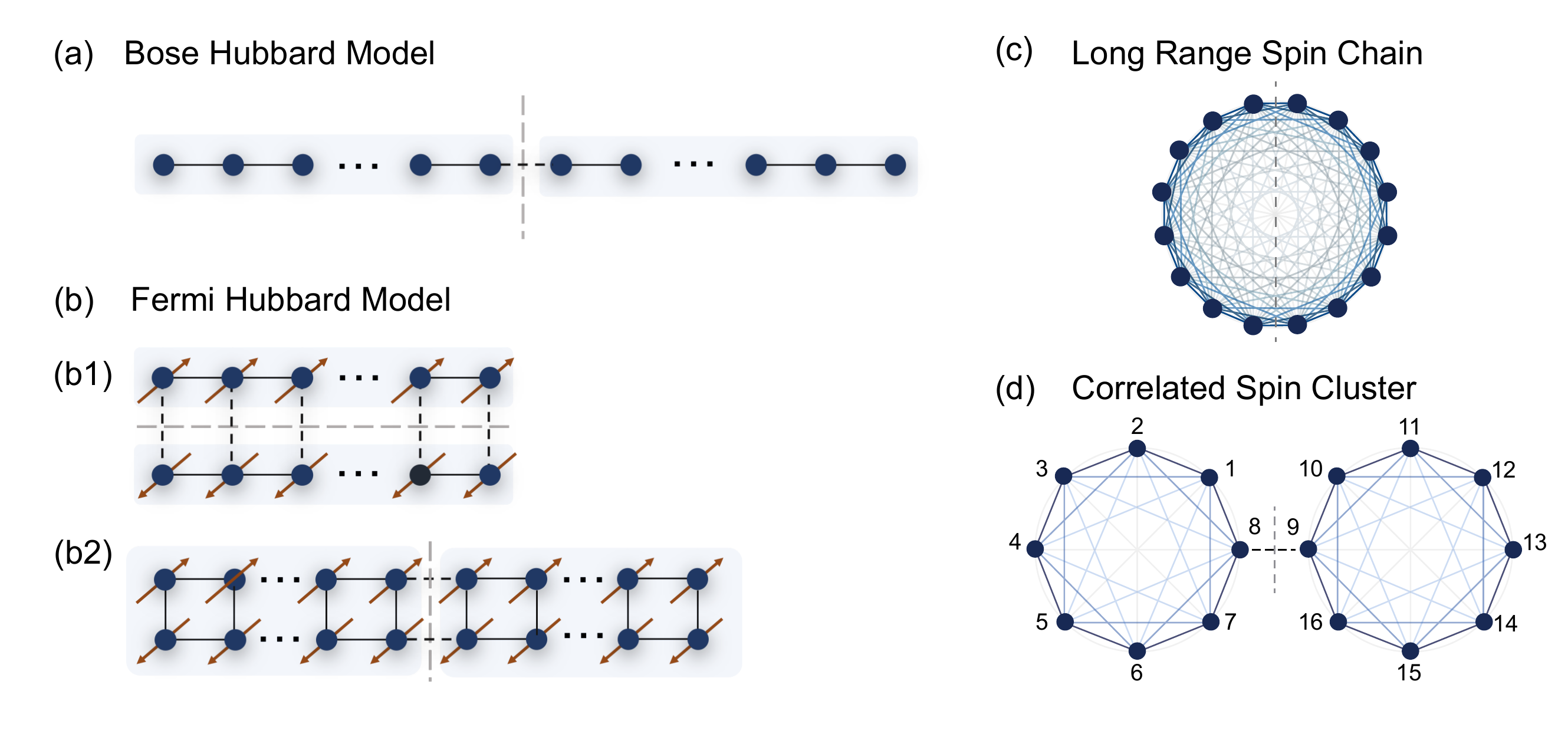}
 \caption{
Four different topological geometries and the partitioning strategies corresponding to Bose Hubbard model, Fermi Hubbard model, long-range spin chain and correlated spin cluster considered in this work.
}
 \label{fig:topo_SM}
\end{figure}

\subsection{Interacting bosons}
\label{appendix:qw}
We consider the physics of interacting spinless bosons on a lattice~\cite{Cazalilla.RevModPhys.83.1405}, which could be described by the extended Bose-Hubbard Hamiltonian
\begin{equation}
H=-\sum _{\left\langle i,j\right\rangle }t_{ij}{\hat {b}}_{i}^{\dagger }{\hat {b}}_{j}+{\frac {U}{2}}\sum _{i}{\hat {n}}_{i}\left({\hat {n}}_{i}-1\right)+  \sum _{i}h_i {\hat {n}}_{i},
\end{equation}
where 
$\hat {b}_{i}$ and $\hat {b}_{i}^{\dagger }$ are the bosonic annihilation and creation operators, ${\hat {n}}_{i}={\hat {b}}_{i}^{\dagger }{\hat {b}}_{i}$ gives the number of particles on the site $i$, $t_{ij}$ describes the hopping strength,  $U$ describes the on-site interaction, and $h_i$ is the on-site chemical potential that can be tuned in various quantum systems. 
The model reduces to the Bose-Hubbard model $H_{\rm BHM}$ when only nearest-neighbour hopping is allowed, i.e., $t_{ij}=\delta_{|i-j|,1}t$.
While the Bose-Hubbard model is not exactly solvable for finite values of $U$ and $t$, in the large $U$ limit $U/t \rightarrow + \infty$, this model reduces to the Tomonaga-Luttinger gas Hamiltonian, which describes the collective behaviour of hard-core bosons~\cite{Cazalilla.RevModPhys.83.1405}.
Using the Holstein and Primakoff transformation, the Bose-Hubbard model is mapped onto the XX spin chain model
\begin{equation}
H_{\rm BHM}=J\sum_{j}   \left(\hat{\sigma}_{j}^{x} \hat{\sigma}_{j+1}^{x}+\hat{\sigma}_{j}^{y} \hat{\sigma}_{j+1}^{y}\right) +\frac{1}{2} \sum_{j} {h_i} \left(\hat{\mathcal{I}}_{j} - \hat{\sigma}_{j}^{z} \right) 
\label{eq:BHM_SM}
\end{equation}
with $\hat{\sigma}_{j} $ representing the Pauli operator on the $j$th site and the effective interaction $J=-2t$.
The hard-core bosons  can also be related to the one-dimensional free spinless fermions using the Jordan-Wigner transformation. 

 
The quantum walks of the $1$D translationally invariant (i.e., the hopping strengths are the same $t_{ij} = t$) bosons were experimentally demonstrated in Ref.~\cite{yan2019strongly}. Their device system, a 12-qubit superconducting processor, can be well described by the hard-core boson Hamiltonian in  \autoref{eq:BHM_SM}. 
In our numerical simulation, we break the translational invariance and investigate the collective excitations including the density distribution and correlations of bosons with several reduced interaction strength. 
We consider two clusters of the interacting bosons with tunable hopping strength $t_{ij} = t'$ on the boundary of subsystems.
The Hamiltonian can be expressed as $H=H_1+H_2 + V^{\rm int}$ with the local Hamiltonian and interactions on the boundary as
\begin{equation}\label{eq:hamil_qw_SM}
\begin{aligned}
    H_l^{\rm loc} &=J\sum_{j}   \left(\hat{\sigma}_{j}^{x} \hat{\sigma}_{j+1}^{x}+\hat{\sigma}_{j}^{y} \hat{\sigma}_{j+1}^{y}\right) + \frac{1}{2}  \sum_{j} {h_i} \left(\hat{\mathcal{I}}_{j} - \hat{\sigma}_{j}^{z} \right),\\
    V^{\rm int} &= J^{\prime} (\sigma_{1,N}^x\sigma_{2,1}^x+\sigma_{1,N}^y\sigma_{2,1}^y).
\end{aligned}
\end{equation}
 Here, $\sigma_{l,i}$ represents Pauli operators acting on the $i$th site of $l$th subsystem, and the interactions at the boundary is $J^{\prime} = -2t'$.
 Note that this reduces to the Bose-Hubbard model when $t = t'$.

\begin{figure}
\includegraphics[width =1.0\textwidth]{ 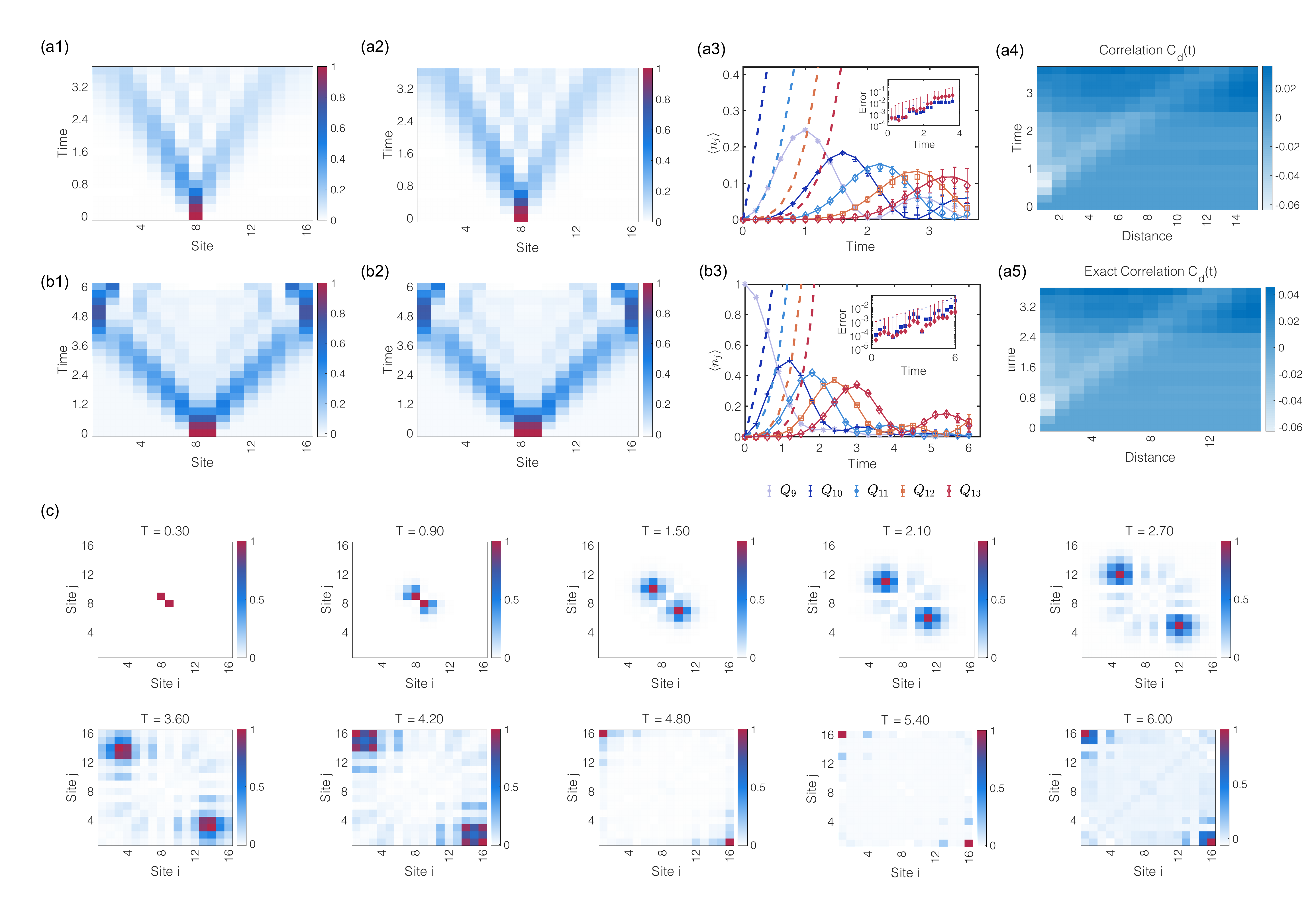}
 \caption{
Dynamics of $16$ interacting bosons on a $1$D array in the large onsite repulsion
limit $U/t \rightarrow \infty$.
(a) Quantum walk after single particle excitation at the centre $\ket{\psi_0} = \hat{b}_8^{\dagger} \ket{\mathbf{0}}$. We set the interaction strength as $J = 0.5$ and $J^{\prime}=0.8J$.
(a1) and (a2) show the simulation results and exact results for time-evolved density evolution  $\hat n_j = \langle \hat b_j^{\dagger} \hat b_j \rangle$, respectively.
(a3) The density distribution $\hat n_j$ at different sites $Q_9 $ to $Q_{13}$ under time evolution. The nearest-neighbour Lieb-Robinson bounds (dashed line) capture the maximum propagation speed of density spreading.
(a4) and (a5) show the evolution of the averaged two-body correlation functions $\bar C_{d}(t) = \frac{1}{N-d} \sum_{j=1}^{N-d} C_{j,j+d}(t)$, which exhibit similar light cone propagation. 
 The inset figure in (a3) shows the errors for the density and the averaged two-body correlation functions. 
(b) Quantum walk after two particle excitations at the centre $\ket{\psi_0} = \hat{b}_8^{\dagger} \hat{b}_9^{\dagger}  \ket{\mathbf{0}}$. We set the interaction strength as $J = 0.5$ and $J^{\prime}=0.5J$.
(b1) and (b2) show the simulated results and exact results for time-evolved density evolution  $\hat n_j = \langle \hat b_j^{\dagger} \hat b_j \rangle$, respectively.
(b3) The density distribution $\hat n_j$ at different sites $Q_9 $ to $Q_{13}$ under time evolution. The nearest-neighbour Lieb-Robinson bounds are shown by the dashed line. The inset figure in (b3) shows the errors for the density and the two-body density-density correlation functions $\hat\rho_{ij}=\braket{\hat b_i^{\dagger } \hat b_j^{\dagger } \hat b_i \hat b_j}$. 
 (c) Spatial antibunching and fermionisation in the quantum walk of two indistinguishable bosons. The two bosons are excited at the centre. The normalised density-density correlation functions $\hat \rho_{ij}/\hat \rho_{ij}^{\max}$ at several time $T$. The off-diagonal correlations appear under evolution, which shows the antibunching and fermionisation of  strongly correlated bosons. This phenomenon is well captured by the non-interacting spinless fermions. In this numerical simulation, we set the sampling number as $5 \times 10^5$.
}
 \label{fig:qw_SM}
\end{figure}

Now, we divide the whole system into two parts and simulate the  dynamics of interacting bosons using our perturbative algorithm with the explicit decomposition. Our method thus enables the simulation of the $16$-qubit problem with only $8+1$ qubits. 
It is worth noting that the explicit decomposition is optimal with respect to all possible decomposition strategies, as proven in Theorem~\ref{thm:min_cost}. 
We first demonstrate the dynamics after local perturbation under the interacting Hamiltonian.
Previous work has extensively studied the propagation speed of quantum information in quantum many-body systems with finite range couplings, which is limited by a maximal group speed, known as the Lieb-Robinson velocity  $v_g$~\cite{bravyi2006lieb,jurcevic2014quasiparticle}. Information propagates faster than $v_g$ is exponentially suppressed, which exhibits a light-cone-like information propagation analogous to the relativistic theory.   
One can consider a local perturbation to the initial state $\ket{\psi_0}$ as $\ket{\psi(t=0)} = O_A\ket{\psi_0}$ in the region $A$.
As proven in Ref.~\cite{bravyi2006lieb}, the change of the expectation of the observable $O_B$ in the region $B$ under time evolution can be bounded by
\begin{equation}
\left|\langle\psi(t)|  O_B |\psi(t)\rangle-\langle\psi_0|O_B| \psi_0\rangle\right|  =|\langle\psi|O_A^{\dagger}[O_B(t),  O_A]| \psi\rangle| \leq \|[O_B(t), O_A] \|,
\end{equation}
where $O_A(t)$ represents the operator in the Heisenberg picture.
This establishes how local operations $O_A$ affect the observables $O_B$  under time evolution. 
If the interactions decrease exponentially with distance, one can bound the unequal time commutator by 
\begin{equation}
    \| |[O_B(t), O_A] \| \leq  C\|O_A\|\|O_B\| \exp\bigg[-\frac{d-v_g|t|}{\xi}\bigg],
\label{eq:Lieb_bound}
\end{equation}
where $d$ is the distance of between the region $A$ and $B$ (shortest path connecting $A$ and $B$), and $c$, $v_g$, and $\xi$ are positive constants depending on $g = \max_{i,j}|J_{ij}|$. 
For the nearest-neighbour interaction, one can have a tighter bound by $\left|\langle\psi(t)|  O_B |\psi(t)\rangle-\langle\psi_0|O_B| \psi_0\rangle\right| \leq I_d(4Jt)$, where $d$ is the distance of between the site $A$ and $B$, $c$ and $v$ are the velocity constant, and $I_d$ is the modified Bessel function of the first kind~\cite{jurcevic2014quasiparticle}.
In our simulation, the particle number is conserved, and we consider the observable as the occupation number operator $O_A = \hat{n}_j$ and local perturbation as $O_B = \prod_{j\in B} \hat{\sigma}_j^x \ket{\psi_0}$.

Now, we study the propagation of density distribution  and non-local two-body correlations after local excitations.
We first excite one boson at the centre by $\ket{\psi_0} = \hat{b}_8^{\dagger} \ket{\mathbf{0}}$, where $\ket{\mathbf{0}}$ is the vacuum, and then study the density spreading of the boson under the interacting Hamiltonian with interaction strength $J = 0.5$ and $J^{\prime} = 0.8J$. As shown in Fig.~\ref{fig:qw_SM}(a1), the evolution of density $\hat n_j = \langle \hat b_j^{\dagger} \hat b_j \rangle$ indicates a light-cone-like propagation. The propagation is well captured by the nearest-neighbour Lieb-Robinson bound (dashed line), as shown in Fig.~\ref{fig:qw_SM}(a3). 
Then, we study the distribution of correlations after the single-particle excitation. 
We consider the averaged  non-local correlations as
\begin{equation}
\bar C_d (t) =  \frac{1}{N-d} \sum_{j=1}^{N-d} C_{j,j+d}(t)
\end{equation}
with the two-body correlation function
$C_{ij}(t)=\braket{\sigma_i^z \sigma_j^z}-\braket{\sigma_i^z }\braket{\sigma_j^z }$.
We see the correlation grows nonlocally under evolution, and it also exhibits a clear light cone propagation, as shown in Fig.~\ref{fig:qw_SM}(a4). The exact dynamics are shown in Fig.~\ref{fig:qw_SM}(a2,~a5) for comparison.

Next, we show the results for the strong correlation effects with two bosons excitations.
The two adjacent bosons display spatial bunching effects in the non-interacting case while it gradually transform to spatial antibunching in the large $U$ case, which is similar to non-interacting spinless fermions, theoretically investigated in~\cite{lahini2012quantum}.
The fermionisation phenomenon of the $1$D translationally invariant bosons in the large $U$ limit was also experimentally demonstrated in Ref.~\cite{yan2019strongly}. 
Here, we consider the correlated Hamiltonian in  \autoref{eq:hamil_qw_SM} with reduced interaction strength $J^{\prime} = 0.5J$ on the boundary. At $t=0$, we excite two adjacent indistinguishable particles at the centre $\ket{\psi_0} = \hat{b}_8^{\dagger} \hat{b}_9^{\dagger}\ket{\mathbf{0}}$. 
We first show the density spreading in Fig.~\ref{fig:qw_SM}(b1,b3), which exhibits similar propagation as the single particle excitation case.
The dynamics of two particle excitation can be sensitive to the particle statistics due to interference.
As proposed in Ref.~\cite{lahini2012quantum}, the fermionisation or bosonisation of the particle statistics could be distinguished by measuring the two-body density-density correlators
\begin{equation}
\hat\rho_{ij}=\braket{\hat b_i^{\dagger } \hat b_j^{\dagger } \hat b_i \hat b_j}.
\end{equation}
In Fig.~\ref{fig:qw_SM} (c), we show the time evolution of the density operator $\hat n_j$ and density-density correlators of two bosons placed at the adjacent centre. 
The long-range anticorrelations appearing in the off-diagonal pattern reveal the fermionisation of  strongly correlated bosons with reduced interaction strength. We can also see the interference pattern in Fig.~\ref{fig:qw_SM} (c) during the evolution as an indication of interactions between the bosons.

\subsection{Interacting fermions}
\label{appendix:fh}
 
In this section, we consider the one-dimensional interacting fermions with spin degrees of freedom, which is described by the Fermi-Hubbard Hamiltonian as
\begin{equation}
\begin{aligned} 
H= -J \sum_{j, \sigma }  \left( \hat{c}_{j, \sigma}^{\dagger}  \hat{c}_{j+1, \sigma}+\mathrm{h.c.}  \right)  +U \sum_{j} \hat n_{j, \uparrow} \hat n_{j, \downarrow}+\sum_{j,\sigma }   h_{j, \sigma} \hat n_{j, \sigma} \end{aligned}
\end{equation}
where $\hat c_j$ ($\hat c_j^{\dagger}$) is the fermionic annihilation (creation) operators on the $j$th site and spin state $\sigma \in\{\uparrow, \downarrow \}$,
and $\hat n_j = \hat c_j^{\dagger}\hat c_j $ is the particle density operator.
One-dimensional interacting fermions can be well captured by the  Luttinger liquid theory, which shows that the spin and charge of the electrons disintegrate into two separate collective excitations, spinon (holon) excitations with only spin (charge) degrees of freedom. 
For self-consistency, we briefly review the theory of bosonisation and discuss the separation of spin and charge excitations, following the discussion in~\cite{fradkin2013field,arute2020observation}.

The Fermi surface of interacting electrons in $1$D only has two points, and therefore it could be reduced to the effective Hamiltonian describing the excitation from one point to the other.
The effective Hamiltonian ignoring spins can be expressed as $H = H^{\rm loc} +V_{ee}$ where $\hat{H}_{0} = \sum_{\zeta = \pm 1} \sum_{q} v_{\mathrm{F}} q  \hat c_{\zeta q}^{\dagger}  \hat c_{\zeta q}$ and $V_{ee}  =\frac{1}{2 L} \sum_{k k^{\prime} q} V_{\mathrm{ee}}(q) \hat \hat c_{k-q}^{\dagger} \hat c_{k^{\prime}+q}^{\dagger}\hat c_{k^{\prime}} \hat c_{k}$, which describes the allowed scattering near the Fermi surface. Here, $\zeta = \pm 1$ represents the left or the right side of the Fermi surface, and $v_F$ is the Fermi velocity.
For one-dimensional electrons, the density modulation is the elementary excitation, and thus it is natural to introduce the bosonic operator as
\begin{equation} 
\hat b_{\zeta q}^{\dagger}  =\sqrt{\frac{2 \pi}{L q}} \sum_{k} \hat c_{\zeta, k+q }^{\dagger} \hat c_{\zeta, k}  \end{equation}
to map the interacting fermions to the free bosons, where $L$ is normalisation constant.
The creation and annihilation operations of bosons satisfy the commutation relation as
\begin{equation}
\left[\hat b_{\zeta q }, \hat b_{\zeta^{\prime} q^{\prime} }^{\dagger} \right]=\delta_{\zeta \zeta^{\prime}} \delta_{q q^{\prime} }
\end{equation}

 
Therefore, the full interacting Hamiltonian can be mapped to the non-interacting Hamiltonian in terms of the bosonic operators as 
\begin{equation}
H=\sum_{q>0, \sigma, \sigma^{\prime} \zeta=\pm 1}\left[v_{F} q \delta_{\sigma \sigma^{\prime}} \hat b_{\zeta q \sigma}^{\dagger} \hat b_{\zeta q \sigma}+\frac{q g_{2}}{4 \pi}\left(\hat b_{\zeta q \sigma}^{\dagger}  \hat b_{-\zeta q \sigma^{\prime}}^{\dagger}+ \hat b_{\zeta q \sigma} \hat b_{-\zeta q \sigma^{\prime}}\right)+\frac{q g_{4}}{2 \pi} \hat b_{\zeta q \sigma}^{\dagger} \hat b_{\zeta q \sigma^{\prime}}\right],
\end{equation}
where $g_2$ and $g_4$ measure the strength of the interaction in the vicinity of the Fermi points as conventionally used in the literature, and $\sigma$ denotes the spin degrees of freedom.
We can write the above Hamiltonian into the bosnic operator of charges and spins $\hat b_{\zeta q c}^{\dagger}=\frac{1}{\sqrt{2}}\left(\hat b_{\zeta q \uparrow}^{\dagger}+ \hat b_{\zeta q \downarrow}^{\dagger}\right)$ and $\hat b_{\zeta q s}^{\dagger}=\frac{1}{\sqrt{2}}\left(\hat b_{\zeta q \uparrow}^{\dagger}- \hat b_{\zeta q \downarrow}^{\dagger}\right)$ with $c$ ($s$) denoting charge (spin),  and then diagonalize the Hamiltonian by a Bogoliubov transformation as
\begin{equation}
H=\sum_{q>0, \zeta=\pm 1}\left[v_c \left(\alpha_{\zeta q c}^{\dagger} \alpha_{\zeta q c}+\frac{1}{2}\right)+ v_s \alpha_{\zeta q s}^{\dagger} \alpha_{\zeta q s}\right].
\end{equation}
with the velocities $v_c = q \sqrt{\left(v_{F}+\frac{g_{4}}{2 \pi}\right)^{2}-\left(\frac{g_{2}}{2 \pi}\right)^{2}}$ and $v_s = q v_{F}$.
This clearly shows that the spin and charge density has different velocity near the Fermi surface, as predicted by the theory of Luttinger liquid.
This observation has been theoretically and numerically investigated~\cite{vsmakov2007binding,arute2020observation,Cazalilla.RevModPhys.83.1405}. Arute \textit{et al.} initially reported the experimental simulation using a programmable superconducting quantum processor with high gate accuracy~\cite{arute2020observation}. 

To simulate the dynamics of the interacting fermions carrying spins on a quantum computer, we can use the Jordan-Wigner transformation to map the fermionic operators $\hat c_j$ on each site to the qubit Pauli operators as 
\begin{equation}
    \hat c_{j} \mapsto \frac{1}{2}\left(\hat{\sigma}^x_{j}+i \hat{\sigma}^y_{j}\right) \bigotimes_{i=1}^{j-1} \hat{\sigma}^z_{i}.
\end{equation}
with Pauli operators $\hat{\sigma}_j^{\alpha}$, $\alpha = (x,y,z)$ acting on the $j$th site. 
It is straightforward to have 
$\hat n_{j} \mapsto \frac{1}{2}\left(\hat{\mathcal{I}}-\hat{\sigma}^z_{j}\right)$ and  $\hat n_{j} \hat n_{k}  \mapsto \frac{1}{4}\left( \hat{\mathcal{I}}+\hat{\sigma}^z_{j} \hat{\sigma}^z_{k}-\hat{\sigma}^z_{j}-\hat{\sigma}^z_{k}\right)$.
The experimental setting after the qubit mapping has the $2$D topology. 
We consider an $8$-site interacting $1$D Fermi-Hubbard model, which requires $N = 16$ qubits to encode the spin up and spin down at each site. 
According to the topology of interactions, we have two partitioning strategies,  by regarding either the nearest hopping or on-site Coulomb interactions as the $V^{\rm int}$. Therefore, depending on the relative strength of $t$ and $U$, we can cut the full interacting systems along either longitudinal or transverse directions. 
We discuss how to implement our algorithms for this topology setting by using two partitioning strategies in Fig.~\ref{fig:fh2D_SM} in the following. We will then show how to apply our perturbative quantum simulation method to use $8+1$ qubits to simulate the dynamics of the $16$ qubit system. 

We first prepare the initial state as the ground state of a non-interacting Hamiltonian. 
In the non-interacting limit, Hamiltonian commutes with the total number operators $[H, \sum_j \hat n_{j,\sigma}] = 0$. For a one-dimensional chain, one finds that the Hamiltonian in one-particle sector moves the occupied site to the left or right, and thus can be expanded on the one-particle basis as a tridiagonal matrix.
The interacting Hamiltonian  has the elements $H_{ij} = \braket{i|H|j}$ and $\ket{j} = \hat{c}_j^{\dagger} \ket{\mathbf{0}}$ with $\ket{\mathbf{0}}$ representing the vacuum. We can use unitary transformation $U  \equiv [u_{ij}]_{ij}$ to diagnalise the interacting Hamiltonian and get the eigenstates and eigenenergies in terms of the non-interacting fermionic operators $\hat a_j$ and $\hat{a}_j^{\dagger}$,  which we refer as the rotated basis. The rotated basis is related to the original basis by the unitary transformation as 
\begin{equation}
    \hat{a}_j^{\dagger}  = \sum_j u_{ij} \hat c_j^{\dagger}.
\end{equation}
In the two-particle sector, there are $\tbinom{N}{2}$ occupation number basis states, and we can similarly diagonalise the matrix of Hamiltonian to obtain the eigenstates and eigenenergies. 

For the ground state with general occupied numbers $N_f$ (relatively small $N_f$), we can get the subspace of $N_f$ particle numbers and use the transformation from the original basis $\hat{c}_j^{\dagger}$ to the rotated basis $\hat{a}_j^{\dagger}$.
Refs.~\cite{PhysRevApplied.9.044036,KivLinearDepth} discussed the algorithm that we can use linear depth circuit to prepare the ground state of a non-interacting Hamiltonian. We briefly review the procedure to prepare the initial state using the linear depth circuit.
Denote the operators in the rotated basis that diagonalize the non-interacting Hamiltonian as $\hat a$ and $\hat a^{\dagger}$. 
We can apply a particle-conserving rotation ${U}$ of the single particle basis to the rotated basis as ${U}\hat c_j^{\dagger}{U}^{\dagger} = \hat a_j^{\dagger}$.
Then we obtain the ground state of the non-interacting Hamiltonian from the easy-to-prepare state as
\begin{equation}
    \ket{\phi} = {U} \hat c_1^{\dagger} \cdots \hat c_{N_f}^{\dagger}\ket{\textbf{0}},
\end{equation}
where $\ket{\textbf{0}}$ is the vacuum.
The two bases are related by a unitary transformation that transforms the original operators $\hat c$ ($\hat c^{\dagger}$) of the interacting Hamiltonian to the new operators $\hat a$ ($\hat a^{\dagger}$) of non-interacting Hamiltonian  $\hat a_i^{\dagger} = \sum_j u_{ij} \hat c_j^{\dagger}$ where $u$ is a $N\times N$ matrix. The basis change unitary is given by $U(u) = \exp \left(\sum_{ij} [\log u]_{ij} (\hat c_i^{\dagger} \hat c_j - \hat c_j^{\dagger} \hat c_i) \right)$ which can be implemented by $\mathcal{O}(N)$ depth circuits using Given rotations in parallel~\cite{KivLinearDepth}.
In the numerical simulation, we set the hopping strength $J = 0.5$, and the on-site interaction $U = 0.5J$ or $U = J$. We set the local potential for spin up in a Gaussian distribution $h_{j, \uparrow} = -\lambda_{\uparrow} \exp\left(-\frac{(j-(L+1)/{2})^2}{2\nu^2} \right)$ with $L = 8$, $\lambda_{\uparrow} = 4$ and $\nu = 1$ while $h_{j, \downarrow} = 0$ for spin down, the same as in Ref.~\cite{arute2020observation} for comparison.
The state is initialised with quarter filling $N_{\uparrow} = N_{\uparrow} = 2$, in which the charge and spin density are generated in the middle of the chain at $t = 0$ in Fig.~\ref{fig:fh2D_SM}.


\begin{figure}
\includegraphics[width =1.0\textwidth]{ 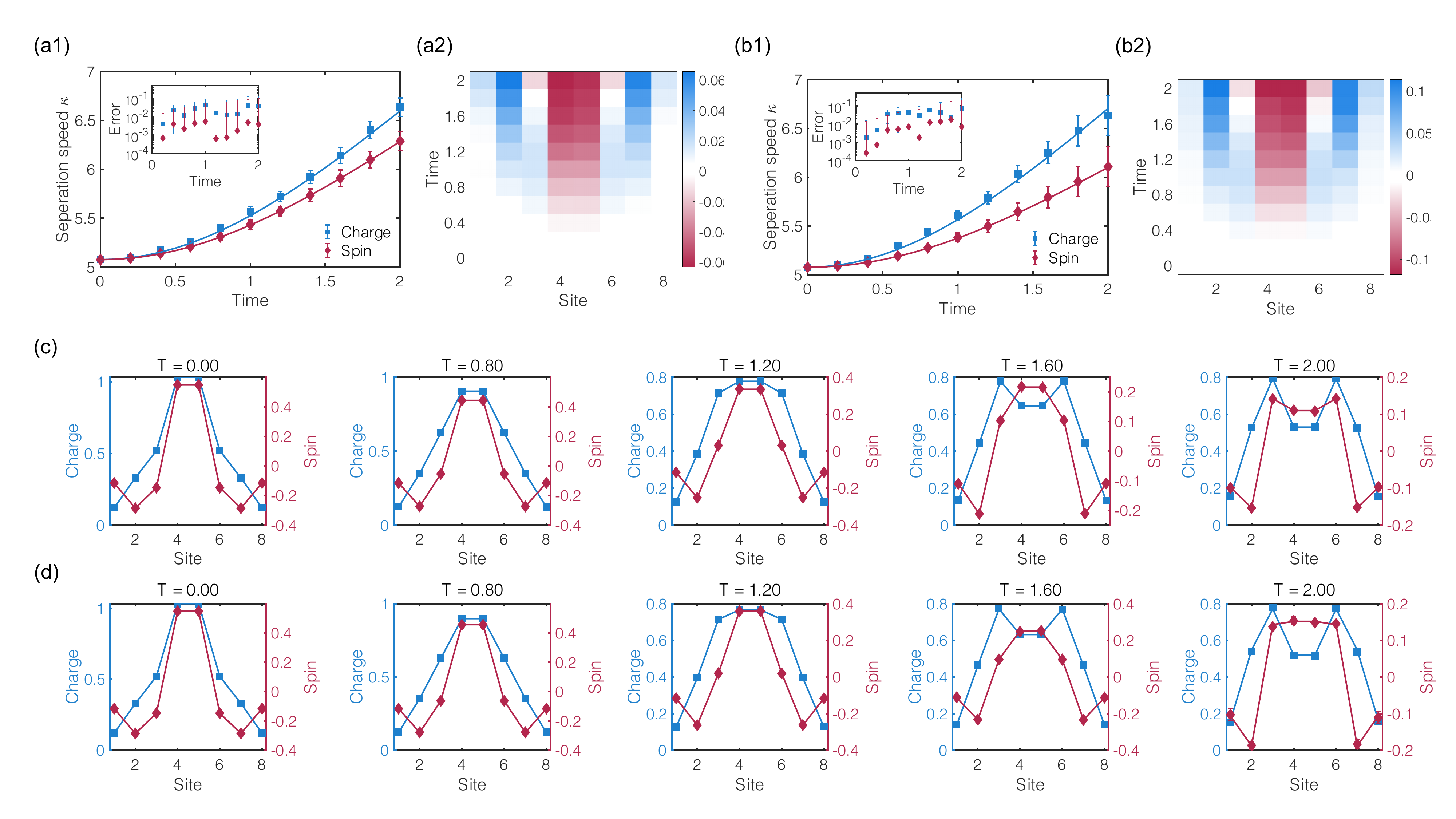}
 \caption{Separation of charge and spin density under the one-dimensional Fermi Hubbard model.
 The quantum state is initialised as the ground state of the non-interacting Hamiltonian with local potential $h_{j,\sigma}$, as specified in \autoref{appendix:fh}. We consider the dynamics with two-particle excitations, i.e. $N_{\uparrow} = N_{\uparrow} = 2$,  which are generated at the middle of the chain at $t = 0$.
 (a1) and (a3). The time evolution of separation speed $\kappa$ for charge (square and blue) and spin (diamond and red) with the interaction $U = J/2$ (a1) and $U = J$ (a3), respectively.
Solid lines represent the exact results for comparison. The figure inset shows the errors of $\kappa$ compared with the exact results over time. 
 (a2) and (a4). The difference of charge and spin densities $\rho_j^{c}(t) - \rho_j^{s}(t) - \textrm{const}$ at each site with the interaction $U = J/2$ (a2) and $U = J$ (a4), respectively. The separation are offset by a constant as $0$ at $t=0$, i.e. $\textrm{const} = \rho_j^{c}(0) - \rho_j^{s}(0)$.
 (c) and (d) show the time evolution density spreading of both charge and spin at different $T$ for $U = J$ to $T = 2.0$. We set the sampling number as $5 \times 10^5$.
 }
 \label{fig:fh2D_SM}
\end{figure}

Next, we evolve the two-particle system under the Fermi-Hubbard Hamiltonian with different strengths of on-site interaction $U$. 
The charge and spin densities characterise the collective excitations, which are defined as the sum and difference of the spin-up and -down particle densities over all sites, respectively,
\begin{equation}
\rho_{j}^{\eta}=\left\langle \hat n_{j, \uparrow}\right\rangle \pm\left\langle  \hat n_{j, \downarrow}\right\rangle
\end{equation}
where $\eta = c$ or $s$ represents charge or spin degrees of freedom.
We show the density spreading of both charge and spin in Fig.~\ref{fig:fh2D_SM} (c) and (d) at different $t$. We plot the difference of charge and spin density in Fig.~\ref{fig:fh2D_SM} (a2) and (a4) for $U = J/2$ and $U = J$, respectively. Here, the separation of charge density and spin density is offset as $0$  at $t=0$ to make the difference comparable. 

The excitations spreading from the middle can be quantitatively distinguished by introducing the separation speed
\begin{equation}
    \kappa = \sum_{j=1}^L \left| j - (L+1)/2\right| \rho_{j}^{\eta}.
\end{equation}
Under time evolution, we observe a clear separation of spin density and charge density as shown in Fig.~\ref{fig:fh2D_SM}(a1) and (a3).  As $U$ increases to $U = J$, the separation of spin density and charge density becomes much faster. 
The error for the separation speed $\kappa_\eta$ ($\eta = c/s$) are shown in the figure inset.
In the large interaction regime, the initial state considered here is a mixture of excited states, and therefore the effective physics can not be well captured by the Luttinger liquid theory~\cite{fradkin2013field}.

While this effective model can only capture the low-energy excitation in the weakly coupled regime, our method can simulate the dynamics in the highly excited regime with medium or large interaction.
The interactions for this $1$D interacting fermions have two parts: 
(1) kinetic terms due to nearest hopping $t$ 
(2) on-site spin interaction $U$.
According to the topology of interactions, we have two strategies,  by regarding either the nearest hopping or on-site spin interactions as the $V^{\rm int}$. Therefore, depending on the relative strength of $t$ and $U$, we can cut the full interacting systems along either longitudinal or transverse directions. This enables the simulation in both regimes. We note that to prepare the general entangled state, we can decompose it into a linear combination of local states, which might introduce an additional sampling cost for the state preparation. 

We remark that this partitioning strategy enables the quantum simulation for the two opposite regimes, which aligns with the view from the perturbation theory which applies to the weakly-interacting and strongly-interacting limit. 
Our method could be used to simulate the dynamics of interacting phenomena with quasi-$1$D or $2$D geometry.
In the case of the Fermi-Hubbard model considered above, the explicit decompositions for both {geometries} are optimal with respect to the resource cost for the simulation of non-local correlations.

\subsection{Quantum spin systems}
In this section, we consider to apply our perturbative approach with the explicit decomposition to simulate several emergent quantum phenomena in quantum spin systems.

\subsubsection{Dynamical quantum phase transitions}


Quantum spin models have been investigated to capture or predict some typical emergent quantum phenomena in the condensed matter, such as phase transitions and collective transitions. 
While many theoretical and numerical methods have been proposed to solve the effective spin models in exact or approximate solutions, a long-range spin chain with general interaction strength could be hard to solve classically. 
In this section, We consider a long-range  spin chain, which is described by
\begin{equation}
    H = \sum_{ij} J_{ij} \hat \sigma_i^z \hat \sigma_j^z + h\sum_j \hat \sigma_j^x 
    \label{eq:hamil_spin_power}
\end{equation}
with the interactions obeying the power law decay rule
 $J_{ij}={|i-j|^{-\alpha}}$. We study the dynamical quantum phase transitions (DQPT) in the long-range spin chains.
 
We show the dynamical criticality of many-qubit spin chains with fully connected topology using the local order parameters and the Loschmidt amplitude.
The state is first initialised as the eigenstate $\ket{1}^{\otimes n}$ of the non-interacting Hamiltonian with $h = 0$, and the system is quenched by suddenly adding the transverse field $h$ along $x$ direction. In the limit of $\alpha = 0$, the Hamiltonian in  \autoref{eq:hamil_spin_power} reduces to the Lipkin-Meshkov-Glick (LMG) model. LMG model has the analytical solution as it can be regarded as a classical model, of which the dynamical behaviour can be predicted by the semiclassical limit. 
The Hamiltonian $H$ preserves the magnitude of the total spin and has the spin flip symmetry, i.e., $[H, \mathbf{S}^2] = 0$ and $[H, \prod_i \hat{\sigma}_i^x] = 0$.
We can write the Hamiltonian as $H = \frac{J}{N} (\Sigma^z)^2 +h \Sigma_x  $ using collective spin operators $\Sigma^{\alpha} = \sum_i \sigma_i^{\alpha}$ with ${\alpha}=x,y,z$.
We can use the mean-field approach to represent the spin as a classical spin vector $\left(\Sigma^{x}, \Sigma^{y}, \Sigma^{z}\right)=N(\cos \theta, \sin \theta \sin \phi, \sin \theta \cos \phi)$ that can be determined by the equation of motion. 
In Ref.~\cite{zhang2017observation}, the authors considered the spin Hamiltonian with the external field along the $z$ direction and showed analytically that the spatially averaged two-point correlation shows a DQPT when $B_z/J_0$ crosses unity.  One can similarly use the analytical method to analyse the dynamical behaviour of Eq.~\ref{eq:hamil_spin_power} with small $\alpha$ near to zero.


Refs.~\cite{zhang2017observation,xu2020probing} experimentally demonstrated the DQPT and various dynamical results for the long-range spin model with $\alpha$ close to zero with a trapped ion platform~\cite{zhang2017observation} and a superconducting processor~\cite{xu2020probing}.
Here, we focus on the weakly coupled regime, i.e. large $\alpha$ for comparison. 
In the numerical simulation, we set $J_0 = 1$, and the decay rate $\alpha = 3$.
We partition the full systems into $2$ or $3$ subsystems with each subsystem consisting of at most $8$ qubits.
We use the explicit decomposition to simulate the large system. Note that  the explicit decomposition for this example might be not optimal as it involves too many Pauli terms at each site. Other decomposition methods within the framework of generalised quantum operations could be numerically searched to obtain the minimal resource cost.
We first show the evolution of order parameters of $16$-site quantum spin chain.
In Fig.~\ref{fig:DPP_SM} (c), we show the magnetisation $M_z(t)$ and $M_x(t)$ rapidly oscillate across $0$ when the external field is large,  while the magnetisation decays slowly in the low field. The motion of spin can be illustrated in a Bloch sphere in Fig.~\ref{fig:DPP_SM}. 
These order parameters provide an evidence for two phases: ferromagentic phase and paramagnetic phase.

The dynamical quantum phase transitions in the out-of-equilibrium phase could be observed using the Loschmidt amplitude as
\begin{equation}
\mathcal{G}(t) = \left| \braket{\psi_0|e^{-iHt} |\psi_0} \right|^2
\end{equation}
as an indicator to characterise the dynamical echo back to the initial state in the out-of-equilibrium phases. A DQPT occurs with a nonanlytical behaviour of rate function 
\begin{equation}
    \gamma(t) = -N^{-1}\log\left(\mathcal{G}(t)\right)
\end{equation}
in the thermodynamic limit $N \rightarrow \infty$, which can be regarded as a dynamic counterpart to a free energy density up to a normalisation $N$.  
In the LMG model, the system undergoes the DQPT in the thermodynamic limit $N \rightarrow \infty$. 
We consider the weakly-coupled regime and
present the dynamical behaviour of Loschmidt amplitude $\mathcal{G}(t) $ for different external field $h$ in Fig.~\ref{fig:DPP_SM} (a).  We clearly see that the Loschmidt amplitude decays rapidly to zero when the external field  $h$ is above the critical field.
The nonanlytical behaviour of  rate function $\gamma(t) $ for a large external field $h$ reveals a dynamical phase transitions to the paramagnetic phases.
The minimal of Loschmidt amplitude is above zero for small $h$, which indicates the system persists a ferromagnetic phase under evolution. 
Fig.~\ref{fig:DPP_SM} (b) shows the system size dependence of minimal Loschmidt amplitude for various $h_x$. We can see that the minimal Loschmidt amplitude appears much earlier with the increasing system size. 
We note that the decay rate $\alpha$ of the trapped ions quantum simulator  can be tuned in the region of $0\leq \alpha \leq 3$  due to the physical interaction, while PQS method could be leveraged to compensate these limitations.
 
 \begin{figure}[t]
\includegraphics[width =1\textwidth]{ 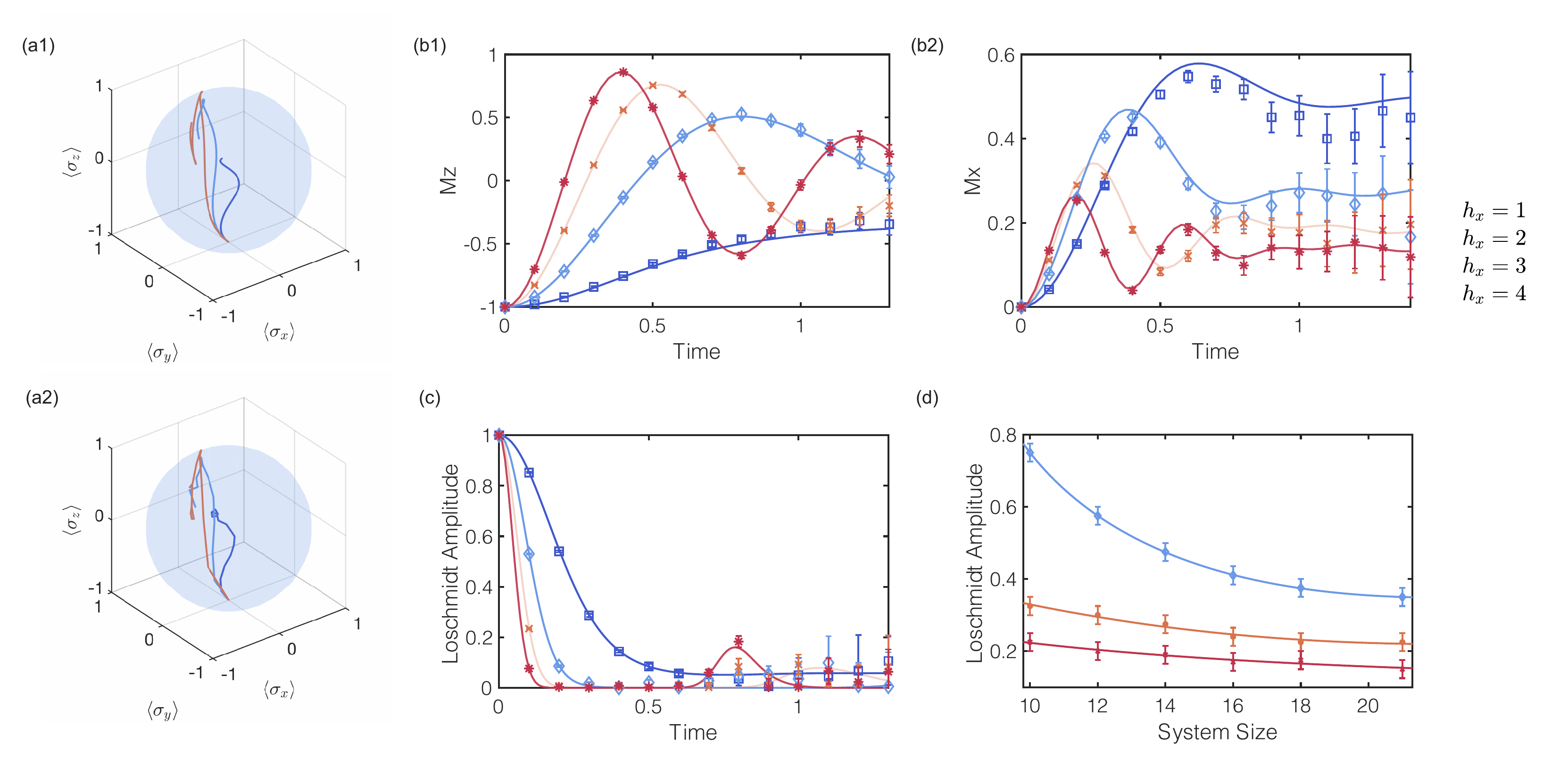}
 \caption{Dynamical quantum phase transition of the  long-range spin chain with full connectivity. 
 The system is initialised to the eigenstate of the Hamiltonian with zero field as $\ket{\psi_0} = \ket{1}^{\otimes N}$, and then the external field along $x$ axis is suddenly turned on at time $t\geq 0$.
 (a) The numerical (a1) and ideal (a2) time evolution of the average spin magnetisation shown in the Bloch sphere for different strengths of the transverse fields $h_x =1,2,3$.
 (b) Time evolution of the averaged magnetisation $M_z = \frac{1}{N}\sum_j \braket{\sigma_j^z(t)}$ (b1) and $M_x = \frac{1}{N} \sum_j \braket{\sigma_j^x(t)}$ (b2) for  different strengths of the transverse fields $h_x =1,2,3,4$.
 The magnetisation $M_z$ and $M_x$ decays rapidly at large field. The external field drives the system from the dynamical ferromagnetic phase to the dynamical paramagnetic phase.
 (c) The Loschmidt amplitude $\mathcal{G}(t) = \left| \braket{\psi_0 |e^{-iHt}|\psi_0}\right|^2$, as an indicator to characterise the dynamical echo back to the initial state for different transverse field strengths $h_x$. (d) System size dependence of the Loschmidt amplitude. The   phase transition appears earlier with larger system size~\cite{zhang2017observation,xu2020probing}.
  Solid lines represents the exact results.
 }
 \label{fig:DPP_SM}
\end{figure}

\begin{figure}[t]
\includegraphics[width =1.0\textwidth]{ 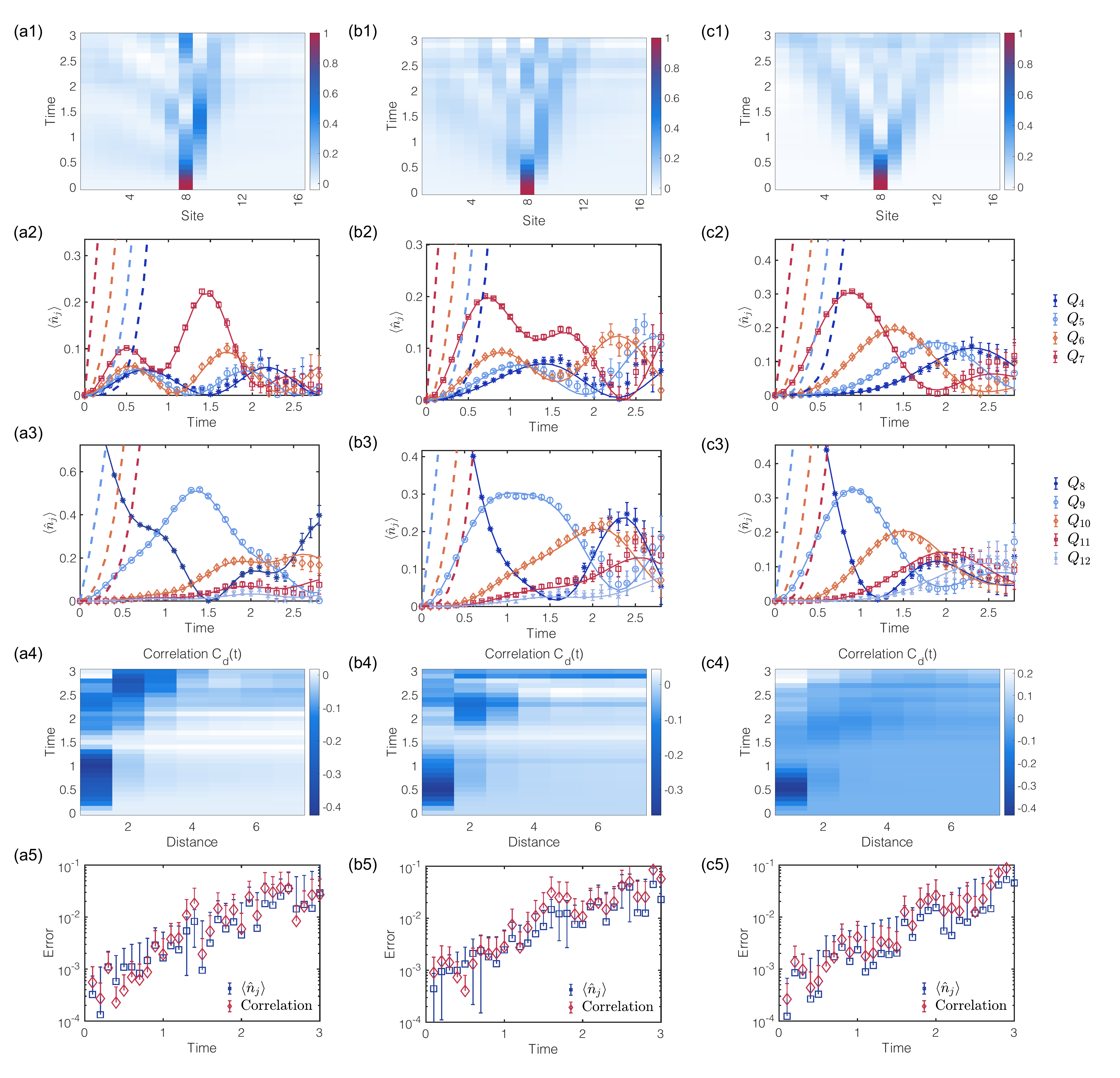}
 \caption{Correlated spin clusters with power law decay interactions $\alpha$ in the subsystems and interactions on the boundary.
 We perturb the systems at the $8$th site as $\ket{\psi_0} = \hat{\sigma}^x_{8} \ket{\psi_0}$ at $t = 0$, and suddenly turn on the interaction $J_{ij}$. Here, we set $J_0 = 1$. 
 (a), (b) and (c) show $\alpha = 0.5,~1,~2$, respectively.
 (a1), (b1) and (c1).
 Dynamics of  magnon quasiparticle excitations $\langle \hat{n}_j \rangle$, related to the local magnetic moment by $\langle \hat{n}_j \rangle = (1-\langle \hat{\sigma_j} \rangle)/2$. (a2)-(c2) and (a3)-(c3) shows the signal of the magnetisation distribution at $4$th-$7$th sites and $8$th-$12$th sites respectively. 
 The nearest-neighbour Lieb-Robinson bounds (dashed) do not capture all the signals for this propagation. 
 (a4)-c(4) shows the averaged two-body correlation functions $ C_d$ from the $8$th site.
 (a5)-(c5) shows the errors for magnetisation and the correlation functions. 
 }
 \label{fig:lr_SM}
\end{figure}

\subsubsection{Propagation of correlations}

Collective behaviour, such as magnetic excitations, emerges from interactions.
These elementary excitations can be described in the quasiparticles picture.
In quantum systems with finite range interactions, the quantum dynamics exhibit a light-cone-like information propagation, and the speed of  information propagation is governed by the interactions of the systems. 
In the nearest-neighbour interactions, like the results presented in \autoref{appendix:qw}, the propagation of information has a finite maximal velocity $v_g$, the so-called Lieb-Robinson velocity.
If the interactions exponentially decays with increasing distance, from  \autoref{eq:Lieb_bound}, one observe that the change of the expectation of observables under the time evolution is exponentially decreased with the distance $d$, which indicates that information propagates faster than $v_g$ is exponentially suppressed with the distance $d$. This statement exhibits a light-cone-like information propagation analogous the relativistic theory.
The speed of information propagation for power law decay interactions has been experimentally investigated in  Ref.~\cite{jurcevic2014quasiparticle}, which is beyond the light-cone picture. Understanding the effective model to describe quasiparticle exciatations and the propagation of information for general interactions is an interesting direction. In this section, we show the quasiparticle excitations of correlated spin clusters with various interaction strengths using our algorithms.

We consider the out-of-equilibrium dynamics of a one-dimensional interacting spins with the Hamiltonian $H=H^{\rm loc} + V^{\rm int}$ with the local Hamiltonian and interactions on the boundary as
\begin{equation}
    H_l^{\rm loc} = \sum_{ij} J_{ij} \hat\sigma_{l,i}^x \hat\sigma_{l,j}^x +h \sum_j \hat \sigma_{l,j}^z   ,~~ V^{\rm int} =  J_{0} \hat \sigma_{1,N}^x \hat\sigma_{2,1}^x.
\end{equation} 
Here, $\hat \sigma_{l,i}$ represents Pauli operators acting on the $i$th site of $l$th subsystem, and the interactions obey the power law decay rule as
 $J_{ij}= J_0 |i-j|^{-\alpha}$. 
In the regime of sufficiently large field $h \gg \max(|J_{ij}|)$, the Hamiltonian conserves the total magnetisation and thus could be mapped to the XX model $H = \sum_{ij} J_{ij} \left(\hat \sigma_i^+ \hat\sigma_j^- + \textrm{h.c.}\right)$, similar to the hard-core bosons which conserves the particle numbers with the effective Hamiltonian as $H = \sum_{ij} J_{ij} (\hat a_i^{\dagger} \hat a_j +\textrm{h.c.})$.
For the system with (continuous) transitional symmetry, we can Fourier transform the real-space operator into the operators that are diagonal in momentum space, written as 
$H = \sum_k \omega_k \hat  a_k^{\dagger} \hat  a_{-k}$ where the modes with energies $\omega_k$ has well-defined quasimomentum $k$.
Here, the operator $\hat  a_k^{\dagger}$ creates an excitation with momentum $k$ in the momentum space, and it is related to the original operator by $\hat  a_k^{\dagger} = \sum_{i,k}\hat a_i^{\dagger}$. 
In our simulation, we consider a spin-cluster system and first excite the system by local perturbation, which creates a magnon quasiparticle.
For the system with nearest-neighbour interactions, the energy spectrum has a well-known quadratic dispersion $\omega_k \propto k^2$ in the low energy excitation regime.
While for the spin cluster system, the mode does not have a well-defined momentum, one can determine the energy dispersion $\omega_k$ provided the boundary condition and the interaction $J_{ij}$.

In our numerical simulation, we consider an intermediate regime where the external field is much larger than the maximum interaction strength $J_0$ while it is comparable to the total interaction strength $\tilde J = \sum_{i<j} J_{ij}$.
In this case, 
the field effect cannot be fully negligible when mapped to the XX model and it drives the system to an excited regime.
In the numerical simulation, we set $J_0 = 1$, while the external field is set as $h = 2 N J_0$ with $N$ being the total sites in the full system. From the simulation results shown below, the total magnetisation is found to be nearly a constant.
Here, we mainly focus on the magnetisation conserved regime, while we can similarly simulate the highly excited regime using the same method. 
We note that in the highly excited regime, i.e., $h \sim J_0$, the quasiparticle picture does not hold and the collective excitations could be very different. The investigation of the interacting physics of the spin clusters is an interesting direction.

Now, we study the excitations and dynamics of quantum information.
We first initialised the state as the eigenstate $\ket{\psi_0} = \ket{0}^{\otimes N}$ of non-interacting $H$ with $J_{ij} = 0$. 
We perturb the systems at the centre ($8$th site) of the spin chain, i.e. $\ket{\psi_0} = \hat{\sigma}^x_{8} \ket{\psi_0}$ at $t = 0$, and suddenly quench the system by turning on the interaction $J_{ij}$.
We show the information propagation with different decay rate $\alpha = 0.5$, $\alpha = 1$ and $\alpha = 2$ in Fig.~\ref{fig:lr_SM} (a), (b) and (c), respectively.
In our numerical experiments, we consider the magnetic moment of each spin and set the sampling number as $2\times 10^5$.
We show the magnetic moment distribution over each site in Fig.~\ref{fig:lr_SM} (a1)-(c1), and the distribution of several neighbour sites $Q_4$ to $Q_{7}$ in (a2)-(c2) with three interaction strengths. The effective model to describe the long-range physics and short-range physics  was discussed in Refs.~\cite{jurcevic2014quasiparticle,monroe2021programmable}. The maximum group velocity is predicted to show a divergent behaviour for the power law decay interactions.
We clearly see that the quasiparticle excitations in the first subsystem propagate much faster as the interaction strength increases ($\alpha$ decreases). 
The  quasiparticle excitations for small $\alpha$ (strongly coupled) appear to be much localised compared with the weakly coupled regime. 
Also, the propagation speed violates the Lieb-Robinson bounds, when considering the nearest-neighbour interaction $\max J_{ij}$ or renormalised interaction $\sum_{ij} J_{ij}$, which indicates that long-range physics cannot be well described by the light-cone propagation with finite group velocity.
In contrast, for the other subsystem unperturbed at the beginning, we observe a different propagation under time evolution, as shown in Fig.~\ref{fig:lr_SM} (a3), (b3) and (c3). This shows an intermediate behaviour of short- and long-range physics of the spin cluster system, which might be captured by the model of nearest-neighbour interactions $ \max J_{ij}$.
We can compare the maximum group velocity in Fig.~\ref{fig:lr_SM} with the divergent behaviour as predicted in Ref.~\cite{jurcevic2014quasiparticle}.
We also note that we can study the dynamical phase transition from the quasiparticle distribution, which can be inferred from the line of $Q_8$, provided the conservation of magnetisation.
We next present the two-body correlation functions $ C_d$ with the spin at the centre, which is expressed as
\begin{equation}
     C_d =   \braket{\hat{\sigma}^z_j \hat{\sigma}^z_{j+d}} - \braket{\hat{\sigma}^z_j} \braket{\hat{\sigma}^z_{j+d} }
\end{equation}
with $j = 8$ at the centre in Fig.~\ref{fig:lr_SM} (a4), (b4) and (c4), showing a quasiparticle picture explained above. 
We leave detailed discussions on the quasiparticle propagation to future work.

\begin{figure}
\includegraphics[width =1.0\textwidth]{ 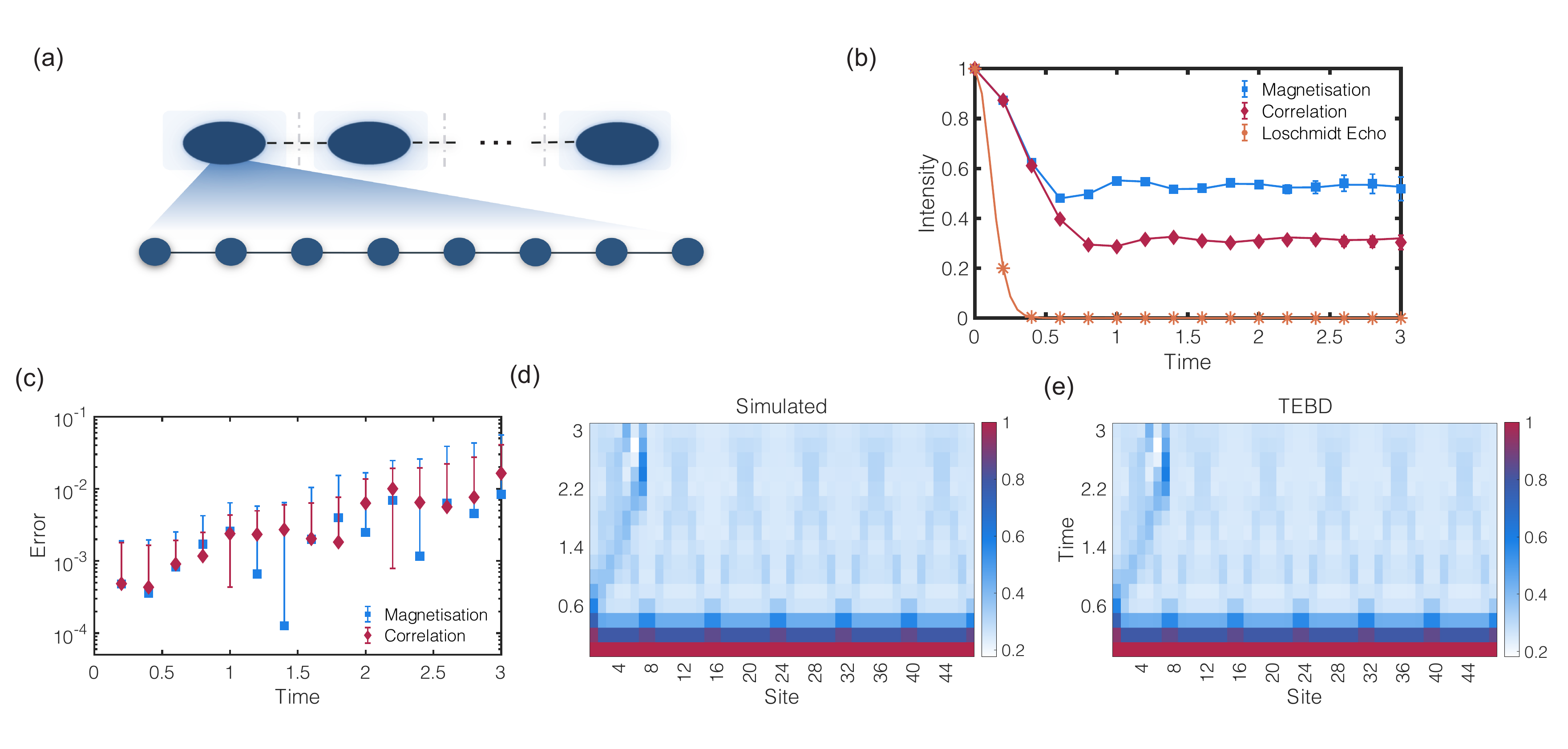}
 \caption{Numerical simulation of dynamics of $1$D $48$-site spin chains with nearest-neighbour correlations. 
  The correlations within each cluster are identical as $J = 1$ while interactions on the boundary are randomly generated from $[0,J/2]$.
   (a) The partitioning sketch. We group 8 adjacent qubits as subsystems. 
 (b) The averaged magnetisation $\frac{1}{N} \sum_{i} \hat{\sigma}_i^z  $, nearest-neighbour correlation functions $\frac{1}{N-1} \sum_{i} \braket{\hat{\sigma}_i^z \hat{\sigma}_{i+1}^z}$ and Loschmidt echo $\mathcal{G}(t)$, compared with TEBD based on the matrix product state representation as a benchmark. (c) shows the errors for the averaged magnetisation and correlation.
 (d) and (e) shows the simulated and exact results for the long-range correlation functions $C_i = \braket{\hat{\sigma}_1^z\hat{\sigma}_{i}^z}$ from $2$th-$48$th site, respectively.
 }
 \label{fig:corr_SM}
\end{figure}

\subsection{Multiple subsystems} 

Finally, we show that our method could be extended to simulate systems consisting of multiple clusters. We consider the out-of-equilibrium dynamics of one-dimensional interacting spin clusters with the Hamiltonian $H=H^{\rm loc} + V^{\rm int}$ with the local Hamiltonian and interactions on the boundary as
\begin{equation}
    H_l^{\rm loc} = J_l \sum_{i}  \hat\sigma_{l,i}^x \hat\sigma_{l,i+1}^x +h \sum_i \hat\sigma_{l,i}^z   ,~~ V_l^{\rm int} =  f_l \hat \sigma_{l,N}^x \hat\sigma_{l+1,1}^x.
\label{eq:multi_spin}
\end{equation} 
Here, $\hat \sigma_{l,i}$ represents Pauli operators acting on the $i$th site of $l$th subsystem.
The interactions in each subsystem are identical $J_l=J_0 = 1$, while interactions between subsystems $f_l$ are generated randomly from $[0,0.5]$. The external field is set as $h = 1$.
This Hamiltonian could be interpreted as a toy model representing certain features of holographic bulk in the $2+ 1$ dimension. 

In the numerical simulation, we consider the spin cluster model consisting of $6$ clusters. 
We simulate up to $48$ qubits with operations only on $8+1$ qubits.
We show the averaged time-evolved magnetisation $M_z = \sum_i \braket{ \hat{\sigma}_i^z}$ and the nearest-neighbour correlation function  $\bar{C_1} = \frac{1}{N-1}\sum_i \braket{ \hat{\sigma}_i^z\hat{\sigma}_{i+1}^z}$ and long-range correlations with the first site
$\bar{C_2} = \frac{1}{N-1}\sum_i \braket{ \hat{\sigma}_1^z\hat{\sigma}_{i}^z}$ and the Loschmidt echo $\mathcal{G}(t)$ in Fig.~\ref{fig:corr_SM} (b).

To benchmark our algorithms, we compare our results with the time-evolving block decimation (TEBD) method, which is a commonly used numerical method to simulate the dynamics of quantum many-body systems based on the matrix product states formalism.
Fig.~\ref{fig:corr_SM} (c) shows that the simulation error can be achieved below $10^{-2}$ at an intermediate time scale.
We remark that according to Corollary~\ref{corollary:one_term}, the explicit decomposition for the example of Eq.~\ref{eq:multi_spin} is optimal.




\begin{figure*}[t]
\centering
\includegraphics[width =1\linewidth]{ 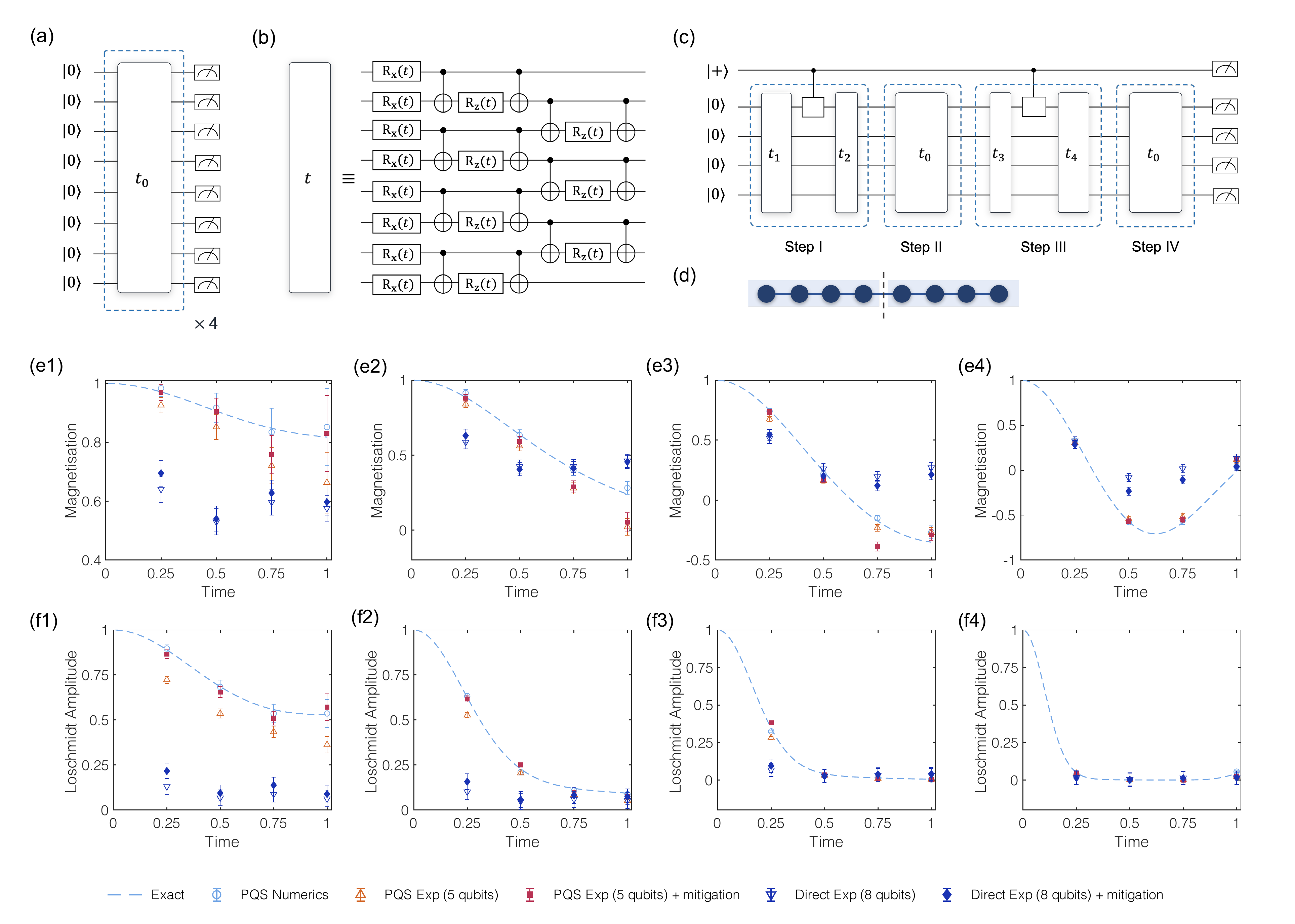}
 \caption{Implementing perturbative quantum simulation on the IBM quantum cloud. 
We consider the DQPT of 8 interacting spins with nearest-neightbour interactions. The initial state $\ket{0}^{\otimes 8}$ is evolved under the Hamiltonian $H = \sum_j \hat{\sigma}_j^z\hat{\sigma}_{j+1}^z + h \sum_j \hat{\sigma}_j^x$. (a)~Quantum circuit implementation for $8$-qubit simulation based on Trotterisation.  (b)~An example for the implementation of PQS to simulate $8$-qubit system with operations on $4+1$-qubits. (c) The circuit block for single-step evolution. (d) The topological geometry for the spin system and the partitioning strategy. 
(e1-e4) The magnetisation along the $z$ direction with $h=0.5,1,1.5,2.5$.
(f1-f4) The Loschmidt amplitude with $h=0.5,1,1.5,2.5$.
We compare the results of exact simulation (dashed line), PQS (numerics, circle), PQS using IBMQ (5 qubits in (c), upper triangle) and the direct simulation using IBMQ (8 qubits in (a), lower triangle). We also show the results using error mitigation for measurement both for PQS (solid square) and direct simulation (solid diamond).
We run $10^3$ samples for PQS and {collect 8192 counts} each samples.
 }
 \label{fig:main_IBM_SM}
\end{figure*}

\section{Experimental Implementation on the IBM quantum devices}
\label{appendix:IBM_implementation}

\subsection{Experimental results}
We implement our perturbative quantum simulation algorithm on the IBM quantum cloud. We consider the $8$-qubit one-dimensional Ising Hamiltonians
\begin{equation}
    H = \sum_{i=1}^7 \hat \sigma_i^z \hat \sigma_{i+1}^z + h\sum_{j=1}^8 \hat \sigma_j^x, 
\end{equation}
with nearest-neighbour interaction and a transverse magnetic field with different strength $h$. Starting from an eigenstate of $H$ with $h=0$, $\ket{\psi(0)}=\ket{0}^{\otimes 8}$ , we evolve the state from time $T = 0$ to $1$ and observe the dynamical quantum phase transition. At time $t\in[0,1]$, we focus on the expectation value of the spin operator $M_z = \sum_{j=1}^8 \hat \sigma_j^z/8$ and the Loschmidt amplitude $\mathcal G(t)=|\braket{\psi(0)|e^{-iHt}|\psi(0)}|^2$, which is equivalent to evaluating the state overlap between $\ket{\psi(0)}$ and $\ket{\psi(t)}=e^{-iHt}\ket{\psi(0)}$. 

To get the exact time-evolved state, we consider the Trotterisation product formula with four timesteps. Specifically, we have
\begin{equation}
\ket{\psi(t)} = \left( \prod_{j=1}^8e^{-ih\delta t\hat \sigma_j^x }\prod_{j=1}^7e^{-i\delta t\hat \sigma_j^z \hat \sigma_{j+1}^z}\right)^{t/\delta t},
\end{equation}
with $t \in \{0.25, 0.5, 0.75, 1\}$ and $\delta t = 0.25$. Each term $e^{-i\delta t\hat \sigma_j^z \hat \sigma_{j+1}^z} = CX_{j, j+1}\textrm{R}_z(2\delta t, j+1)CX_{j, j+1}$ could be realised with a single qubit rotation gate $\textrm{R}_z(2\delta t, j+1) = e^{-i\delta t\hat \sigma_{j+1}^z}$ sandwiched by two controlled-X gates $CX_{j, j+1}$ and each $e^{-ih\delta t\hat \sigma_j^x }=\textrm{R}_z(2h\delta t, j)$  is a single qubit gate. As shown in Fig.~\ref{fig:main_IBM_SM} (c), for each step, all the single qubits gates are implemented in parallel and the two qubit gates are realised with depth $d=2$. 
We note that the Trotter error is negligible {(much less than $10^{-2}$
)}. 

With our perturbative quantum simulation method, we only need to apply operations on $4+1$ qubits. We truncate the maximal number of decay events to four and we can see that the truncation error is small. {When a decay event} happens at time $t$, say $t=0.1$, we further divide  the Trotter step from 0 to 0.25 into two steps, i.e., $[0,0.1]$ and $[0.1,0.25]$. Then we insert a controlled-Z operation with the control qubit being the ancilla and the target being the first (last) qubit. As shown in Fig.~\ref{fig:main_IBM_SM} (b), we design the circuit in a similar way if we have multiple decay events. 
While the quantum circuit could be further optimised with fewer gates, it is sufficient for demonstrating the power of our PQS method.

\begin{figure*}[t]
\centering
\includegraphics[width =1\linewidth]{ 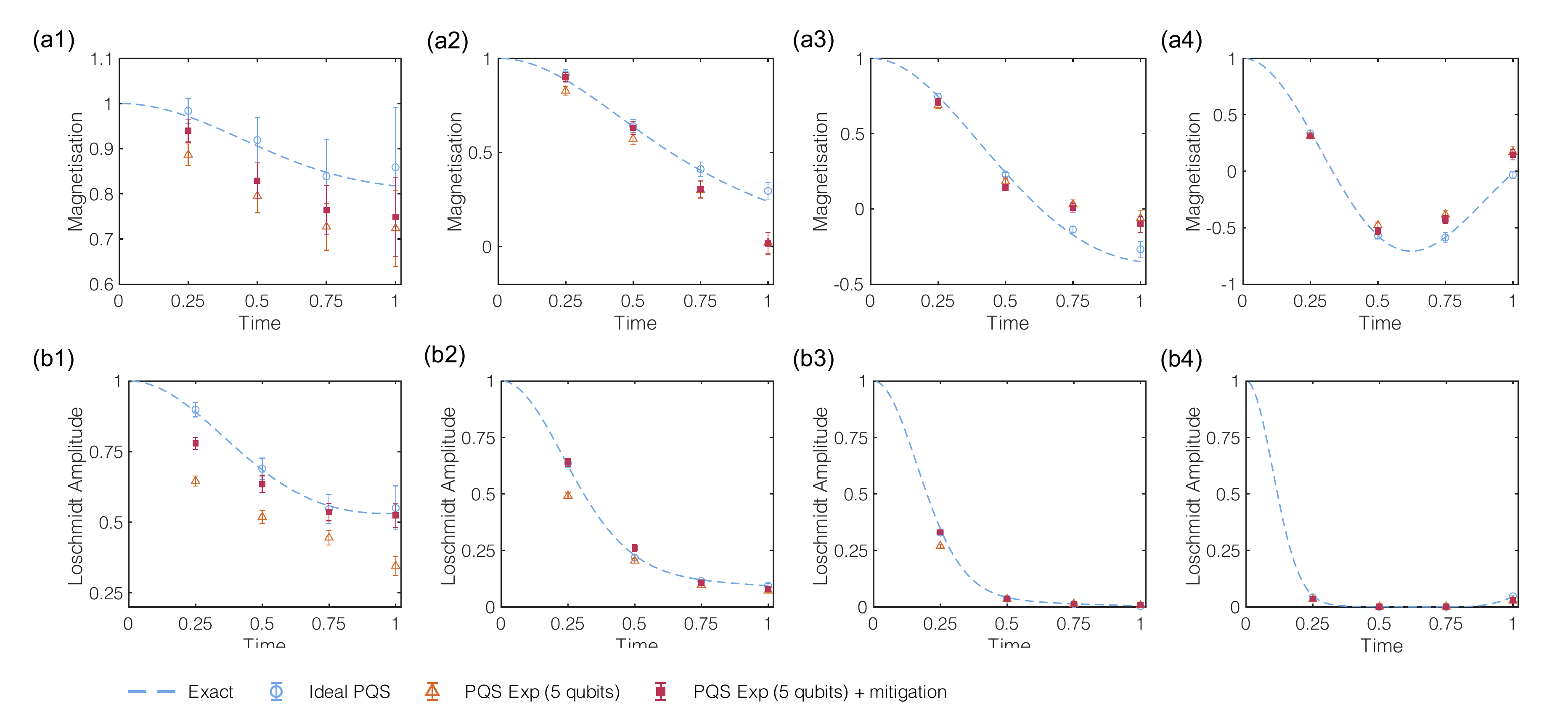}
 \caption{
Implementing perturbative quantum simulation on the IBM quantum cloud with fewer samples and less optimised quantum circuit. We run $10^3$ samples for PQS and {collect 128 counts} each samples. The quantum circuit for each time evolution block is different from Fig.~\ref{fig:main_IBM_SM} (c), where we apply the two-qubit gates sequentially with depth $7$. 
 (a1-a4) The magnetisation along the $z$ direction with $h=0.5,1,1.5,2.5$.
(b1-b4) The Loschmidt amplitude with $h=0.5,1,1.5,2.5$.
We compare the results of exact simulation (dashed line), PQS (numerics, circle), PQS using IBMQ (5 qubits in (c), upper triangle). We also show the results using error mitigation for measurement both for PQS (solid square).
 }
 \label{fig:IBM_PQS_SM2}
\end{figure*}

We implement the direct 8-qubit simulation and our  5-qubit PQS method using the IBM Q Experience. 
The processor employed to conduct the direct 8-qubit simulation is `ibmq\_16\_melbourne', which has 16 qubits with $T_2$ time ranging from $18\sim105\mu s$, CNOT gate error $3.3 \times 10^{-2}$ and read-out error $4.7\times 10^{-2}$. 
The processor employed to conduct the 5-qubit PQS method is `ibmq\_santiago', which has 5 qubits with $T_2$ time ranging from $66.9\sim143\mu s$, CNOT gate error $7.1 \times 10^{-3}$ and read-out error $1.7\times 10^{-2}$. 
The circuits are implemented through {Qiskit~\cite{aleksandrowicz2019qiskit}}, a python-based software development kit for working with OpenQASM and the IBM Q  processors. 
The IBM cloud admits multiple job submissions with each job consisting a maximal of 72 circuits, where each circuit is fixed and allows 8192 single-shot measurements.

We show the experimental results in Fig.~\ref{fig:main_IBM_SM}(e, f). We consider the external field along the $x$ direction with $h=0.5,1,1.5,2.5$, and compare the results of exact simulation (dashed line), PQS (numerics, circle), PQS using IBMQ (5 qubits in Fig.~\ref{fig:main_IBM_SM} (b), upper triangle) and the direct simulation using IBMQ (8 qubits in Fig.~\ref{fig:main_IBM_SM} (a), lower triangle). 
For each data point of the direct simulation, we run 16 circuits with $8192*16$ single-shot measurements. For the PQS method, we consider 1024 trajectories with each trajectory corresponding to a circuit measured $8192$ times. We note that even though the number of samples of the PQS method is much larger than the number of samples for the direction simulation method, the shot noise is much smaller than the error caused by device imperfections. We could also use a smaller number of samples ($128$ samples) for each trajectory of the PQS method, and we observe similar results as shown in Fig.~\ref{fig:IBM_PQS_SM2}. We note that the simulation results are not the same because we run a less optimised circuit of  the IBM processor at a different time.
We also apply error mitigation for measurements, which increases the simulation accuracy. The measurement error mitigation is implemented by running a set of circuits with different computational basis input states and computational basis measurement. Then we obtain the calibration matrix and apply its inverse to correct measurement errors~\cite{bravyi2021mitigating,sun2021mitigating}. 
From our simulation result, we observe that the PQS method outperforms the direct simulation.
This is because the five-qubit `ibmq\_santiago' processor has more accurate operations than the `ibmq\_16\_melbourne' processor. 
Since our method requires running on a small quantum computer with a relatively low circuit depth, it could be applied to benchmarking the large-scale quantum devices, which may have more errors.

\subsection{Analysis of noise robustness}

We briefly discuss why PQS is more robust to noise for simulating general systems. 
Suppose we aim to simulate the time evolution of the Ising Hamiltonian with $Ln$ qubits. Conventional approaches require $2Ln-2$ two qubit gates for each Trotter step, whereas PQS only needs $2n-2$ two qubit gates if we consider $L$ number of $n$-qubit clusters. 
Suppose the fidelity of each two qubit gate is $1-\varepsilon$, then the infidelities of the conventional method and PQS is $1-(1-\varepsilon)^{2Ln-2}$ and $1-(1-\varepsilon)^{2n-2}$, respectively. In the regime of small $\varepsilon$ and  $n\varepsilon$ and relatively large $n$, the state infidelity using PQS is approximately $n\varepsilon$, which is $L$ times smaller than $Ln\varepsilon$ using conventional quantum simulation methods. For example, when $L=2$, the infidelity will be half of that obtained from conventional quantum simulation. Therefore, our method not only allows the simulation of larger systems, it also effectively increases the simulation accuracy.


\end{document}